\documentclass[10pt,journal,compsoc]{IEEEtran}



%

%
\ifCLASSOPTIONcompsoc
  \usepackage[nocompress]{cite}
\else
  \usepackage{cite}
\fi
%
\usepackage{hyperref}

%
\ifCLASSINFOpdf
   \usepackage[pdftex]{graphicx}
\else
\fi
%
%

%
\usepackage{amsmath}
%
\interdisplaylinepenalty=2500
\usepackage{amsthm,amsfonts,amssymb}
\ifCLASSOPTIONcompsoc
  \usepackage[caption=false,font=footnotesize,labelfont=sf,textfont=sf]{subfig}
\else
  \usepackage[caption=false,font=footnotesize]{subfig}
\fi
\usepackage{caption}
%

%
\usepackage{fixltx2e}

\usepackage{stfloats}
\usepackage{setspace}
\usepackage{multirow}
\usepackage{bm}
\usepackage{capt-of}
\usepackage{color}
\usepackage[all]{xy}

\usepackage{enumitem}
\usepackage{bbold}
\usepackage{float}
\usepackage[dvipsnames]{xcolor}

\newcommand{\ew}[1]{#1}

\newtheorem{theorem}{Theorem}[section]
\newtheorem{corollary}{Corollary}[section]
\newtheorem{lemma}{Lemma}[section]
\newtheorem{definition}{Definition}
\newtheorem{property}{Property}[section]

\newtheorem{assumption}{Assumption}

\let\originaleqref\eqref
\renewcommand{\eqref}{Eq.~\originaleqref}
\newcommand{\req}[1]{(\ref{#1})}

\hyphenation{op-tical net-works semi-conduc-tor}

\begin{document}
%
\title{A General Sensitivity Analysis Approach for Demand Response Optimizations}
%
%
%
%

\author{Ding~Xiang,~\IEEEmembership{Student Member,~IEEE,}
        and~Ermin~Wei,~\IEEEmembership{Member,~IEEE,}
\IEEEcompsocitemizethanks{\IEEEcompsocthanksitem Ding Xiang and Ermin Wei are with the Department
of Electrical Engineering and Computer Science, Northwestern University, Evanston,
IL, USA, 60208. \protect 
E-mail: dingxiang2015@u.northwestern.edu,
ermin.wei@northwestern.edu
}
}

\IEEEtitleabstractindextext{%
\begin{abstract}
It is well-known that demand response can improve the system efficiency as well as lower consumers' (prosumers') electricity bills. However, it is not clear how we can either qualitatively identify the prosumer with the most impact potential or quantitatively estimate each prosumer's contribution to the total social welfare improvement when additional resource capacity/flexibility is introduced to the system with demand response, such as allowing net-selling behavior. In this work, we build upon existing literature on the electricity market, which consists of price-taking prosumers each with various appliances, an electric utility company and a social welfare optimizing distribution system operator, to design a general sensitivity analysis approach (GSAA) that can estimate the potential of each consumer's contribution to the social welfare when given more resource capacity. GSAA is based on existence of an efficient competitive equilibrium, which we establish in the paper. When prosumers' utility functions are quadratic, GSAA can give closed forms characterization on social welfare improvement based on duality analysis. Furthermore, we extend GSAA to a general convex settings, i.e., utility functions with strong convexity and Lipschitz continuous gradient. Even without knowing the specific forms the utility functions, we can derive upper and lower bounds of the social welfare improvement potential of each prosumer, when extra resource is introduced. For both settings, several applications and numerical examples are provided: including extending AC comfort zone, ability of EV to discharge and net selling. The estimation results show that GSAA can be used to decide how to allocate potentially limited market resources in the most impactful way.
\end{abstract}

\begin{IEEEkeywords}
Demand response, competitive equilibrium, optimization duality, sensitivity analysis, utility function, strong convexity, Lipschitz continuous gradient.
\end{IEEEkeywords}}

\maketitle

\IEEEdisplaynontitleabstractindextext

%
\IEEEpeerreviewmaketitle

\ifCLASSOPTIONcompsoc
\IEEEraisesectionheading{\section{Introduction}\label{sec:introduction}}
\else
\section{Introduction}
\label{sec:introduction}
\fi

%
%
%
%
\IEEEPARstart{A}{s} opposed to the traditional supply-follow-demand approach, \textit{demand response (DR)} provides consumers an opportunity to \ew{ balance the supply and demand} of electricity systems by responding to time-based pricing signals \cite{5628271, 7572161, 7587952, 6425944, 6161320, 7565746, 32577, 7926413}. Appropriately designed pricing signals can encourage consumers to move their flexible demand from the high-demand periods to relatively low-demand periods \cite{6039082, Bilevel, 7102747}, or also encourage \textit{prosumers}, i.e., \ew{who can both produce and consume} \cite{6480096}, to store the electricity into their energy storage or EVs at low pricing periods and use \ew{energy} in high pricing periods \cite{6948246}, or even sell/discharge it back to the system \ew{to obtain} economic reward \cite{6782392, TOUV2G}. By encouraging the consumers' or prosumers' (for simplicity, we call both of them prosumers in the rest of the paper) participation, DR can significantly lower the system load fluctuation, increase the system efficiency, reduce consumers' electricity bills and thus improve the total social welfare \cite{TwoMarket, 7131795}. 

Even though \ew{DR} has huge potential social benefits, the practical implementation still faces a lot of challenges \cite{OCONNELL2014686, NOLAN20151}. For hardware, the implementation could require upgrade or installation of lines, smart meters or other power devices. For software, the implementation may require well-designed algorithms to calculate the optimal pricing signal or control strategy that can manage flexible appliances. For \ew{the market environment}, it may require utility companies or aggregators to make advertisement or campaign persuading customers to join DR project. However, to make all of these conditions satisfied \ew{requires} huge amount of investment, whereas the budgets are usually limited \cite{Budget4, Budget1, Budget3}.
 
\ew{In this paper, we aim to address the question of how  we can use the limited budget to most efficiently expand the current resource capacity, such as the total amount of allowed net-selling.} One  \ew{solution is to identify and} prioritize prosumers with highest \ew{potential impact on social welfare}, and invest \ew{in }them first in order to maximize the social welfare improvement. There are some related studies on prosumers' marginal contribution in DR. Most of them focus on the flexibility or price-responsiveness of a certain type of \ew{appliances} \cite{FAGHIH2013472, 6601729, 8309117}. Some of them study the DR resources' capacity value based on simulations or empirical analysis \cite{6939174, 8085972}. However, to the best of our knowledge, there is no related studies providing a general tool quantifying or estimating different prosumers' \ew{potential impact on social welfare} under a dynamic pricing environment\ew{, which  is the problem the paper provides solutions to}. A key to this problem is to find \ew{connections between prosumers' utility function/their usage behaviors and the social welfare}.

To find \ew{this connection,} we build	 upon the existing literature on the dynamic pricing DR market framework \cite{6039082, 6161320, 8309117} \ew{to model a market with} price-taking prosumers with various appliances, a price-taking electric utility company and a \ew{price-setting social welfare maximizing} distribution system operator (DSO).  In the market, each prosumer maximizes their own payoff function with considering flexible appliances' energy consumption constraints; the utility company \ew{maximizes} its profit, and the DSO sets appropriate dynamic pricing signals to maximize the social welfare and clears the market. We first \ew{model all different types of prosumers' constraints by a general linear constraint}. Then we show there exists an efficient competitive equilibrium and an equivalence relationship between DSO's problem and the prosumers' and the utility company's \ew{problems}. Based on the general linear constraint and the equivalence relationship, we propose a general sensitivity analysis approach (GSAA) to quantify the effect of prosumers' contribution on social welfare \ew{when more resource is injected into the system, modeled by enlarging their constraint sets}. 

Since each prosumer's usage behavior is highly related to their own preference, which is commonly modeled by a (net) utility function \cite{6480096, 6039082, 8309117, 7778740, 6266724}, the characteristics of different types of utility functions could affect the connection between prosumers' usage behaviors and the social welfare, thus could also change the way we quantify or estimate each prosumer's contribution on social welfare. In this work, we consider two types of (net) utility function settings, quadratic function, which is commonly used in many DR related papers \cite{6039082, 8309117, 7778740, 6266724} and general convex function (i.e., strongly convex and with Lipschitz continuous gradient), which is more \ew{general and captures a variety of functions.}

As the main contributions of this paper, we \ew{relate the potential social welfare improvement and individual prosumer by duality theory and use the proposed GSAA to characterize  the shadow prices associated with more resources in the system, i.e., a larger constraint set. When the net utility functions have quadratic forms, we derive the shadow prices explicitly and when the utility functions have general convex form, we establish the bounds of the shadow prices.} Under different utility settings, several applications of GSAA are provided, including enlarging a prosumer's AC comfort zone size, allowing a prosumer's EV to discharge and allowing a prosumer to net sell. \ew{The} estimation can be \ew{used to identify the  prosumers with the most potential impact on social welfare}. Thus, when the budgets for implementing DR are limited, we could allocate them to \ew{those prosumers accordingly}. We also provide a case study in the general convex settings to compare 2 prosumers' contribution potentials for allowing them to discharge EVs. \ew{The two prosumers are identical,} except that one's net utility function is the other's scaled by a constant. We would expect the one with \ew{a} larger utility function \ew{to} have a higher contribution potential, but \ew{to our surprise,} the bounds of the shadow price given by GSAA \ew{show otherwise under certain conditions}. \ew{Lastly}, several numerical studies about the usage of GSAA in net-selling and EV discharging are provided.

The rest of the paper is organized as follows. In Section 2, we introduce the system model and the general linear constraint. In Section 3, we prove the existence of the efficient market equilibrium and the equivalence between DSO problem and \ew{prosumers' and the utility companies'} problems, which form the foundation of GSAA. Section 4 studies the usage of GSAA under quadratic net utility function settings, where the closed-form shadow price are calculated and Section 5 talks about the usage of GSAA under the general convex settings in which the bounds of the shadow price are derived. In Sections 6, we provide numerical studies to illustrate the usage of GSAA.

\noindent\textbf{Notations:} Throughout the paper, we use bold font to represent vectors variables,  and superscript $H$ for transpose. Notation $\left( x_{i}, \forall  i\right)$ denotes vector $\left[ x_{1}, x_{2}, ..., x_{n}\right]^{T} $. For column vectors $\textbf{x}_1, \textbf{x}_2, ..., \textbf{x}_n$, we use notation $\left( \textbf{x}_{i}, \forall  i\right)$ to denote a combined column vector $\left[ \textbf{x}_{1}^T, \textbf{x}_{2}^T, ..., \textbf{x}_{n}^T\right]^{T}$. For a vector $\textbf{x}=\left[x_1, x_2, ..., x_n\right]^T$, we denote the $j^{th}$ element of $\textbf{x}$ as $[\textbf{x}]_j$.The operator $[\cdot]^+$ is given by $[a]^+ = \max\{a,0\}$. We use $\|\cdot\|$ to denote that Euclidean norm or $L_2$ norm. For a matrix $A$, we denote its element on the $i^{th}$ row and the $j^{th}$ column as $\left[A\right]_{ij}$. Inequalities, $<$, $\leq$, $>$, $\geq$ are used in an element-wise sense. The notation $diag\{A\}$ denotes a column vector consists of the diagonal elements of matrix A. We use notation $diag\left(A, B\right)$ to denote matrix
$\begin{bmatrix}
    A       &  0 \\
    0       &  B
\end{bmatrix}$.
For a function $f(x_1, x_2, ..., x_n): \mathbb{R}^{n}\rightarrow \mathbb{R}$, $\nabla f$ is the gradient of $f$, and $\nabla f=(\frac{\partial f}{\partial x_1}, \frac{\partial f}{\partial x_2}, ..., \frac{\partial f}{\partial x_n})^T$.

\section{System Model}\label{sec:Systemmodel}

In this section, we introduce our general market model consisting of three participants: prosumers, an electric utility company and a DSO. We use $\mathcal{N}=\{0,1,2,...,N\}$ to denote the set of all nodes in the power network. Without loss of generality, we let node 0 represent the electric utility company and the nodes in $\mathcal{N}^{+} = \{1,2,...,N\}$ represent individual prosumers. In rest of the paper, the words `prosumer node' and `prosumer' are used interchangeably. Similarly, we do not distinguish between the words `electric utility company node', `electric utility company'  or `utility company' unless otherwise noted. The DSO, as a social planner and a system operator, maximizes the social welfare while clearing the market and guarantees system operation reliability. The market we consider here is a retail electricity market with no uncertainty and $H$ periods, indexed by $t \in \mathcal{H}=\{1,2,...,H\}$. We assume in this market both prosumers and the electric utility company are price-taking. We call the market $H$-period DR market.


For each prosumer $i \in \mathcal{N}^{+}$, let $\mathcal{A}_{i}$ denote the set of household appliances, such as lights, air conditioners (ACs), washers, energy storage and electric vehicles (EVs). For each appliance $a \in \mathcal{A}_{i}$, we define its \textit{power consumption scheduling vector} as
$\textbf{q}_{i,a}^D$ $\overset{\Delta}{=}$ $\left(q_{i,a}^D\left(t\right), \forall  t\right)$ in $\mathbb{R}^H$ where $q_{i,a}\left(t\right) \in \mathbb{R}$ represents the power consumption of prosumer $i$'s appliance $a$ at time $t$, and superscript $D$ indicates the demand side (i.e., the prosumers). We can concatenate these vectors to define for each prosumer $i$, $\textbf{q}_i^D$ $\overset{\Delta}{=}$ $\left(\textbf{q}_{i,a}^D, \forall  a\right)$ in $\mathbb{R}^{H|\mathcal{A}_{i}|}$ and the aggregate vector for all prosumers,
$\textbf{q}^D$ $\overset{\Delta}{=}$ $\left(\textbf{q}_i^D, \forall  i\right)$ in $\mathbb{R}^{H\sum_{i \in \mathcal{N}^+}|\mathcal{A}_{i}|}$. For the utility company (superscript $S$ representing the supply side), we define its \textit{power supply scheduling vector} as $\textbf{q}^S$, where $\textbf{q}^S\overset{\Delta}{=}$ $\left(q^S\left(t\right), \forall  t\right)$ in $\mathbb{R}^H$. At last, we define the \textit{market clearing electricity price vector}, which is announced by DSO, as $\textbf{p}$ $\overset{\Delta}{=}$ $\left(p\left(t\right), \forall  t\right)$ in $\mathbb{R}^{H}$. 

\subsection{ Prosumer Model} 
In our model, prosumers (i.e., users who can both buy and sell back electricity) optimize their power schedules to maximize their own payoffs. Here the payoff function is defined by a quasilinear function $U_i\left(\textbf{q}_i^D\right)-C_i\left(\textbf{q}_i^D\right)- \sum_{a  \in\mathcal{A}_{i}}\textbf{p}^{T}\textbf{q}_{i,a}^D$, which includes a utility function $U_i: \mathbb{R}^{H |\mathcal{A}_{i}|} \rightarrow \mathbb{R}$ that reflects prosumer $i$'s preference on energy consumption for appliances, a cost function $C_i: \mathbb{R}^{H |\mathcal{A}_{i}|} \rightarrow \mathbb{R}$ representing amortized battery life loss for EVs and energy storage (for other non-battery appliances, $C_i = 0$), and payments $\sum_{a  \in\mathcal{A}_{i}}\textbf{p}^{T}\textbf{q}_{i,a}^D$. The combination terms $U_i\left(\textbf{q}_i^D\right)-C_i\left(\textbf{q}_i^D\right)$ are also called \textit{net utility function}. With these components, the prosumers' problem (P) is formulated as follows. For each prosumer $i \in \mathcal{N}^{+}$ 
\begin{align} \label{opt:Prosumer}
\max_{\textbf{q}_i^D}  \quad
&U_i\left(\textbf{q}_i^D\right)-C_i\left(\textbf{q}_i^D\right)- \sum_{a  \in\mathcal{A}_{i}}\textbf{p}^{T}\textbf{q}_{i,a}^D,\tag{P1} \\
\text{s.t.}  \quad 
&A_{i}\textbf{q}_i^D \leq \textbf{h}_{i},\tag{P2}\label{pro m2} 
\end{align}
We adopt the following simplifying assumption:
\begin{assumption}\label{separable}
Prosumer $i$'s utility function $U_i(\textbf{q}_i^D)$ and cost function $C_i(\textbf{q}_i^D)$ are separable over appliances and time periods.
\end{assumption} 
Having the above assumption, prosumer $i$'s utility function and cost function can be written as 
\begin{gather}
U_i(\textbf{q}_i^D)=\sum_{a \in \mathcal{A}_{i}}\sum_{t \in \mathcal{H}}U_{i,a}^t(q_{i,a}^D\left(t\right)),\\
C_i(\textbf{q}_i^D)=\sum_{a \in \mathcal{A}_{i}}\sum_{t \in \mathcal{H}}C_{i,a}^t(q_{i,a}^D\left(t\right)),
\end{gather}
where $U_{i,a}^t: \mathbb{R} \rightarrow \mathbb{R}$ is the utility function for appliance $a$ at time $t$ and $C_{i,a}^t: \mathbb{R} \rightarrow \mathbb{R}$ is the cost function for appliance $a$ at time $t$, and we adopt the following standard assumptions for functions $U_{i,a}^t$ and $C_{i,a}^t$:
\begin{assumption}\label{convexAssumption}
Functions $-U_{i,a}^t(q_{i,a}^D\left(t\right))$ and $C_{i,a}^t(q_{i,a}^D\left(t\right))$ are convex non-decreasing and differentiable in their respective effective domains. The gradient of the net utility function at 0 is element-wise positive, i.e. $\nabla U_{i}\left(0\right) - \nabla C_{i}\left(0\right) > 0$.
\end{assumption}

We note that, in the prosumer model (P), there is no sign restriction on $\textbf{q}_{i,a}^D$, meaning that the prosumers can either consume or produce/sell electricity (hence the name prosumer).

The set of linear inequalities $A_{i}\textbf{q}_i^D \leq \textbf{h}_{i}$ includes representations of different power consumption requirements for household appliances of prosumer $i$. Particularly, for all the appliances, we divide them into two types based on their flexibility: flexible appliances and inflexible appliances.

For flexible appliances such as ACs, washers, dryers, dishwashers, fridges, EVs/PHEVs and energy storage, whose power consumption may change depending on the price, their constraints \req{pro m2}  can be written as the following forms based on \cite{6039082}, \cite{6161320} and \cite{6840288}.
\begin{align} 
 q_{i,a}^D\left(t\right)\leq \overline{q}_{i,a}^D\left(t\right), &\ q_{i,a}^D\left(t\right) \geq \underline{q}_{i,a}^D\left(t\right) \quad \forall t,  \label{opt:PRO q}\\
\sum_{t \in \mathcal{H}_{i,a}}\alpha_{i,a}\left(t\right)q_{i,a}^D\left(t\right)\leq \overline{Q}_{i,a}^D,& \sum_{t \in \mathcal{H}_{i,a}}\alpha_{i,a}\left(t\right)q_{i,a}^D\left(t\right) \geq \underline{Q}_{i,a}^D, \label{opt:PRO e}
\end{align}
where constraint \req{opt:PRO q} represents an available range of the power consumption for the flexible appliance at time period $t$. For  AC, it means the power should between 0 and the rated power; for EV/PHEV and energy storage, it means the power should be between maximal discharging power and maximal charging power. Note that the power lower bound $\underline{q}_{i,a}^D\left(t\right)$ could be negative for EVs, PHEVs and energy storage, which reflects the definition of prosumer. Constraint \req{opt:PRO e} indicates an acceptable energy consumption range of the flexible appliance during the pre-specified periods of time $\mathcal{H}_{i,a} \subset \mathcal{H}$.\footnote{If an appliance has multiple flexible periods in $\mathcal{H}$, then the appliance has multiple constraints in form \req{opt:PRO e}. For the detailed models of appliances, interested readers are referred to \cite{6039082} and \cite{6840288}.} We call these periods in total a \textit{flexible period}, during which the appliance can adjust its power schedule as long as it can finish certain task within certain energy usage limit. For example, AC consumes enough (but not too much) energy to maintain the room temperature \cite{6039082, 6840288} in a comfortable level during the hottest time of day in summer, a washer finishes laundry by a specified time set by its user, an EV owner charges/discharges his/her car at home overnight for a trip the next day, or energy storage keeps its battery level within certain range to ensure a longer battery life. The coefficient $\alpha_{i,a}\left(t\right)$ in constraint \req{opt:PRO e} is used to capture efficiency (in a broad sense) of different types of flexible appliances. For most appliances such as washers, EVs and storage, $\alpha_{i,a}\left(t\right)$ is a power loss factor in an energy transmission process. In an ideal case, $\alpha_{i,a}\left(t\right)$ is set to constant 1, meaning there's no loss in the process. For AC, coefficient $\alpha_{i,a}\left(t\right)$ has a more special meaning, which describes time dynamic relationships between room temperature and AC power schedules.

For inflexible appliances such as lights, routers, monitor cameras, computers, TVs etc., their power consumption constraints can be written as the form of inequalities \req{opt:PRO q} (without \req{opt:PRO e}), i.e., the power consumption in each period is in a certain range.
 
In addition to representing the appliance power consumption requirements, the constraint $A_{i}\textbf{q}_i^D \leq \textbf{h}_{i}$ also describes prosumer $i$'s type of usage behavior, which can be divided into three types: \textit{simple buyers}, \textit{net buyers} and \textit{net sellers}. The simple buyers are users who only consume energy, but do not discharge energy back to grid (for their own usage or selling to others). Their DR constraints can be described by \req{opt:PRO q}-\req{opt:PRO e} with $\underline{q}_{i,a}^D\left(t\right) \geq 0$, which implies that their $q_{i,a}^{D}\left(t\right)$ is non-negative. The net buyers are users who can consume energy, store energy and discharge for their own usage but cannot sell to others. Their DR constraints can be described by  \req{opt:PRO q}-\req{opt:PRO e} and the following constraint:
\begin{gather}
\sum_{a \in\mathcal{A}_{i}} q_{i,a}^{D}\left(t\right)\geq 0,\  \forall t\label{NetSeller}.
\end{gather}
Lastly, the net sellers are those who can consume, store and sell energy to others. Their DR constraints can be covered by \req{opt:PRO q} and \req{opt:PRO e} only. To avoid trivial cases, we make the following assumption:
\begin{assumption}\label{Assumption3}
In the problem (P), the feasible set is nonempty. 
\end{assumption}
\subsection{Electric Utility Company Model} 
The electric utility company maximizes its profit (equal to revenue minus cost) by optimizing its power supply scheduling $\textbf{q}^S$. Hence, the electric utility company's problem \req{U_model} is formulated by
\begin{align}
\max_{\textbf{q}^S}  \quad
&\textbf{p}^{T}\textbf{q}^S - C_0\left(\textbf{q}^S\right),\label{U_model}\tag{U}
\end{align}
where $C_0:  \mathbb{R}^{H} \rightarrow \mathbb{R}$ is the cost function and assumed to be convex and continuously differentiable in its effective domain. To avoid trivial cases, we make the following assumption:
\begin{assumption}\label{assumptionU}
In problem (U), there exists at least one global optimal solution.
\end{assumption}

\subsection{Distribution System Operator Model}
Distribution system operator is a benevolent system planner, which maximizes the social welfare as well as keeps the supply-demand balance. The problem DSO solves, referred to as (D), is given below.
\begin{align}
\max_{ \textbf{q}^D, \textbf{q}^S}  \quad
&U\left( \textbf{q}^D\right)-C\left(\textbf{q}^D\right)-C_0\left(\textbf{q}^S\right),\label{DSO obj}\tag{D1}\\
\text{s.t.}  \quad 
&A\textbf{q}^D \leq \textbf{h},\tag{D2}\label{opt:DSO d}\\
&\textbf{q}^S=\sum_{i \in \mathcal{N}^+}\sum_{a \in \mathcal{A}_{i}}\textbf{q}_{i,a}^D,\tag{D3}\label{opt:DSO bl}
\end{align}
where the objective function in \req{DSO obj} is the social welfare. Function $U: \mathbb{R}^{H\sum_{i \in \mathcal{N}^+}|\mathcal{A}_{i}|} \rightarrow \mathbb{R}$ is prosumers' total utility function, defined by $U(\textbf{q}^D)=\sum_{i \in \mathcal{N}^+}U_i(\textbf{q}_i^D)$. Similarly, $C: \mathbb{R}^{H\sum_{i \in \mathcal{N}^+}|\mathcal{A}_{i}|} \rightarrow \mathbb{R}$ is prosumers' total cost function, defined by $C(\textbf{q}^D)=\sum_{i \in \mathcal{N}^+}C_i(\textbf{q}_i^D)$. We know that $U(\textbf{q}^D)$ and $C(\textbf{q}^D)$ are concave and convex respectively due to Assumption \ref{convexAssumption}.

The constraint $A\textbf{q}^D \leq \textbf{h}$ in \req{opt:DSO d} is an aggregation of the constraint (P2), $A_{i}\textbf{q}_i^D \leq \textbf{h}_{i}$, for all prosumers. Specifically, $A = diag\left(A_{1}, A_{2}, ..., A_{N}\right)$, $\textbf{q}^D = \left(\textbf{q}_i^D, \forall i \in \mathcal{N}^+\right)$ and $\textbf{h} = \left(\textbf{h}_i^D, \forall i \in \mathcal{N}^+\right)$.

We call the $j^{th}$ constraint for prosumer $i$ in \req{opt:DSO d} (i.e., the $j^{th}$ row of $A_{i}\textbf{q}_i^D \leq \textbf{h}_{i}$) as \textit{the $j^{th}$ general linear constraint for prosumer $i$}, which can be written as the following,

\begin{equation}
   \sum_{a \in\mathcal{A}_{i}^{\left(j\right)}} \sum_{t \in \mathcal{H}_{i,a}^{\left(j\right)}} {\alpha_{i,a}^{\left(j\right)}}\left(t\right)q_{i,a}^{D}\left(t\right)\leq [\textbf{h}_i]_j,\label{SuperConstraint}
\end{equation}
where $\mathcal{A}_{i}^{\left(j\right)} \subset \mathcal{A}_{i}$ and $\mathcal{H}_{i,a}^{\left(j\right)} \subset \mathcal{H}$. We note that constraint \req{SuperConstraint} covers all individual constraints (i.e., \req{opt:PRO q}-\req{NetSeller}) appeared in \req{opt:DSO d} and gives each individual constraint a unique label. When \eqref{SuperConstraint} represents \req{opt:PRO q} or \req{opt:PRO e}, the value $[\textbf{h}_i]_j$ specifies available ranges of power consumption for prosumer $i$'s appliances. When \eqref{SuperConstraint} represents \req{NetSeller}, $[\textbf{h}_i]_j$ restricts prosumer $i$'s net-selling amounts. In either case, $[\textbf{h}_i]_j$ reflects prosumer $i$'s capacity of resources in DR. For generality, we call each $[\textbf{h}_i]_j$ as a \textit{Resource Capacity (RC)} for prosumer $i$.

The constraint \req{opt:DSO bl} is the supply and demand balance equation. We refer to this as the \textit{market clearing condition}. We call the dual variable associated with the constraint \req{opt:DSO bl} as the \textit{market clearing price}.

\section{Efficient Market Equilibrium}\label{sec:marketequi}
We first introduce basic definitions used in this section. 
\begin{definition}[Competitive Equilibrium]
	A competitive equilibrium is defined by a demand-supply allocation and a price vector in a market such that all agents' (prosumers' and the utility company's) choices are individually optimal (for problems (P) and (U)) given the price, and the market clears.
\end{definition}
\begin{definition}[Efficient Allocation]
	An allocation is efficient if it solves optimization problem (D).
\end{definition}
The main goal of this section is to show that in the proposed $H$-period DR market, there exists a competitive equilibrium with an efficient allocation, and the competitive equilibrium set is equivalent\footnote{We say set $A$ is equivalent to set $B$, if there exists a 1-1 mapping of set A onto set B.} to the set of efficient allocations and the corresponding market clearing prices. In particular, it is described by the following theorem, whose statement partially appeared in our previous conference paper \cite{DingXiang}. Here we provide the complete version and its detailed proof.

\begin{theorem}\label{Equilibrium} In the proposed $H$-period DR market, (1) there exists an efficient demand-supply allocation $\left(\textbf{q}^D, \textbf{q}^S\right)$ and a price vector $\textbf{p}$ such that $\left(\textbf{q}^D, \textbf{q}^S, \textbf{p}\right)$ forms a competitive equilibrium;
(2) set $V_{1} = \big\{\left(\textbf{q}^D, \textbf{q}^S, \textbf{p}\right):$ $\left(\textbf{q}^D, \textbf{q}^S, \textbf{p}\right)$ is a competitive equilibrium$\big\}$, is equivalent to set $V_{2} = \big\{\left(\textbf{q}^D, \textbf{q}^S, \textbf{p}\right):$ $\left(\textbf{q}^D, \textbf{q}^S\right)$ is an efficient allocation and $\textbf{p}$ is its market clearing price$\big\}$.
\end{theorem}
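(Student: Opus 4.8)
The plan is to treat this as the two fundamental welfare theorems for the convex economy defined by problems (P), (U) and (D), with every linkage carried out through Lagrangian duality and the KKT conditions. The first observation is that (D) is a convex program: by Assumption~\ref{convexAssumption} the objective $U(\textbf{q}^D)-C(\textbf{q}^D)-C_0(\textbf{q}^S)$ is concave, the constraint \req{opt:DSO d} is polyhedral, and the balance constraint \req{opt:DSO bl} is affine. Because both constraints are affine, the constraint qualification reduces to mere feasibility (Assumption~\ref{Assumption3}), so strong duality holds and the KKT conditions are both necessary and sufficient for optimality of (D). I would first argue that (D) attains its maximum: the box constraints \req{opt:PRO q} confine every $q_{i,a}^D(t)$ to a bounded interval, so the feasible $\textbf{q}^D$ lie in a compact set, and \req{opt:DSO bl} then pins $\textbf{q}^S$ down as a continuous (hence bounded) image; continuity of the objective and Weierstrass give an efficient allocation $(\textbf{q}^D,\textbf{q}^S)$ together with multipliers $\boldsymbol{\lambda}\ge 0$ for \req{opt:DSO d} and a multiplier $\textbf{p}$ for \req{opt:DSO bl}.

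The core of the argument is to show that the KKT system of (D) decouples exactly into the optimality systems of the individual prosumer problems \req{opt:Prosumer}--\req{pro m2} and the utility company problem \req{U_model} at the common price $\textbf{p}$. Forming the Lagrangian of (D) and differentiating, the stationarity condition in $\textbf{q}^S$ reads $\textbf{p}=\nabla C_0(\textbf{q}^S)$, which (since $\textbf{p}^T\textbf{q}^S-C_0(\textbf{q}^S)$ is concave) is the sufficient optimality condition for (U) at price $\textbf{p}$; Assumption~\ref{assumptionU} guarantees this optimum is genuinely attained. The stationarity condition in each $\textbf{q}_{i,a}^D$, combined with $\boldsymbol{\lambda}_i\ge 0$ and complementary slackness against $A_i\textbf{q}_i^D\le\textbf{h}_i$, is exactly the KKT system of prosumer $i$'s problem at the same price $\textbf{p}$, because the payment term $\sum_a\textbf{p}^T\textbf{q}_{i,a}^D$ in (P) contributes the identical linear term that the balance multiplier $\textbf{p}$ contributes in (D). Since (P) is convex, these KKT conditions are sufficient, so each $\textbf{q}_i^D$ solves (P) given $\textbf{p}$; together with market clearing being literally constraint \req{opt:DSO bl}, this shows $(\textbf{q}^D,\textbf{q}^S,\textbf{p})$ is a competitive equilibrium, proving part (1).

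For part (2) I would prove the stronger statement $V_1=V_2$, so that the identity map is the required one-to-one onto correspondence. The inclusion $V_2\subseteq V_1$ is precisely the decoupling argument above: any efficient allocation, paired with the dual variable of \req{opt:DSO bl} as its market clearing price, satisfies the KKT systems of (P) and (U) and is therefore a competitive equilibrium. For the reverse inclusion $V_1\subseteq V_2$, start from a competitive equilibrium $(\textbf{q}^D,\textbf{q}^S,\textbf{p})$: individual optimality of each prosumer in (P) and of the company in (U) yields their respective first-order/KKT conditions at price $\textbf{p}$, and market clearing supplies \req{opt:DSO bl}. Stacking the prosumer multipliers into $\boldsymbol{\lambda}$ and identifying $\textbf{p}$ as the multiplier of \req{opt:DSO bl} reassembles exactly the KKT system of (D); since (D) is convex, satisfaction of KKT is sufficient for global optimality, so $(\textbf{q}^D,\textbf{q}^S)$ is efficient and $\textbf{p}$ is its market clearing price, i.e.\ the triple lies in $V_2$.

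The step I expect to be the main obstacle is the careful verification that the multiplier $\textbf{p}$ of \req{opt:DSO bl} plays its two roles consistently---as the shadow price pinned down by (D)'s stationarity in $\textbf{q}^S$, and as the exogenous price entering the objectives of (P) and (U)---and that the per-prosumer multipliers in (D) and in each (P) match row by row through the block-diagonal structure $A=diag(A_1,\dots,A_N)$. The bookkeeping that the separable net utility (Assumption~\ref{separable}) lets the aggregate stationarity condition split appliance-by-appliance and time-by-time, and that no sign restriction on $\textbf{q}_{i,a}^D$ (so that the constraints \req{opt:PRO e}--\req{NetSeller} enter only through $\boldsymbol{\lambda}$) is lost in this splitting, is where the argument must be handled with most care; the coercivity/attainment claim underpinning existence in part (1) is the secondary delicate point.
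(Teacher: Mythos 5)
Your proposal is correct and follows essentially the same route as the paper: establish attainment via compactness and Weierstrass, obtain multipliers from strong duality under the affine constraints, observe that the KKT system of (D) decouples into the KKT systems of (P) and the optimality condition of (U) at the price given by the balance-constraint multiplier $\textbf{p}=\nabla C_0(\textbf{q}^{S*})$, and prove part (2) as the two set inclusions $V_1\subseteq V_2$ and $V_2\subseteq V_1$ by running the same KKT matching in both directions. No substantive differences from the paper's argument.
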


\begin{proof}
(1) We prove Theorem \ref{Equilibrium} (1) by construction. First, based on Assumptions \ref{separable}-\ref{Assumption3}, we know problem (D)'s objective function is continuous and concave, and its feasible set given by \req{opt:DSO d} and \req{opt:DSO bl} is convex, nonempty and compact. By Weierstrass extreme value theorem, there exists a bounded optimal solution $\left(\textbf{q}^{D*}, \textbf{q}^{S*}\right)$. Then the optimal objective value of (D) is finite. Also due to linear forms of \req{opt:PRO q}-\req{NetSeller} and \req{opt:DSO bl}, we know problem (D) has affine constraints. Hence, by Proposition 5.3.1 on existence of primal and dual optimal solutions in \cite{Bertsekas}, the dual optimal solution for problem (D) exists. Let $\left(\boldsymbol{\lambda}^{*}, \boldsymbol{\rho}^{*}\right)$ denote a dual optimal solution pair to problem (D), where vectors $\boldsymbol{\lambda}^{*}$ and $\boldsymbol{\rho}^{*}$ are Lagrange multipliers for the constraints \req{opt:DSO d} and \req{opt:DSO bl} respectively. Then the following \textit{Karush-Kuhn-Tucker} (KKT) conditions hold:
\begin{align}
&A_{i}\textbf{q}_i^{D*}-\textbf{h}_{i} \leq 0,\ \forall i \in \mathcal{N}^{+}, \label{PF1}\tag{PF1}
\\
&\textbf{q}^{S*}=\sum_{i \in \mathcal{N}^+}\sum_{a \in \mathcal{A}_{i}}\textbf{q}_{i,a}^{D*},\label{PF2}\tag{PF2}
\\
&\boldsymbol{\lambda}_{i}^{*} \geq 0,\ \forall i \in \mathcal{N}^{+},\label{DF}\tag{DF}
\\
&diag\left \{\boldsymbol{\lambda}_{i}^{*}\left(A_{i}\textbf{q}_i^{D*}-\textbf{h}_{i}\right)^{T}\right \}=0,\ \forall i \in \mathcal{N}^{+},\label{CS}\tag{CS}
\\
&- \frac{\partial U}{\partial \textbf{q}^D}\left(\textbf{q}^{D*}\right)+\frac{\partial C}{\partial \textbf{q}^D}\left(\textbf{q}^{D*}\right)+A^{T}\boldsymbol{\lambda}^{*}+ \boldsymbol{\nu}^{*}=0,\label{FOC1}\tag{FOC1}
\\
&\frac{\partial C_0}{\partial \textbf{q}^S}\left(\textbf{q}^{S*}\right) - \boldsymbol{\rho}^{*} =0,\label{FOC2}\tag{FOC2}
\end{align}
where vector $\boldsymbol{\lambda}_{i}^{*}$ is the Lagrange multiplier that associated with prosumer $i$'s constraints in \req{opt:DSO d}, thus we have $\boldsymbol{\lambda}^{*} = \left(\boldsymbol{\lambda}_{i}^{*},\ \forall i \in \mathcal{N}^{+}\right)$. Vector $\boldsymbol{\nu}^{*}$ is given by $\boldsymbol{\nu}^{*} =\left[ \boldsymbol{\rho}^{*T}, \boldsymbol{\rho}^{*T}, ..., \boldsymbol{\rho}^{*T}\right]^T$ with $\boldsymbol{\rho}^{*T}$ repeated $\sum_{i \in \mathcal{N}^{+}}|\mathcal{A}_i|$ times.
In the above KKT conditions, constraints \req{PF1} and \req{PF2} describe primal feasibility, inequalities in \req{DF} reflect dual feasibility, equations in \req{CS} are complimentary slackness conditions, and equations \req{FOC1} and \req{FOC2} are first order optimality conditions.
When DSO sets the market price to 
\begin{align}
\textbf{p}=\boldsymbol{\rho^{*}}=\frac{\partial C_0}{\partial \textbf{q}^S}\left(\textbf{q}^{S*}\right),\nonumber
\end{align}
conditions \req{PF1}, \req{DF}, \req{CS} and \req{FOC1} guarantee $\textbf{q}_i^{D*}$ (selected from $\textbf{q}^{D*}$) and  $\boldsymbol{\lambda}_{i}^{*}$ satisfy problem (P)'s KKT condition, as shown below
\begin{align}
&A_{i}\textbf{q}_i^{D*}-\textbf{h}_{i} \leq 0,\ \boldsymbol{\lambda}_{i}^{*} \geq 0, \ diag\left \{\boldsymbol{\lambda}_{i}^{*}\left(A_{i}\textbf{q}_i^{D*}-\textbf{h}_{i}\right)^{T}\right \}=0,\nonumber
\\
&- \frac{\partial U_i}{\partial \textbf{q}_i^D}\left(\textbf{q}_i^{D*}\right)+\frac{\partial C_i}{\partial \textbf{q}_i^D}\left(\textbf{q}_i^{D*}\right)+A_{i}^{T}\boldsymbol{\lambda}_{i}^{*}+ \textbf{p}_i =0,\nonumber
\end{align}
where $\textbf{p}_i =\left[ \textbf{p}^T, \textbf{p}^T, ..., \textbf{p}^T\right]^T$ with $\textbf{p}^T$ repeated $|\mathcal{A}_{i}|$ times, and because the convex problem (P)'s strong duality holds, which is guaranteed by Assumption \ref{Assumption3} (refer to discussions of constraint qualifications in Section 5.2.3 of \cite{BoydConvex}), the quantity $\textbf{q}_i^{D*}$ is also an optimal solution to problem (P). 

Similarly, the quantity  $\textbf{q}^{S*}$ in $\left(\textbf{q}^{D*}, \textbf{q}^{S*}\right)$ is also an optimal solution to problem (U), because condition \req{FOC2} guarantees $\textbf{q}^{S*}$ satisfy problem (U)'s optimality condition, as shown below
\begin{align}
\frac{\partial C_0}{\partial \textbf{q}^S}\left(\textbf{q}^{S*}\right)- \textbf{p} =0.\nonumber
\end{align}
Hence, the pair $\left(\textbf{q}^{D*}, \textbf{q}^{S*}, \textbf{p}\right)$ form a competitive equilibrium. This completes the proof of (1).
\newline{(2)} We follow the same notation as in the first part. For any competitive equilibrium $\left(\textbf{q}^{D*}, \textbf{q}^{S*}, \textbf{p}\right) \in V_1$, let $\boldsymbol{\lambda}^{*} = \left(\boldsymbol{\lambda}_i^{*}, \forall i \in \mathcal{N^+}\right)$, where $\boldsymbol{\lambda}_i^{*}$ is the dual optimal solution to prosumer $i$'s problem (P) and let $\boldsymbol{\rho}^{*} = \textbf{p}$. Then we know $\left(\textbf{q}^{D*}, \textbf{q}^{S*}\right)$ and $\left(\boldsymbol{\lambda}^{*}, \boldsymbol{\rho}^{*}\right)$ satisfy problem (D)'s KKT conditions, since a combination of (P)'s KKT conditions, (U)'s optimality condition, the market clearing equation and the above equations together forms problem (D)'s KKT conditions. By strong duality of problem (D) due to Assumption \ref{Assumption3}, we have $\left(\textbf{q}^{D*}, \textbf{q}^{S*}\right)$ and $\left(\boldsymbol{\lambda}^{*}, \boldsymbol{\rho}^{*}\right)$ are primal and dual optimal solutions to problem (D), thus $\left(\textbf{q}^{D*}, \textbf{q}^{S*}, \textbf{p}\right) \in V_2$. Conversly, if $\left(\textbf{q}^{D*}, \textbf{q}^{S*}, \textbf{p}\right) \in V_2$, then by the same process of proof of (1), we know as long as the DSO sets price to $\textbf{p}=\frac{\partial C_0}{\partial \textbf{q}^S}\left(\textbf{q}^\textbf{S*}\right)$, $\left(\textbf{q}^{D*}, \textbf{q}^{S*}, \textbf{p}\right)$ becomes a competitive equilibrium, thus $\left(\textbf{q}^{D*}, \textbf{q}^{S*}, \textbf{p}\right) \in V_1$. This shows $V_1 = V_2$. Hence, we obtain  $V_1$ is equivalent to $V_2$.
\end{proof}
The above theorem is related to the fundamental theorems of welfare in economics \cite{FoundamentalSW}, but they are different because of the physical constraint (D2). From the theorem, we obtain an equivalence relationship between problem (D) and problems (P) and (U). This relationship makes it possible for us, by only focusing on problem (D), to analyze sensitivity of prosumers' contribution on social welfare in optimal DR with the change of the resource capacity, $[\textbf{h}_i]_j$. The analysis applies to many situations, such as the change of AC comfort zone size, allowing prosumers to be net-sellers and allowing EVs to discharge. For example, we can analyze the effect of allowing an EV to discharge on the optimal social welfare by changing the resource capacity $[\textbf{h}_i]_j$ of the EV's discharging constraint. Since all individual constraints can be described by the general linear constraint \req{SuperConstraint}, we can perform the sensitivity analysis by analyzing the dual variable (shadow price) associated with the constraint \req{SuperConstraint} of problem (D). This is what we call \textit{General Sensitivity Analysis Approach, GSAA}.

In the following sections, we study GSAA in two different net utility function settings: quadratic and general convex. In the former case, the closed-form shadow price can be derived; while in the later case, the shadow price may not be derived explicitly. Instead, we analyze properties of the bounds on the shadow price.

\section{Quadratic Settings}\label{sec:shadow quad}
In this section, we consider a setting, where all the net utility functions $\left(U_i - C_i\right), \forall i \in \mathcal{N}^+$ are quadratic. Then GSAA can provide a closed-form shadow price reflecting quantitative information about impacts of the general linear constraints on the optimal objective value, i.e., maximal social welfare.

\subsection{Model Reformulation in Quadratic Settings}
In the quadratic setting, the models of prosumers, the utility company and DSO can be reformulated into the following forms. 
\subsubsection{Prosumer}
The net utility function of prosumer $i$ using appliance $a$ at time $t$ by consuming quantity $x$ is given below,
\begin{gather}
U_{i,a}^t\left(x\right)-C_{i,a}^t\left(x\right) = \hat{a}_{i,a}\left(t\right)x^2+\hat{b}_{i,a}\left(t\right)x+\hat{c}_{i,a}\left(t\right),\label{netUtility}
\end{gather}
where the second order coefficient $\hat{a}_{i,a}\left(t\right)<0$ represents the concavity of the appliance's net utility.  The first order coefficient $\hat{b}_{i,a}\left(t\right)$ represents the appliance's initial utility increasing rate, i.e., the utility associated with consuming the first unit of energy. The constant $\hat{c}_{i,a}\left(t\right)$ represents the initial utility when time period $t$ starts. For different types of appliances, their quadratic net utility coefficients are described in Table \ref{Table Coef.}. More details on the coefficient meaning of different appliances can be found in \cite{DingXiang}.
%
\begin{table}[H]
\renewcommand{\arraystretch}{1.4}
\caption{Signs of Quadratic Net Utility Coefficients }\label{Table Coef.}
\centering
\begin{tabular}{cc|c|c|}
\cline{1-3}
\multicolumn{2}{ |c|  }{\multirow{1}{*}{Inflexible Appliances} } & $\hat{a}_{i,a}\left(t\right)<0, \hat{b}_{i,a}\left(t\right)>0, \hat{c}_{i,a}\left(t\right)=0$      \\ \cline{1-3}
\multicolumn{1}{ |c  }{\multirow{3}{*}{Flexible} } &
\multicolumn{1}{ |c| }{ACs, Washers} & $\hat{a}_{i,a}\left(t\right)<0, \hat{b}_{i,a}\left(t\right)>0 \quad \quad \quad \quad \quad \ \ $   \\ \cline{2-3}
\multicolumn{1}{ |c  }{}                        &
\multicolumn{1}{ |c| }{EVs/PHEVs} & $\hat{a}_{i,a}\left(t\right)<0, \hat{b}_{i,a}\left(t\right)>0, \hat{c}_{i,a}\left(t\right)=0$   \\ \cline{2-3}
\multicolumn{1}{ |c  }{}                        &
\multicolumn{1}{ |c| }{Energy Storage} & $\hat{a}_{i,a}\left(t\right)<0, \hat{b}_{i,a}\left(t\right)=0, \hat{c}_{i,a}\left(t\right)=0$   \\ \cline{1-3}
\end{tabular}
\end{table}

In the above table, in addition to ACs and washers, dryers, dishwashers and fridges also belong to the first category under flexible appliances.

With these quadratic coefficients, we can reformulate prosumer problem (P) into the following matrix form,
\begin{gather} 
\max_{ \textbf{q}_i^D}\frac{1}{2}\left(\textbf{q}_i^D\right)^{T}\Lambda_{i}\ \textbf{q}_i^D+\left(\textbf{b}_i-\textbf{p}_i\right)^{T}\textbf{q}_i^D+\textbf{c}_i^T\mathbb{1}_i,\text{s.t.} \ \text{\req{pro m2}}, \label{qua:i} 
\end{gather}
where matrix $\Lambda_{i}=diag\left(\Lambda_{i,1}, \Lambda_{i,2}, ...,  \Lambda_{i,|\mathcal{A}_{i}|}\right)$ with $\Lambda_{i,a} =  diag \left( 2\hat{a}_{i,a}\left(1\right), 2\hat{a}_{i,a}\left(2\right), ..., 2\hat{a}_{i,a}\left(H\right) \right)$ for $a \in {\mathcal{A}_{i}}$, vector $\textbf{b}_{i} = \left(\textbf{b}_{i,a}, \forall a \in \mathcal{A}_i\right)$ with $\textbf{b}_{i,a} = \left(\hat{b}_{i,a}\left(t\right), \forall t \in \mathcal{H}\right)$, vector $\textbf{p}_i =\left( \textbf{p},\forall a \in \mathcal{A}_i\right)$, vector $\textbf{c}_{i} = \left(\textbf{c}_{i,a}, \forall a \in \mathcal{A}_i\right)$ with $\textbf{c}_{i,a} = \left(\hat{c}_{i,a}\left(t\right), \forall t \in \mathcal{H}\right)$, and vector $\mathbb{1}_i = \left(\left(1,  \forall t \in \mathcal{H}\right), \forall a \in \mathcal{A}_i\right)$.
\subsubsection{Electric Utility Company}
For simplicity, we assume the utility company's production cost is linear and can be time-varying, i.e., for $\textbf{q}^S \in \mathbb{R}^H$,
\begin{align}
C_0\left(\textbf{q}^S\right)=\textbf{b}_{0}^{T}\textbf{q}^S,\quad \text{where $\textbf{b}_{0} \in \mathbb{R}_+^H$.} \label{qua:dc} 
\end{align}

\subsubsection{DSO}\label{DSOsec}
Under the quadratic framework and by using equation \req{opt:DSO bl} to substitute decision variable $\textbf{q}^S$ with $\textbf{q}^D$, the DSO problem (D) can be reformulated into the following quadratic programming problem,
\begin{align}
\max_{ \textbf{q}^D} \quad
\frac{1}{2}\left(\textbf{q}^D\right)^{T}\Lambda\ \textbf{q}^D+\bm{\bar{b}}^{T}\textbf{q}^D+\textbf{g}^T\mathbb{1} \quad \text{s.t.} \ \text{\req{opt:DSO d}},\label{DSO q}
\end{align}
where matrix $\Lambda = diag\left(\Lambda_{1}, \Lambda_{2}, ..., \Lambda_{N}\right)$, vectors $\bm{\bar{b}} = \left( \left( \textbf{b}_{i,a} - \textbf{b}_{0}, \forall a \in \mathcal{A}_i\right), \forall i \in \mathcal{N}^+\right)$, $\textbf{g} = \left(\textbf{c}_{i}, \forall i \in \mathcal{N}^+\right)$ and $\mathbb{1} = \left(\mathbb{1}_i, \forall i \in \mathcal{N}^+\right)$. Matrix $\Lambda$ is a full rank diagonal $H\sum_{i \in \mathcal{N}^{+}}|\mathcal{A}_i|$ by $H\sum_{i \in \mathcal{N}^{+}}|\mathcal{A}_i|$ matrix.

\subsection{Shadow Price in Quadratic Settings}\label{ShadowQuadSettings}
In this section, we derive a closed-form expression of the shadow price, i.e., the Lagrange multiplier/the dual variable associated to the $j^{th}$ general linear constraint for prosumer $i$, \req{SuperConstraint}, in the reformulated DSO problem \req{DSO q}. Let $\boldsymbol{\lambda}$ denote the Lagrange multiplier vector to constraint set \req{opt:DSO d}. First, we show an important property of the problem \req{DSO q} used in deriving the shadow price later.
\begin{property}\label{Qual_De_Equal}
	The reformulated DSO problem \req{DSO q} is equivalent to the following $N$ problems: $\forall i \in \mathcal{N}^+$,
	\begin{align} 
	\max_{ \textbf{q}_i^D}\frac{1}{2}\left(\textbf{q}_i^D\right)^{T}\Lambda_{i}\ \textbf{q}_i^D+\bm{\bar{b}}_i^{T}\textbf{q}_i^D+\textbf{c}_i^T\mathbb{1}_i,\ \text{s.t.} \ \text{\req{pro m2}},\label{De_quad} 
	\end{align}
	where $\bm{\bar{b}}_i = \left( \textbf{b}_{i,a} - \textbf{b}_{0}, \forall a \in \mathcal{A}_i\right)$. By equivalent, we mean they have the same optimal primal solutions and the same optimal dual solutions. (Problem \req{De_quad} is the same as the reformulated prosumer problems \req{qua:i} with $\textbf{p}_i = \textbf{b}_{0}$, for all $i \in \mathcal{N}^+$.)
\end{property}
\begin{proof}
	Notice that the objective function of problem \req{DSO q} is equal to $\sum_{i  \in\mathcal{N}^+}\left[\frac{1}{2}\left(\textbf{q}_i^D\right)^{T}\Lambda_{i}\ \textbf{q}_i^D+\bm{\bar{b}}_i^{T}\textbf{q}_i^D+\textbf{c}_i^T\mathbb{1}_i\right]$, where $\bm{\bar{b}}_i = \left( \textbf{b}_{i,a} - \textbf{b}_{0}, \forall a \in \mathcal{A}_i\right)$, and in \req{opt:DSO d}, the constraints corresponding to one prosumer is independent from the constraints corresponding to the other prosumers. Hence, the KKT conditions for the original problem \req{DSO q} are the same as the combination of KKT conditions of all decoupled problems \req{De_quad} (One can refer to the KKT conditions used in Theorem \ref{Equilibrium}'s proof). Hence, by strong duality of both problems \req{DSO q} and \req{De_quad} for all $i \in \mathcal{N}^+$, their optimal primal and dual solutions are the same to each other.
\end{proof}
The above property tells us, deriving the shadow price for the DSO problem \req{DSO q} is equivalent to deriving the shadow price for the decoupled problems \req{De_quad}. Furthermore, the shadow price of the decoupled problems \req{De_quad} can be directly applied to sensitivity analysis to the original problem \req{DSO q}.

Now, we can first focus on the decoupled problem \req{De_quad} for prosumer $i$. Let $\boldsymbol{\lambda}_i$ denote the optimal dual variable associated with constraint \req{pro m2}. Since the matrix $\Lambda_i$ is diagonal and of full rank, by first order condition we can write the optimal solution $\textbf{q}_i^{D*}$ for problem \req{De_quad} as 
\begin{align}
\textbf{q}_i^{D*}=-\Lambda_i^{-1}\left(\bm{\bar{b}}_i^{T}-A_i^{T}\boldsymbol{\lambda}_i\right). \label{OptWithDual}
\end{align}
Substituting \req{OptWithDual} to \req{De_quad} and by strong duality, the quadratic programming problem \req{De_quad} is equivalent to its dual problem as below,
\begin{align}
\max_{ \boldsymbol{\lambda}_i \geq 0} \quad
&  \frac{1}{2}\boldsymbol{\lambda}_i^{T}A_i\Lambda_i^{-1}A_i^{T}\boldsymbol{\lambda}_i-\boldsymbol{\lambda}_i^{T}\left(A_i\Lambda_i^{-1}\bm{\bar{b}}_i^{T}+\textbf{h}_i\right),\label{qua dual}
\end{align}
where we ignore the constant term $\textbf{c}_i^T\mathbb{1}_i$, since it does not affect the optimal solution.

For the above dual problem \req{qua dual}, to the best of our knowledge, there is no explicit closed-form solution. However, for this paper, we are interested in estimating the effect of changing one of prosumer $i$'s resource capacities, $[\textbf{h}_i]_j$, at a time on the social welfare. Hence, we only need to focus on one element of Lagrange multiplier vector $\boldsymbol{\lambda}_i$.

We do so by focusing on the targeted constraint i.e., the $j^{th}$ general linear constraint for prosumer $i$, \req{SuperConstraint}, and assume this constraint is tight at optimality while prosumer $i$'s other constraints are not (Note, there is no assumption about the other prosumers' constraints). Then derive the shadow price $\left[\boldsymbol{\lambda}_i^*\right]_j$ associated with the constraint \req{SuperConstraint}. 
Due to complimentary slackness, we know for prosumer $i$, $\left[\boldsymbol{\lambda}_i^*\right]_l=0, \forall l \neq j$. By matrix calculations, we can obtain the shadow price $\left[\boldsymbol{\lambda}_i^*\right]_j$ of the target constraint \req{SuperConstraint} as below,
\begin{align}
\left[\boldsymbol{\lambda}_i^*\right]_j=\left[\frac{\left[A_i \Lambda_i^{-1}\bm{\bar{b}}_i + \textbf{h}_i\right]_j}{\left[A_i \Lambda_i^{-1}A_i^T\right]_{jj}}\right]^{+}. \label{MatrixShadow}
\end{align}
By the definitions of $A$ and $\Lambda$, we have
\begin{align}
\left[\boldsymbol{\lambda}_i^*\right]_j=\left[\frac{\sum_{a \in\mathcal{A}_{i}^{\left(j\right)}}\sum_{t \in \mathcal{H}_{i,a}^{\left(j\right)}} \frac{\alpha_{i,a}^{\left(j\right)}(t)\left[\hat{b}_{i,a}\left(t\right) - b_{0}\left(t\right)\right]}{2\hat{a}_{i,a}\left(t\right)} + [\textbf{h}_i]_j}{\sum_{a \in\mathcal{A}_{i}^{\left(j\right)}}\sum_{t \in \mathcal{H}_{i,a}^{\left(j\right)}} \frac{\left[\alpha_{i,a}^{\left(j\right)}(t)\right]^2}{2\hat{a}_{i,a}\left(t\right)}}\right]^{+}. \label{SuperShadow}
\end{align}
Based on property \ref{Qual_De_Equal}, we know the above closed-form shadow price \req{SuperShadow} reflects the rate of change in social welfare (objective function value of (D)) associated with change of the resource capacity $[\textbf{h}_i]_j$. More specifically, the amount of social welfare improvement associated with increasing the resource capacity by up to $K$ units, i.e., relaxing the general linear constraint \req{SuperConstraint} to 
$\sum_{a \in\mathcal{A}_{i}^{\left(j\right)}} \sum_{t \in \mathcal{H}_{i,a}^{\left(j\right)}} {\alpha_{i,a}^{\left(j\right)}\left(t\right)}q_{i,a}^{D}\left(t\right)\leq [\textbf{h}_i]_j+K $, is upper bounded by $K\left[\boldsymbol{\lambda}_i^*\right]_j$. Hence, the closed-form shadow price \req{SuperShadow} can be used as a measuring tool for GSAA to quickly estimate the potential of prosumer $i$'s contribution on social welfare with change of the resource capacity in DR. When one needs to encourage/select customers to participate in DR or make investment on network construction or upgrade to enable certain DR functions, priority should be given to the prosumer with a larger potential contribution i.e., a larger value of $K\lambda_{AC}^{*}$ (especially when there is a limited budget for the campaign, infrastructure upgrade or management).

\subsection{GSAA Applications in Quadratic Settings}\label{GSAA-Q app}
To show how GSAA can be used to identify contribution potentials of different prosumers on social welfare with change of their resource capacities in DR, we provide three application examples here: enlarging AC comfort zone, allowing net-selling behavior and allowing EVs to discharge.
\subsubsection{GSAA for Enlarging AC Comfort Zone}\label{GSAA_AC_Q}
For prosumer $i$ using an AC (labeled by $AC$) at time period $t$, the AC comfort zone constraint (refer to \cite{6039082} and \cite{6840288}) can be represented by  
\begin{gather}
\underline{Q}_{i,{AC}}^D\leq \sum_{\tau = 1}^{t} \alpha_{i,AC}\left(\tau\right)q_{i,AC}^{D}\left(\tau\right)\leq \overline{Q}_{i,AC}^D,\label{ACoriginal}
\end{gather}
where, $\alpha_{i,AC}\left(\tau\right) < 0$. Values $\underline{Q}_{i,{AC}}^D$ and $\overline{Q}_{i,AC}^D$ reflect prosumer $i$'s equivalent lowest and highest comfort temperatures  respectively. Due to symmetry, here we only take the right inequality of \req{ACoriginal} as an example. It can be reformulated as the $j$th general linear constraint for prosumer $i$, of form \eqref{SuperConstraint},
where $\mathcal{A}_{i}^{\left(j\right)} = \{AC\}$, $\mathcal{H}_{i,a}^{\left(j\right)} = \{1, 2, ..., t\}$, $\alpha_{i,a}^{\left(j\right)}\left(t\right) = \alpha_{i,a}\left(t\right)$ and $[\textbf{h}_i]_j = \overline{Q}_{i,AC}^D$. We use $\lambda_{AC}^*$ to denote the optimal dual variable, i.e., the shadow price, corresponding to this constraint. By using GSAA, i.e., substituting specific parameters of the constraint and the net utility function \req{netUtility} into \req{SuperShadow}, we can obtain the shadow price for relaxing AC comfort zone's upper limit as
\begin{align}
\lambda_{AC}^{*}=\left[\frac{\sum_{\tau = 1}^{t}\frac{  \alpha_{i,AC}\left(\tau\right)\left(\hat{b}_{i,AC}\left(\tau\right) - b_{0}\left(\tau\right)\right)}{\hat{a}_{i,AC}\left(\tau\right)} + 2\overline{Q}_{i,AC}^D}{\sum_{\tau = 1}^{t}  \frac{\alpha_{i,AC}^2\left(\tau\right)}{\hat{a}_{i,AC}\left(\tau\right)}} \right]^{+}. \label{sen1}
\end{align}
Then $K\lambda_{AC}^{*}$ shows the potential of prosumer $i$'s contribution on social welfare by enlarging his/her AC comfort zone limit $\overline{Q}_{i,AC}^D$ by $K$ units. From the form of 
\req{sen1}, we can observe that for two prosumers with similar utility functions and similar AC parameters $(\alpha_{i,AC}\left(\tau\right))$, the one who accepts a higher temperature, reflected as larger $\overline{Q}_{i,AC}^D$, has a lower contribution potential on social welfare. This result suggests that we should target those prefer cold AC setpoints first.

\subsubsection{\textit{GSAA for Allowing Net Selling} }\label{GSAA_NS_Q}
For prosumer $i$, who is also a net buyer, we can use GSAA to estimate his/her contribution on social welfare by allowing him/her to be a net seller in DR, i.e., relaxing the net buying constraint \req{NetSeller}. This constraint can be reformulated as the $j$th general linear constraint for prosumer $i$, of form \eqref{SuperConstraint}, where $\mathcal{A}_{i}^{\left(j\right)} = \mathcal{A}_{i}$, $\mathcal{H}_{i,a}^{\left(j\right)} = \{t\}$, $\alpha_{i,a}^{\left(j\right)} = -1$ and $[\textbf{h}_i]_j = 0$. We use $\lambda_{NS}^*$ to denote the optimal dual variable corresponding to this constraint. By using GSAA, i.e., substituting specific parameters of the constraint and the net utility function \req{netUtility} into \req{SuperShadow}, we can obtain the shadow price for allowing a prosumer to be a net seller as
\begin{align}
\lambda_{NS}^{*}=\left[\frac{\sum_{a \in\mathcal{A}_{i}} \frac{b_{0}\left(t\right)-\hat{b}_{i,a}\left(t\right)}{\hat{a}_{i,a}\left(t\right)}}{\sum_{a \in\mathcal{A}_{i}} \frac{1}{\hat{a}_{i,a}\left(t\right)}}\right]^{+}. \label{sen2}
\end{align}
Then $K\lambda_{NS}^{*}$ shows the potential of the net buyer $i$'s contribution on social welfare in DR by allowing him/her to net sell energy up to $K$ units. From the form of \req{sen2}, we can find that the effect of allowing net selling on social welfare improvement is not due to a single appliance. Instead, it is affected by a combination of all appliances, which implies the high complexity of accurate sensitivity analysis. If there is one appliance, whose net utility function is much flatter (very price-responsive) than the rest, i.e., there exists an $a_1$ with $\hat{a}_{i,a_{1}}\left(t\right) >> \hat{a}_{i,a_{2}}\left(t\right), ..., \hat{a}_{i,a_{|\mathcal{A}_{i}|}}\left(t\right)$, and if all appliance initial utility increasing rates $\hat{b}_{i,a_{1}}\left(t\right), ..., \hat{b}_{i,a_{|\mathcal{A}_{i}|}}\left(t\right)$ are similar, then $\lambda_{NS}^{*}=\left[b_{0}(t)-\hat{b}_{i,a_{1}}(t)\right]^{+}$. Hence, the shadow price is dominated by this most price-responsive appliance.

\subsubsection{GSAA for Allowing EVs to Discharge} \label{GSAA_EV_Q}
By using GSAA, we can also estimate the effect of allowing prosumer $i$'s EV (labeled by $EV$) to discharge at time period $t$, i.e., relaxing the constraint $q_{i,a}^D\left(t\right) \geq 0$, on the social welfare improvement. Similarly, the above constraint can be reformulated as the $j$th general linear constraint for prosumer $i$, of form \eqref{SuperConstraint}, where $\mathcal{A}_{i}^{\left(j\right)} = \{EV\}$, $\mathcal{H}_{i,a}^{\left(j\right)} = \{t\}$, $\alpha_{i,a}^{\left(j\right)} = -1$ and $[\textbf{h}_i]_j = 0$. We use $\lambda_{EV}^*$ to denote the optimal dual variable associated to this constraint.  Substituting specific parameters of the constraint and the net utility function \req{netUtility} into \req{SuperShadow}, we can obtain the shadow price for allowing EVs to discharge is
\begin{align}
\lambda_{EV}^{*}=\left[b_{0}\left(t\right)-\hat{b}_{i,EV}\left(t\right)\right]^{+}. \label{sen3}
\end{align}
Similarly, if we allow the EV to discharge up to $K$ units of energy, then the potential of prosumer $i$'s contribution on social welfare is $K\lambda_{EV}^{*}$. We can see that the above shadow price is monotonically increasing in the current utility company's production cost and monotonically decreasing in $\hat{b}_{i,EV}$. Hence those with low values of $\hat{b}_{i,EV}$, who value EV charging less, should be given EV discharging capacity first.
\section{General Convex Settings}\label{sec:shadow conv}
In this section, we study GSAA under a more general framework, where we do not have an explicit form of a prosumer's net utility function except for some \textit{nice convexity}, by which we mean the prosumer's negative net utility function is strongly convex and has Lipschitz continuous gradient. It's worth mentioning that the quadratic settings discussed in the previous section is just a special case of general convex settings here. Under the general convex settings, even though we cannot get a closed-form shadow price, we are able to derive upper and lower bounds of shadow prices based on the quadratic settings before. These bounds, as indirect information, can still provide us insights on prosumers' marginal contributions on social welfare as they change resources capacity in DR.

\subsection{Preliminary}
We first introduce basic definitions and equivalent conditions for strong convexity and Lipschitz continuous gradient\cite{BoydConvex, Vandenberghe}, which are used in theorems appeared latter.
\begin{definition}[Strong Convexity]
A function $f:\mathbb{R}^n\rightarrow\mathbb{R}$ is \textit{$\mu$}-strongly convex (or has \textit{$\mu$-strong convexity}) for $\mu>0$, if the function $f(\textbf{x})-\frac{\mu}{2}\|\textbf{x}\|^{2}$ is convex for all $\textbf{x} \in \mathbb{R}^{n}$.
\end{definition}
\begin{lemma}\label{lemma1}
A differentiable function $f: \mathbb{R}^n\rightarrow \mathbb{R}$ is \textit{$\mu$}-strongly convex if and only if $\ \forall \textbf{x}, \textbf{y} \in \mathbb{R}^{n}$,
\begin{align} \label{opt:Supply}
&f(\textbf{x})\geq f(\textbf{y})+\nabla f(\textbf{y})^{T}(\textbf{x}-\textbf{y})+\frac{\mu}{2}\|\textbf{x}-\textbf{y}\|^2.
\end{align}
\end{lemma}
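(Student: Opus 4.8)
The plan is to reduce the claim to the standard first-order characterization of convexity for differentiable functions, applied to an auxiliary function. I would define $g(\textbf{x}) = f(\textbf{x}) - \frac{\mu}{2}\|\textbf{x}\|^2$. By the Definition of Strong Convexity stated just above the lemma, $f$ is $\mu$-strongly convex \emph{if and only if} $g$ is convex, so it suffices to translate the convexity of $g$ into a statement about $f$. Since $f$ is differentiable, so is $g$, with gradient $\nabla g(\textbf{y}) = \nabla f(\textbf{y}) - \mu\textbf{y}$; this is the only structural fact I need before turning to the characterization.

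Next I would invoke the well-known first-order condition that a differentiable function $g$ is convex if and only if $g(\textbf{x}) \geq g(\textbf{y}) + \nabla g(\textbf{y})^{T}(\textbf{x}-\textbf{y})$ for all $\textbf{x}, \textbf{y} \in \mathbb{R}^{n}$. Substituting the definitions of $g$ and $\nabla g$ turns this into
\begin{align}
f(\textbf{x}) - \tfrac{\mu}{2}\|\textbf{x}\|^2 \geq f(\textbf{y}) - \tfrac{\mu}{2}\|\textbf{y}\|^2 + \left(\nabla f(\textbf{y}) - \mu\textbf{y}\right)^{T}(\textbf{x}-\textbf{y}). \nonumber
\end{align}
The only real work is then to collect the purely quadratic terms: after moving $-\frac{\mu}{2}\|\textbf{x}\|^2$ to the right-hand side and expanding $-\mu\textbf{y}^{T}(\textbf{x}-\textbf{y})$, the quantity $\frac{\mu}{2}\|\textbf{x}\|^2 - \frac{\mu}{2}\|\textbf{y}\|^2 - \mu\textbf{y}^{T}(\textbf{x}-\textbf{y})$ collapses to $\frac{\mu}{2}\|\textbf{x}-\textbf{y}\|^2$, which yields exactly inequality \eqref{opt:Supply}. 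Because every link in this chain is an equivalence — the Definition of Strong Convexity, the first-order convexity condition, and the reversible algebraic rearrangement — the biconditional is established in both directions at once, with no need to argue the two implications separately.

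I do not anticipate a genuine obstacle here: the result is a textbook equivalence, and the only point requiring care is the bookkeeping in the quadratic simplification, together with citing the first-order convexity characterization itself as the standard fact being leveraged (supplied by \cite{BoydConvex, Vandenberghe}). Should a fully self-contained argument be preferred, one could instead restrict $g$ to the line segment joining $\textbf{x}$ and $\textbf{y}$ and differentiate the resulting one-dimensional convex function, but the auxiliary-function route above is the cleanest and keeps the proof short.
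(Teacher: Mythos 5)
Your proof is correct: the reduction to the auxiliary function $g(\textbf{x}) = f(\textbf{x}) - \frac{\mu}{2}\|\textbf{x}\|^2$, the first-order characterization of convexity, and the quadratic bookkeeping $\frac{\mu}{2}\|\textbf{x}\|^2 - \frac{\mu}{2}\|\textbf{y}\|^2 - \mu\textbf{y}^{T}(\textbf{x}-\textbf{y}) = \frac{\mu}{2}\|\textbf{x}-\textbf{y}\|^2$ all check out, and every step is indeed an equivalence. The paper does not prove this lemma at all --- it simply cites \cite{BoydConvex, Vandenberghe} --- and your argument is precisely the standard one those references supply, so there is nothing to reconcile.
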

\begin{lemma} 
If a function $f: \mathbb{R}^n\rightarrow \mathbb{R}$ is $\mu$-strongly convex, then $f$ is strictly convex. \cite{Ahmadi}
\end{lemma}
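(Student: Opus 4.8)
The plan is to decompose $f$ into the sum of a convex function and an explicitly strictly convex quadratic, and then invoke the elementary fact that adding a convex function to a strictly convex function yields a strictly convex function. This routes entirely through the \emph{definition} of strong convexity and avoids any differentiability assumption.

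First I would set $g(\textbf{x}) = f(\textbf{x}) - \frac{\mu}{2}\|\textbf{x}\|^2$, which is convex precisely by the definition of $\mu$-strong convexity, so that $f(\textbf{x}) = g(\textbf{x}) + \frac{\mu}{2}\|\textbf{x}\|^2$. For any $\textbf{x} \neq \textbf{y}$ and $\theta \in (0,1)$, the convexity inequality for $g$ is non-strict while the quadratic term will contribute a strict inequality; adding the two then forces the combined inequality for $f$ to be strict, which is exactly strict convexity.

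The only substantive step is to verify that $q(\textbf{x}) = \frac{\mu}{2}\|\textbf{x}\|^2$ is strictly convex for $\mu > 0$. I would do this through the algebraic identity $\theta\|\textbf{x}\|^2 + (1-\theta)\|\textbf{y}\|^2 - \|\theta\textbf{x} + (1-\theta)\textbf{y}\|^2 = \theta(1-\theta)\|\textbf{x} - \textbf{y}\|^2$, obtained by expanding the squared norm. Since $\mu > 0$, and since $\theta(1-\theta) > 0$ on $(0,1)$ together with $\|\textbf{x}-\textbf{y}\|^2 > 0$ whenever $\textbf{x} \neq \textbf{y}$, the right-hand side is strictly positive, giving $q(\theta\textbf{x}+(1-\theta)\textbf{y}) < \theta q(\textbf{x}) + (1-\theta)q(\textbf{y})$.

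There is no genuine obstacle here; the result is elementary and follows directly from the stated definition. If anything, the one point to be careful about is to use the definition of strong convexity exactly as given, namely that $f - \frac{\mu}{2}\|\cdot\|^2$ is convex, rather than the first-order inequality of Lemma \ref{lemma1}, since this lemma does not assume $f$ is differentiable.
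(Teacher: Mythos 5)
Your proof is correct and complete. Note that the paper itself gives no proof of this lemma at all---it simply cites an external reference---so there is nothing to compare against; your argument supplies the standard self-contained justification. The decomposition $f = g + \frac{\mu}{2}\|\cdot\|^2$ with $g$ convex is exactly the definition the paper adopts, your algebraic identity $\theta\|\textbf{x}\|^2 + (1-\theta)\|\textbf{y}\|^2 - \|\theta\textbf{x}+(1-\theta)\textbf{y}\|^2 = \theta(1-\theta)\|\textbf{x}-\textbf{y}\|^2$ checks out, and the closing observation that a convex plus a strictly convex function is strictly convex is elementary. Your caution about not routing through the first-order inequality of Lemma \ref{lemma1} is well placed: that characterization assumes differentiability, which this lemma's statement does not, and in fact the paper's own proof of Property \ref{Unique123} cites Lemma \ref{lemma1} where it apparently means this lemma---your version sidesteps that issue entirely.
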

\begin{definition}[Lipschitz Continuous Gradient] \label{Def2}
A differentiable continuous function $f: \mathbb{R}^n\rightarrow \mathbb{R}$ has \textit{$L$-Lipschitz continuous gradient} for $L>0$, if $\|\nabla f(\textbf{x})-\nabla f(\textbf{y})\| \leq L\|\textbf{x}-\textbf{y}\|$ for all $\textbf{x}, \textbf{y} \in \mathbb{R}^n$.
\end{definition}
\begin{lemma}\label{lemma2}
A convex function $f: \mathbb{R}^n\rightarrow \mathbb{R}$ has \textit{$L$-Lipschitz continuous gradient} if and only if $\ \forall \textbf{x}, \textbf{y} \in R^{n}$,
\begin{align}
f(\textbf{x})\leq f(\textbf{y})+\nabla f(\textbf{y})^{T}(\textbf{x}-\textbf{y})+\frac{L}{2}\|\textbf{x}-\textbf{y}\|^2
\end{align}
\end{lemma}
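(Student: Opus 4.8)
The plan is to prove the two directions of this equivalence separately, treating the forward implication (Lipschitz gradient $\Rightarrow$ quadratic upper bound) as routine and the converse as the step requiring real care. For the forward direction, I would write the difference $f(\textbf{x})-f(\textbf{y})$ as a line integral of the gradient along the segment joining $\textbf{y}$ to $\textbf{x}$, namely $f(\textbf{x})-f(\textbf{y}) = \int_0^1 \nabla f(\textbf{y}+t(\textbf{x}-\textbf{y}))^{T}(\textbf{x}-\textbf{y})\,dt$. Subtracting the linear term $\nabla f(\textbf{y})^{T}(\textbf{x}-\textbf{y})$ leaves an integrand of the form $[\nabla f(\textbf{y}+t(\textbf{x}-\textbf{y}))-\nabla f(\textbf{y})]^{T}(\textbf{x}-\textbf{y})$, which I bound via Cauchy--Schwarz and Definition \ref{Def2} by $Lt\|\textbf{x}-\textbf{y}\|^2$; integrating $Lt$ over $[0,1]$ produces exactly the factor $L/2$. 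Notably, this direction does not even invoke convexity.

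The converse --- the quadratic upper bound implies $L$-Lipschitz continuity of the gradient --- is the harder part and will be the main obstacle, since the bound is one-sided and convexity must be used to extract two-sided control of the gradient difference. The key trick I would employ is to shift $f$ by a linear term so as to manufacture a known global minimizer. Fixing $\textbf{x},\textbf{y}$, I would define the auxiliary function $\phi(\textbf{z}) = f(\textbf{z}) - \nabla f(\textbf{x})^{T}\textbf{z}$, which is convex (a linear term subtracted from a convex function) and whose gradient $\nabla\phi(\textbf{z}) = \nabla f(\textbf{z}) - \nabla f(\textbf{x})$ vanishes at $\textbf{x}$, making $\textbf{x}$ a global minimizer of $\phi$. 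Because the quadratic upper bound is preserved under this linear shift, evaluating it at the descent point $\textbf{z}-\frac{1}{L}\nabla\phi(\textbf{z})$ and combining with $\phi(\textbf{x})\leq\phi(\textbf{z}-\frac{1}{L}\nabla\phi(\textbf{z}))$ yields, after setting $\textbf{z}=\textbf{y}$, the co-coercivity estimate $\frac{1}{2L}\|\nabla f(\textbf{x})-\nabla f(\textbf{y})\|^2 \leq f(\textbf{y})-f(\textbf{x})-\nabla f(\textbf{x})^{T}(\textbf{y}-\textbf{x})$.

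Finally, I would repeat this argument with the roles of $\textbf{x}$ and $\textbf{y}$ interchanged (shifting instead by $\nabla f(\textbf{y})$) to obtain the symmetric inequality, and then add the two. The function-value terms $f(\textbf{x})$ and $f(\textbf{y})$ cancel, leaving $\frac{1}{L}\|\nabla f(\textbf{x})-\nabla f(\textbf{y})\|^2 \leq (\nabla f(\textbf{x})-\nabla f(\textbf{y}))^{T}(\textbf{x}-\textbf{y})$. A single application of Cauchy--Schwarz to the right-hand side, followed by dividing through by $\|\nabla f(\textbf{x})-\nabla f(\textbf{y})\|$ (the case where this quantity vanishes being trivial), delivers the desired bound $\|\nabla f(\textbf{x})-\nabla f(\textbf{y})\| \leq L\|\textbf{x}-\textbf{y}\|$. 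The crux of the whole argument is recognizing that the quadratic bound alone is too weak to control the gradient difference directly, and that convexity must be injected precisely through the minimizer-creating shift in order to upgrade the one-sided bound into the two-sided co-coercivity estimate.
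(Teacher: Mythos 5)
Your proof is correct: the integral-plus-Cauchy--Schwarz argument for the forward direction and the co-coercivity argument (linear shift to manufacture a minimizer, descent step, symmetrize, Cauchy--Schwarz) for the converse are both sound, and you are right that convexity enters only in the converse. The paper itself offers no proof of Lemma \ref{lemma2} --- it is stated as a known fact with a citation to standard convex-analysis references --- so there is nothing to compare against; your argument is the standard textbook one and would serve as a complete proof.
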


\subsection{Model Reformulation in General Convex Settings}
For notational convenience, we let $F_i(\textbf{q}_i^D)=- \left[U_i\left(\textbf{q}_i^D\right)\right.$ $\left.-C_i\left(\textbf{q}_i^D\right)\right]$. In the general convex framework, all prosumers' net utility functions have nice convexity. Specifically, we adopt the following assumption: 
\begin{assumption}\label{AssumpMuStrong}
$F_i(\textbf{q}_i^D)$ is differentiable, $\mu_i$-strongly convex and has $L_i$-Lipschitz continuous gradient, $\forall i \in\mathcal{N}_{+}$.
\end{assumption}  

With notation $F_i(\textbf{q}_i^D)$, we can write DSO problem (D) as
\begin{align}
\min_{ \textbf{q}^D, \textbf{q}^S}  \quad
\sum_{i \in\mathcal{N}_{+}}F_i(\textbf{q}_i^{D})+C_0\left(\textbf{q}^S\right) \quad \text{s.t.}\  \text{\req{opt:DSO d} and \req{opt:DSO bl}.}\nonumber 
\end{align}

For simplicity, we still adopt the assumption that the utility company's cost function $C_0\left(\textbf{q}^S\right)$ has linear form as in \eqref{qua:dc}. By substituting \eqref{qua:dc} and \eqref{opt:DSO bl} into the above DSO problem (D), we can simplify it to
\begin{align}
\min_{ \textbf{q}^D}  \quad
&\sum_{i \in\mathcal{N}_{+}}F_i(\textbf{q}_i^{D})+\sum_{i \in\mathcal{N}_{+}}\sum_{a \in\mathcal{A}_{i}}\textbf{b}_0^{T}\textbf{q}_{i,a}^D\label{DSOnewObj2}  \quad \text{s.t.}\ \text{\req{opt:DSO d}.}
\end{align}
Let $F\left(\textbf{q}^{D}\right)$ denote the objective function in \req{DSOnewObj2},
then by Assumption \ref{AssumpMuStrong}, Lemma \ref{lemma1} and \ref{lemma2} for each $F_i(\textbf{q}_i^D)$ and summing over all $i \in\mathcal{N}_{+}$, we have:
$\ \forall \textbf{q}^{D},  \underbar{$\textbf{r}$},  \overline{\textbf{r}} \in \mathbb{R}^{H\sum_{i \in\mathcal{N}_{+}}|\mathcal{A}_{i}|}$,
\begin{align}
&\underbar{$F$}(\underbar{$\textbf{r}$},\textbf{q}^{D}) \leq F(\textbf{q}^{D}) \leq \overline{F}(\overline{\textbf{r}}, \textbf{q}^{D}),\label{arbitary}
\end{align}
where
\begin{align*}
&\underbar{$F$}(\underbar{$\textbf{r}$},\textbf{q}^{D})=\sum_{i \in\mathcal{N}_{+}}\underbar{$F$}_i(\underbar{$\textbf{r}$}_i, \textbf{q}_i^{D})+\sum_{i \in\mathcal{N}_{+}}\sum_{a \in\mathcal{A}_{i}}\textbf{b}_0^{T}\textbf{q}_{i,a}^D,\  \underbar{$\textbf{r}$}=\left(\underbar{$\textbf{r}$}_i, \forall i \right),\\
&\overline{F}(\overline{\textbf{r}}, \textbf{q}^{D})=\sum_{i \in\mathcal{N}_{+}}\overline{F}_i(\overline{\textbf{r}}_i, \textbf{q}_i^{D})+\sum_{i \in\mathcal{N}_{+}}\sum_{a \in\mathcal{A}_{i}}\textbf{b}_0^{T}\textbf{q}_{i,a}^D,\  \overline{\textbf{r}}=\left(\overline{\textbf{r}}_i, \forall i \right),\\
&\underbar{$F$}_i(\underbar{$\textbf{r}$}_i,\textbf{q}_i^{D})=F_i(\underbar{$\textbf{r}$}_i)+\nabla F_i(\underbar{$\textbf{r}$}_i)^{T}(\textbf{q}_i^{D}- \underbar{$\textbf{r}$}_i)+\frac{\mu_i}{2}\|\textbf{q}_i^{D}- \underbar{$\textbf{r}$}_i\|^{2},\\
&\overline{F}_i(\overline{\textbf{r}}_i,\textbf{q}_i^{D})=F_i(\overline{\textbf{r}}_i)+\nabla F_i(\overline{\textbf{r}}_i)^{T}(\textbf{q}_i^{D}- \overline{\textbf{r}}_i)+\frac{L_i}{2}\|\textbf{q}_i^{D}- \overline{\textbf{r}}_i\|^{2}.
\end{align*}

\subsection{Shadow Price in General Convex Settings}
In this section we derive closed-form upper and lower bounds of shadow price, i.e., Lagrange multiplier associated with any general linear constraint \req{SuperConstraint}, from the following three problems.
\begin{align}
\min_{\textbf{q}^{D}}  \quad
\underbar{$F$}(\underbar{$\textbf{r}$},\textbf{q}^{D})
\qquad\text{s.t.}\ 
\text{\req{opt:DSO d}},\tag{I}\label{F1}\\
\min_{\textbf{q}^{D}}  \quad
F\left(\textbf{q}^{D}\right)
\qquad\text{s.t.}\ 
\text{\req{opt:DSO d}},\tag{II}\label{F}\\
\min_{\textbf{q}^{D}}  \quad
\overline{F}(\overline{\textbf{r}}, \textbf{q}^{D})
\qquad\text{s.t.}\ 
\text{\req{opt:DSO d}},\tag{III}\label{F3}
\end{align}
where the problem (II) is the problem \req{DSOnewObj2} with a simplified form. Some basic properties of the problems (I), (II) and (III) are given below. 
\begin{property}\label{Unique123}
For any given $\underbar{$\textbf{r}$}$ and $\overline{\textbf{r}}$ ($\|\underbar{$\textbf{r}$}\|, \|\overline{\textbf{r}}\| <\infty$), each problem of (I), (II) and (III) has unique primal and dual optimal solutions.
\end{property}
\begin{proof}
Given any $\underbar{$\textbf{r}$}$ and $\overline{\textbf{r}}$ ($\|\underbar{$\textbf{r}$}\|, \|\overline{\textbf{r}}\| <\infty$), by a similar process of proof of Theorem \ref{Equilibrium}, the dual optimal solutions for (I), (II) and (III) exist. Next, since $F_i\left(\textbf{q}_{i}^{D}\right)$ is $\mu_i$-strongly convex, by Lemma \ref{lemma1}, we know $F_i\left(\textbf{q}_{i}^{D}\right)$ is strictly convex. Since $F\left(\textbf{q}^{D}\right)$ is the summation of strictly convex functions and convex functions, it is strictly convex too. Similarly, because $\underbar{$F$}(\underbar{$\textbf{r}$},\textbf{q}^{D})$ and $\overline{F}(\overline{\textbf{r}}, \textbf{q}^{D})$ are summations of quadratic functions (strictly convex) and a linear function (convex), they are also stricly convex for any $\underbar{$\textbf{r}$}^{*}$ and $\overline{\textbf{r}}^{*}$. Hence, due to the strict convexity, by Section 5.5.5 of \cite{BoydConvex}, we know both primal and dual optimal solutions are unique for each problem.
\end{proof}
\begin{property}
The optimal objective values $\underbar{$F$}^*$, $F^*$, $\overline{F}^*$ of the problems (I), (II), (III) satisfy $\underbar{$F$}^* \leq F^* \leq \overline{F}^*$ for any given $\underbar{$\textbf{r}$}$ and $\overline{\textbf{r}}$.
\end{property}
\begin{proof}
	This holds due to \eqref{arbitary}.
\end{proof}
With the above property, we can denote these unique primal and dual optimal solution pairs for problems (I), (II) and (III) by $\left(\underline{\textbf{q}}^{D*}, \underline{\boldsymbol{\lambda}}^*\right)$, $\left(\textbf{q}^{D*}, \boldsymbol{\lambda}^*\right)$ and $\left(\overline{\textbf{q}}^{D*}, \overline{\boldsymbol{\lambda}}^{*}\right)$  respectively. A relationship between the optimal solutions of these three problems are described in the following theorem.
\begin{property}\label{EqualSolutions}
There exist vectors $\underbar{$\textbf{r}$}^{*}$ and $\overline{\textbf{r}}^{*}$, such that optimization problems (I), (II) and (III) have the same optimal solution i.e., $\underline{\textbf{q}}^{D*}=\textbf{q}^{D*}=\overline{\textbf{q}}^{D*}$.
\end{property}
\begin{proof}
We first show by construction that there exists a vector $\underline{\textbf{r}}^{*}$ such that problem (I) and (II) have the same optimal solution i.e., $\underline{\textbf{q}}^{D*}=\textbf{q}^{D*}$. 
Firstly, we solve problem (I) with $\underline{\textbf{r}}=\textbf{q}^{D*}$. By definition of $\underbar{$F$}(\underbar{$\textbf{r}$},\textbf{q}^{D})$, the objective function of problem (I) becomes
\begin{align}
&\underline{F}\left(\textbf{q}^{D*},\textbf{q}^{D}\right)= \sum_{i \in\mathcal{N}_{+}}\frac{\mu_i}{2}\|\textbf{q}_i^{D}- \textbf{q}_i^{D*}\|^{2} +\sum_{i \in\mathcal{N}_{+}}\sum_{a \in\mathcal{A}_{i}}\textbf{b}_0^{T}\textbf{q}_{i,a}^D\nonumber \\
&+ \sum_{i \in\mathcal{N}_{+}}F_i(\textbf{q}_i^{D*})+\sum_{i \in\mathcal{N}_{+}}\nabla F_i(\textbf{q}_i^{D*})^{T}(\textbf{q}_i^{D}-\textbf{q}_i^{D*}).
\end{align}
Organizing the above equation in terms of the order of decision variable $\textbf{q}^{D}$, we have
\begin{align}
&\underline{F}\left(\textbf{q}^{D*},\textbf{q}^{D}\right)=\frac{1}{2}\sum_{i \in\mathcal{N}_{+}}\mu_i \left(\textbf{q}_i^{D}\right)^{T}\textbf{q}_i^{D} +\sum_{i \in\mathcal{N}_{+}}\sum_{a \in\mathcal{A}_{i}}\textbf{b}_0^{T}\textbf{q}_{i,a}^D\nonumber 
\\
&-\sum_{i \in\mathcal{N}_{+}}\left[\mu_i \textbf{q}_i^{D*} - \nabla F_i\left( \textbf{q}_i^{D*}\right)\right]^{T}\textbf{q}_i^{D} + \sum_{i \in\mathcal{N}_{+}} \left[F_i(\textbf{q}_i^{D*})\right. \nonumber
\\
& \left.- \nabla F_i\left( \textbf{q}_i^{D*}\right)^{T}\textbf{q}_i^{D*} + \frac{\mu_i}{2}\|\textbf{q}_i^{D*}\|^2\right],
\end{align}
where the last summation term is a constant.
Hence, KKT conditions for the problem (I) with $\underline{\textbf{r}}=\textbf{q}^{D*}$ are as below,
\begin{align}
&\underline{\boldsymbol{\lambda}}^{*} \geq 0, \nonumber \tag{I.1}\label{I.1}
\\
&A\underline{\textbf{q}}^{D*}-\textbf{h} \leq 0, \nonumber \tag{I.2}\label{I.2}
\\
&diag\left \{\underline{\boldsymbol{\lambda}}^{*}\left(A\underline{\textbf{q}}^{D*}-\textbf{h}\right)^{T}\right \}=0,\nonumber \tag{I.3}\label{I.3}
\\
&M \underline{\textbf{q}}^{D*}-\left[M \textbf{q}^{D*}-\nabla F\left( \textbf{q}^{D*}\right)\right] +A^{T}\underline{\boldsymbol{\lambda}}^{*}=0,\nonumber \tag{I.4}\label{I.4}
\end{align}
where $M = diag\left(\mu_1 \textbf{I}_1, \mu_2 \textbf{I}_2, ..., \mu_N \textbf{I}_N\right)$. $\textbf{I}_i$ is an identity matrix with length $H |\mathcal{A}_{i}|$.
If we replace $(\underline{\textbf{q}}^{D*}, \underline{\boldsymbol{\lambda}}^{*})$ by $(\textbf{q}^{D*}, \boldsymbol{\lambda}^{*})$ in the above KKT conditions for the problem (I), then we can obtain the following expressions
\begin{align}
&\boldsymbol{\lambda}^{*} \geq 0, \nonumber \tag{II.1}\label{II.1}\\
&A\textbf{q}^{D*}-\textbf{h} \leq 0, \nonumber \tag{II.2}\label{II.2}\\
&diag\left \{\boldsymbol{\lambda}^{*}\left(A\textbf{q}^{D*}-\textbf{h}\right)^{T}\right \}=0,\nonumber \tag{II.3}\label{II.3}\\
&\nabla F( \textbf{q}^{D*})+A^{T}\boldsymbol{\lambda}^{*}=0.\nonumber \tag{II.4} \label{II.4}
\end{align} 
These expressions hold due to optimality of $(\textbf{q}^{D*}, \boldsymbol{\lambda}^{*})$ as they are exactly the same as KKT conditions for problem (II). Hence, $\textbf{q}^{D*}$ also satisfies \req{I.1}-\req{I.4} and is an optimal solution of the problem (I). With Theorem \ref{Unique123}, we know $\underline{\textbf{q}}^{D*}=\textbf{q}^{D*}$.\\
%
By a similar process, we can prove for $\overline{\textbf{r}}^{*} = \textbf{q}^{D*}$, (II) and (III) have the same optimal solution, i.e., $\textbf{q}^{D*}=\overline{\textbf{q}}^{D*}$.
\footnote{It's worth mentioning that the proof only shows existence, but not provide information about uniqueness of such existence. }
\end{proof}
\begin{property}\label{Conv_De_Property}
	The problems (I), (II) and (III) are equivalent to the following three decoupled problems respectively, i.e., each of problems (I), (II) and (III) has the same primal and dual optimal solutions to its decoupled problem.
	\begin{align}
		\min_{\textbf{q}_i^{D}}  \quad
		\underbar{$G$}_i(\underbar{$\textbf{r}$}_i,\textbf{q}_i^{D})
		\qquad\text{s.t.}\ 
		\text{\req{pro m2}},\quad \forall i \in\mathcal{N}_{+},  \tag{De-I}\label{D_F1}\\
		\min_{\textbf{q}_i^{D}}  \quad
		G_i\left(\textbf{q}_i^{D}\right)
		\qquad\text{s.t.}\ 
		\text{\req{pro m2}},\quad \forall i \in\mathcal{N}_{+},  \tag{De-II}\label{D_F}\\
		\min_{\textbf{q}_i^{D}}  \quad
		\overline{G}_i(\overline{\textbf{r}}_i, \textbf{q}_i^{D})
		\qquad\text{s.t.}\ 
		\text{\req{pro m2}},\quad \forall i \in\mathcal{N}_{+},  \tag{De-III}\label{D_F3}
	\end{align}
	where $\underbar{$G$}_i(\underbar{$\textbf{r}$}_i,\textbf{q}_i^{D}) = \underbar{$F$}_i(\underbar{$\textbf{r}$}_i, \textbf{q}_i^{D})+\sum_{a \in\mathcal{A}_{i}}\textbf{b}_0^{T}\textbf{q}_{i,a}^D$, $G_i(\textbf{q}_i^{D}) = F_i(\textbf{q}_i^{D})+\sum_{a \in\mathcal{A}_{i}}\textbf{b}_0^{T}\textbf{q}_{i,a}^D$ and $\ \overline{G}_i(\overline{\textbf{r}}_i, \textbf{q}_i^{D}) = \overline{F}_i(\overline{\textbf{r}}_i, \textbf{q}_i^{D})+\sum_{a \in\mathcal{A}_{i}}\textbf{b}_0^{T}\textbf{q}_{i,a}^D$.
\end{property}
\begin{proof}
	Due to \eqref{arbitary}, we know the problems (I), (II) and (III) are summations of problems \req{D_F1}, \req{D_F} and \req{D_F3} over $i \in\mathcal{N}_{+}$ respectively. And because different prosumers' constraints are separable, following the similar procedure to prove property \ref{Qual_De_Equal} in section \ref{ShadowQuadSettings}, we can obtain original problems and their decoupled problems are equivalent. 
\end{proof}
Based on the above properties, we know the shadow price for the decoupled problems \req{D_F} can be directly applied to sensitivity analysis for the DSO problem \req{DSO q}. Hence, we can first analyze the decoupled problems. Let $\underline{\boldsymbol{\lambda}}_i$, $\boldsymbol{\lambda}_i$ and $\overline{\boldsymbol{\lambda}}_i$ denote the optimal dual variable associated with constraint \req{pro m2} in problems \req{D_F1}, \req{D_F} and \req{D_F3} for prosumer $i$. Similar to the quadratic settings, we only focus on the targeted constraint i.e., the $j^{th}$ general linear constraint for prosumer $i$, \req{SuperConstraint}, and assume this constraint is tight at optimality while prosumer $i$'s other constraints are not (Similarly, there is no assumption about the other prosumers' constraints). Then derive the closed-form upper and lower bounds of the shadow price, as shown below.

\begin{theorem}[Dual Bounding Theorem]\label{thm:Dual Bounding} Assume for certain given $\underline{\textbf{r}}_k$ and $\overline{\textbf{r}}_k$, prosumer $k$'s optimization problems \req{D_F1}, \req{D_F} and \req{D_F3} achieve optimality with only the $j^{th}$ constraint of \req{pro m2}, i.e.,
	$\sum_{a \in\mathcal{A}_{k}^{\left(j\right)}} \sum_{t \in \mathcal{H}_{k,a}^{\left(j\right)}}\alpha_{k,a}^{\left(j\right)}\left(t\right) q_{k,a}^{D}\left(t\right)\leq [\textbf{h}_i]_j$ being tight. We denote the corresponding dual variables (i.e., shadow prices) as $\left[\underline{\boldsymbol{\lambda}}_k^*\right]_j$, $\left[\boldsymbol{\lambda}_k^*\right]_j$ and $[\overline{\boldsymbol{\lambda}}_k^*]_j$ respectively. Then, (1) $\left[\boldsymbol{\lambda}_k^*\right]_j$ is bounded by the following inequalities,
	\begin{align}
		&\max\{\left[\underline{\boldsymbol{\lambda}}_k^*\right]_j - H_{k,1},[\overline{\boldsymbol{\lambda}}_k^*]_j-H_{k,3}, 0\}\leq \left[\boldsymbol{\lambda}_k^*\right]_j \nonumber 
		\\
		&\leq \min\{\left[\underline{\boldsymbol{\lambda}}_k^*\right]_j+H_{k,1},[\overline{\boldsymbol{\lambda}}_k^*]_j+H_{k,3}\},\label{bounds}
	\end{align} 
	in which
	\begin{align}
		[\underline{\boldsymbol{\lambda}}_k^*]_j &= 
		\eta_k\Bigg[\sum_{a \in\mathcal{A}_{k}^{\left(j\right)}}\sum_{t \in \mathcal{H}_{k,a}^{\left(j\right)}}\alpha_{k,a}^{\left(j\right)}\left(t\right)\bigg[\mu_k\underline{r}_{k,a}(t)\nonumber
		\\
		&\quad\ \left. - \frac{\partial F_k}{\partial q_{k,a}(t)}\left(\underline{r}_{k,a}(t)\right)  - b_0(t)\right] - {\mu_k}[\textbf{h}_i]_j  \Bigg]^+,\label{lowerOptLamda}
		\\
		[\overline{\boldsymbol{\lambda}}_k^*]_j &=
		\eta_k\Bigg[\sum_{a \in\mathcal{A}_{k}^{\left(j\right)}}\sum_{t \in \mathcal{H}_{k,a}^{\left(j\right)}}\alpha_{k,a}^{\left(j\right)}\left(t\right)\bigg[L_k\overline{r}_{k,a}(t)\nonumber
		\\
		&\quad\ \left. - \frac{\partial F_k}{\partial q_{k,a}(t)}\left(\overline{r}_{k,a}(t)\right)  - b_0(t)\right] - {L_k}[\textbf{h}_i]_j  \Bigg]^+,\label{upperOptLamda}
		\\
		H_{k,1}&=\sqrt{\eta_k}\left[\mu_k\|\underline{\textbf{q}}_k^{D*}\|+L_k\|\textbf{q}_k^{D*}\|+\left(\mu_k+L_k\right)\|\underline{\textbf{r}}_k\|\right],\label{H1}
		\\
		H_{k,3}&=\sqrt{\eta_k}L_k\left(\|\overline{\textbf{q}}_k^{D*}\|+\|\textbf{q}_k^{D*}\|+2\|\overline{\textbf{r}}_k\|\right)\label{H3},
		\\
		\eta_k &= \bigg\{\sum_{a \in\mathcal{A}_{k}^{\left(j\right)}}\sum_{t \in \mathcal{H}_{k,a}^{\left(j\right)}} \left[\alpha_{k,a}^{\left(j\right)}(t)\right]^2\bigg\}^{-1};
	\end{align}
	\newline{(2)} if $\underline{\textbf{r}}_k^{*}$ and $\overline{\textbf{r}}_k^{*}$ are vectors such that $\underline{\textbf{q}}_k^{D*}=\textbf{q}_k^{D*}=\overline{\textbf{q}}_k^{D*}$, then the form of \eqref{bounds} can be represented with $H_{k,1}$ and $H_{k,3}$ changed to
	\begin{align}H_{k,1}&=\sqrt{\eta_k}(L_k+\mu_k)(\|\textbf{q}_k^{D*}\|+\|\underline{\textbf{r}}_k^{*}\|),\label{H1new}
		\\
		H_{k,3}&=2\sqrt{\eta_k}L_k(\|\textbf{q}_k^{D*}\|+\|\overline{\textbf{r}}_k^{*}\|).\label{H3new}
	\end{align}
\end{theorem}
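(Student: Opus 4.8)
The plan is to first collapse the problem to a single prosumer. By Property~\ref{Conv_De_Property}, the shadow prices of \eqref{F1}, \eqref{F}, \eqref{F3} coincide with those of the decoupled problems \eqref{D_F1}, \eqref{D_F}, \eqref{D_F3} for prosumer $k$, so it suffices to work with the latter three, each having only the $j$-th row of \eqref{pro m2} active. I would write the stationarity KKT condition of each, restricted to a component $(a,t)$ in the support $\mathcal{A}_k^{(j)}\times\mathcal{H}_{k,a}^{(j)}$ of that row: for \eqref{D_F} it reads $\tfrac{\partial F_k}{\partial q_{k,a}(t)}(\textbf{q}_k^{D*})+b_0(t)+[\boldsymbol{\lambda}_k^*]_j\alpha_{k,a}^{(j)}(t)=0$, while the quadratic surrogate \eqref{D_F1} gives $\mu_k\big(\underline{q}_{k,a}^{D*}(t)-\underline{r}_{k,a}(t)\big)+\tfrac{\partial F_k}{\partial q_{k,a}(t)}(\underline{\textbf{r}}_k)+b_0(t)+[\underline{\boldsymbol{\lambda}}_k^*]_j\alpha_{k,a}^{(j)}(t)=0$ (and analogously for \eqref{D_F3} with $L_k,\overline{\textbf{r}}_k$). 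Multiplying the \eqref{D_F1} condition by $\alpha_{k,a}^{(j)}(t)$, summing over the support, and using tightness $\sum\alpha_{k,a}^{(j)}(t)\,\underline{q}_{k,a}^{D*}(t)=[\textbf{h}_i]_j$ together with $\eta_k=\{\sum[\alpha_{k,a}^{(j)}(t)]^2\}^{-1}$ yields the closed form \eqref{lowerOptLamda}; the same computation with $L_k,\overline{\textbf{r}}_k$ yields \eqref{upperOptLamda}. These are equivalently \eqref{SuperShadow} specialized to second-order coefficient $-\mu_k/2$, resp.\ $-L_k/2$.

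The heart of part~(1) is to control the true multiplier $[\boldsymbol{\lambda}_k^*]_j$ against each surrogate. Subtracting the \eqref{D_F} stationarity from the \eqref{D_F1} stationarity cancels the $b_0(t)$ term and, after multiplying by $\alpha_{k,a}^{(j)}(t)$ and summing over the support (which kills all inactive components, since $\alpha^{(j)}$ vanishes off the support), gives the exact identity
\begin{align*}
[\boldsymbol{\lambda}_k^*]_j-[\underline{\boldsymbol{\lambda}}_k^*]_j
&=\eta_k\sum_{a\in\mathcal{A}_k^{(j)}}\sum_{t\in\mathcal{H}_{k,a}^{(j)}}\alpha_{k,a}^{(j)}(t)\Big[\tfrac{\partial F_k}{\partial q_{k,a}(t)}(\underline{\textbf{r}}_k)-\tfrac{\partial F_k}{\partial q_{k,a}(t)}(\textbf{q}_k^{D*})\\
&\qquad\qquad+\mu_k\big(\underline{q}_{k,a}^{D*}(t)-\underline{r}_{k,a}(t)\big)\Big].
\end{align*}
Cauchy--Schwarz, with $\eta_k\|\boldsymbol{\alpha}^{(j)}\|=\sqrt{\eta_k}$ where $\boldsymbol{\alpha}^{(j)}$ collects the coefficients $\alpha_{k,a}^{(j)}(t)$ on the support, bounds the right-hand side by $\sqrt{\eta_k}\,\|\nabla F_k(\underline{\textbf{r}}_k)-\nabla F_k(\textbf{q}_k^{D*})+\mu_k(\underline{\textbf{q}}_k^{D*}-\underline{\textbf{r}}_k)\|$; applying the triangle inequality and the $L_k$-Lipschitz bound $\|\nabla F_k(\underline{\textbf{r}}_k)-\nabla F_k(\textbf{q}_k^{D*})\|\le L_k(\|\underline{\textbf{r}}_k\|+\|\textbf{q}_k^{D*}\|)$ reproduces exactly $H_{k,1}$ in \eqref{H1}. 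The identical subtraction against \eqref{D_F3} (using its $L_k$-quadratic stationarity) gives $H_{k,3}$ in \eqref{H3}. Each estimate is two-sided, so intersecting them and appending the dual-feasibility bound $[\boldsymbol{\lambda}_k^*]_j\ge0$ delivers \eqref{bounds}.

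Part~(2) is immediate from Property~\ref{EqualSolutions}: choosing $\underline{\textbf{r}}_k^*,\overline{\textbf{r}}_k^*$ with $\underline{\textbf{q}}_k^{D*}=\textbf{q}_k^{D*}=\overline{\textbf{q}}_k^{D*}$ lets me replace $\|\underline{\textbf{q}}_k^{D*}\|$ by $\|\textbf{q}_k^{D*}\|$ in \eqref{H1}, collapsing it to \eqref{H1new}, and likewise \eqref{H3} to \eqref{H3new}. The main obstacle is the bookkeeping in the triangle-inequality split above: $H_{k,1}$ deliberately mixes $\mu_k$ (from the quadratic term of the surrogate, contributing $\mu_k\|\underline{\textbf{q}}_k^{D*}\|$ and $\mu_k\|\underline{\textbf{r}}_k\|$) with $L_k$ (from bounding the \emph{true} gradient difference, contributing $L_k\|\textbf{q}_k^{D*}\|$ and $L_k\|\underline{\textbf{r}}_k\|$), so the terms must be grouped precisely to land on the stated constant rather than a looser one. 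The only other point needing care is justifying that the $\alpha^{(j)}$-weighted sum genuinely isolates a scalar identity in the single multiplier $[\boldsymbol{\lambda}_k^*]_j$; this is exactly where the hypothesis that only constraint $j$ is active (so $\boldsymbol{\lambda}_k^*$ has a single nonzero entry) is used.
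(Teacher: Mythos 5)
Your proposal is correct and follows essentially the same route as the paper's proof: decouple to prosumer $k$ via Property~\ref{Conv_De_Property}, subtract the first-order conditions of \eqref{D_F1} and \eqref{D_F3} from that of \eqref{D_F}, exploit that only the $j$-th multiplier is nonzero to reduce to a scalar estimate, and finish with the triangle inequality plus $L_k$-Lipschitz continuity to obtain $H_{k,1}$ and $H_{k,3}$, with the closed forms \eqref{lowerOptLamda}--\eqref{upperOptLamda} coming from the quadratic-surrogate version of \eqref{SuperShadow} and part (2) from Property~\ref{EqualSolutions}. The only cosmetic difference is that you work component-wise on the support of row $j$ and invoke Cauchy--Schwarz, whereas the paper computes $\|A_k^T(\underline{\boldsymbol{\lambda}}_k^*-\boldsymbol{\lambda}_k^*)\|$ directly in vector form; both yield the identical constants.
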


\begin{proof}[Proof] (1) By first order optimality conditions of \req{D_F1}, \req{D_F} and \req{D_F3}, we have
	\begin{align}
		\mu_k \underline{\textbf{q}}_k^{D*}- \mu_k \underline{\textbf{r}}_k+\nabla G_k( \underline{\textbf{r}}_k)+A_k^{T}\underline{\boldsymbol{\lambda}}_k^{*}=0,\label{foc1}\\
		\nabla G_k( \textbf{q}_k^{D*})+A_k^{T}\boldsymbol{\lambda}_k^{*}=0,\label{foc2}\\
		L_k \overline{\textbf{q}}_k^{D*}-L_k \overline{\textbf{r}}_k+\nabla G_k( \overline{\textbf{r}}_k)+A_k^{T}\overline{\boldsymbol{\lambda}}_k^{*}=0.\label{foc3}
	\end{align}
	
	Subtracting \eqref{foc2} from \eqref{foc1} and \eqref{foc3} respectively, we obtain
	\begin{align}
		A_{k}^{T}(\underline{\boldsymbol{\lambda}}^{*}_k - \boldsymbol{\lambda}^{*}_k)=&\nabla F_k( \textbf{q}_{k}^{D*})-\nabla F_k( \underline{\textbf{r}}_k)-\mu_k \underline{\textbf{q}}_k^{D*}+\mu_k \underline{\textbf{r}}_k,\label{KKT extract1}\\
		A_k^{T}(\overline{\boldsymbol{\lambda}}^{*}_k - \boldsymbol{\lambda}^{*}_k)=&\nabla F_k( \textbf{q}_k^{D*})-\nabla F_k( \overline{\textbf{r}}_k)-L_k \overline{\textbf{q}}_k^{D*}+L_k \overline{\textbf{r}}_k. \label{KKT extract2}
	\end{align}
	
	By using triangle inequality and definition \ref{Def2} we have\\
	\begin{align*}
		&\|A_k^{T}(\underline{\boldsymbol{\lambda}}^{*}_k - \boldsymbol{\lambda}^{*}_k)\| \leq \|\nabla F_k( \textbf{q}_k^{D*})-\nabla F_k( \underline{\textbf{r}}_k)\|+\|\mu_k\underline{\textbf{q}}_k^{D*}\| \nonumber
		\\
		&+\|\mu_k\underline{\textbf{r}}_k\|
		\leq L_k\|\textbf{q}_k^{D*}-\underline{\textbf{r}}_k\|+\mu_k\|\underline{\textbf{q}}_k^{D*}\|+\mu_k\|\underline{\textbf{r}}_k\| \nonumber
		\\
		&\leq \mu_k\|\underline{\textbf{q}}_k^{D*}\|+L_k\|\textbf{q}_k^{D*}\|+(\mu_k+L_k)\|\underline{\textbf{r}}_k\|,
	\end{align*}
	\begin{align*}
		&\|A_k^{T}(\overline{\boldsymbol{\lambda}}^{*}_k - \boldsymbol{\lambda}^{*}_k)\| 
		\leq \|\nabla F_k( \textbf{q}_k^{D*})-\nabla F_k( \overline{\textbf{r}}_k)\|+\|L_k\overline{\textbf{q}}_k^{D*}\| \nonumber
		\\
		&+\|L_k\overline{\textbf{r}}_k\| 
		\leq L_k\|\textbf{q}_k^{D*}-\overline{\textbf{r}}_k\|+L_k\|\overline{\textbf{q}}_k^{D*}\|+L_k\|\overline{\textbf{r}}_k\|\nonumber
		\\
		&\leq L_k\left(\|\textbf{q}_k^{D*}\|+\|\overline{\textbf{q}}_k^{D*}\|+2\|\overline{\textbf{r}}_k\|\right).
	\end{align*}
	Since problems \req{D_F1}, \req{D_F} and \req{D_F3} achieve optimality with the same $j^{th}$ constraint tight, by complimentary slackness conditions we have
	\begin{align}
		\left[\underline{\boldsymbol{\lambda}}_k^*\right]_l=\left[\boldsymbol{\lambda}_k^*\right]_l=[\overline{\boldsymbol{\lambda}}_k^*]_l=0, \forall l \neq j,\label{compl}
	\end{align}
	The above three relations imply
	\begin{align}
		&\|A_k^{T}(\underline{\boldsymbol{\lambda}}^{*}_k - \boldsymbol{\lambda}^{*}_k)\|=\left|\left[\underline{\boldsymbol{\lambda}}_k^*\right]_j-\left[\boldsymbol{\lambda}_k^*\right]_j\right|\sqrt{\sum_{a \in\mathcal{A}_{k}^{\left(j\right)}}\sum_{t \in \mathcal{H}_{k,a}^{\left(j\right)}} \left[\alpha_{k,a}^{\left(j\right)}(t)\right]^2}\nonumber
		\\
		&\leq\mu_k\|\underline{\textbf{q}}_k^{D*}\|+L_k\|\textbf{q}_k^{D*}\|+(\mu_k+L_k)\|\underline{\textbf{r}}_k\|,
		\\
		&\|A_k^{T}(\overline{\boldsymbol{\lambda}}^{*}_k - \boldsymbol{\lambda}^{*}_k)\|=\left|[\overline{\boldsymbol{\lambda}}_k^*]_j-\left[\boldsymbol{\lambda}_k^*\right]_j\right|\sqrt{\sum_{a \in\mathcal{A}_{k}^{\left(j\right)}}\sum_{t \in \mathcal{H}_{k,a}^{\left(j\right)}} \left[\alpha_{k,a}^{\left(j\right)}(t)\right]^2}\nonumber
		\\
		&\leq L_k\left(\|\textbf{q}_k^{D*}\|+\|\overline{\textbf{q}}_k^{D*}\|+2\|\overline{\textbf{r}}_k\|\right).
	\end{align}
	Multiplying $\eta_k$ on both sides of the above two inequalities, we have
	\begin{align}
		\left|\left[\underline{\boldsymbol{\lambda}}_k^*\right]_j-\left[\boldsymbol{\lambda}_k^*\right]_j\right|
		\leq& \sqrt{\eta_k}\left[\mu_k\|\underline{\textbf{q}}_k^{D*}\|+L_k\|\textbf{q}_k^{D*}\|\right. \nonumber
		\\
		&\left.+\left(\mu_k+L_k\right)\|\underline{\textbf{r}}_k\|\right],\nonumber\\
		\left|[\overline{\boldsymbol{\lambda}}_k^*]_j-\left[\boldsymbol{\lambda}_k^*\right]_j\right|
		\leq& \sqrt{\eta_k} L_k\left(\|\textbf{q}_k^{D*}\|+\|\overline{\textbf{q}}_k^{D*}\|+2\|\overline{\textbf{r}}_k\|\right).\nonumber
	\end{align}
	Because $\left[\boldsymbol{\lambda}_k^*\right]_j \geq 0$ and definitions of $H_{k,1}$ and $H_{k,3}$, we obtain \eqref{bounds}.\\
	Next, we prove equations \req{lowerOptLamda} and \req{upperOptLamda} hold. Due to similarity, here we only provide the proof of \req{lowerOptLamda}. We know $\left[\underline{\boldsymbol{\lambda}}_k^*\right]_j$ is the shadow price associated with the $j^{th}$ general linear constraint in optimization problem \req{D_F1}. Note that the objective function of problem \req{D_F1}, $\underbar{$G$}_k(\underbar{$\textbf{r}$}_k,\textbf{q}_k^{D})$ has quadratic form in terms of $\textbf{q}_k^{D}$, as shown below.
	\begin{align}
		\underbar{$G$}_k(\underbar{$\textbf{r}$}_k, \textbf{q}_k^{D})=&\frac{\mu_k}{2}\left(\textbf{q}_k^{D}\right)^T \textbf{q}_k^{D} +\left[\nabla F(\underbar{$\textbf{r}$}_k) + \widetilde{\textbf{b}}_0- M \underbar{$\textbf{r}$}_k\right]^{T}\textbf{q}_k^{D} \nonumber
		\\
		&+ \left[F_k(\underbar{$\textbf{r}$}_k)-\nabla F_k(\underbar{$\textbf{r}$}_k)^{T}\underbar{$\textbf{r}$}_k+\frac{\mu_k}{2}\|\underbar{$\textbf{r}$}_k\|^{2}\right],\label{lowerBoundObjQuad}
	\end{align}
	where $\widetilde{\textbf{b}}_0 =   \left( \textbf{b}_{0}, \forall a \in \mathcal{A}_k\right)$.
	Following the similar procedure to derive \req{SuperShadow} in section \ref{DSOsec}, we have the shadow price $\left[\underline{\boldsymbol{\lambda}}_k^*\right]_j$ for problem \req{F1} as below.
	\begin{align}
		&\left[\underline{\boldsymbol{\lambda}}_k^*\right]_j = \left[\frac{\sum_{a \in\mathcal{A}_{k}^{\left(j\right)}}\sum_{t \in \mathcal{H}_{k,a}^{\left(j\right)}}\frac{\alpha_{k,a}^{\left(j\right)}(t)\underline{\pi}_{k,a}(t)}{-\mu_k} + [\textbf{h}_i]_j}{\sum_{a \in\mathcal{A}_{k}^{\left(j\right)}}\sum_{t \in \mathcal{H}_{k,a}^{\left(j\right)}} \frac{\left[\alpha_{k,a}^{\left(j\right)}(t)\right]^2}{-\mu_k}}\right]^+,\label{ComparisonShadow}
	\end{align}
	where $\underline{\pi}_{k,a}(t)= \mu_k\underline{r}_{k,a}(t)- \frac{\partial F_k}{\partial q_{k,a}(t)}\left(\underline{r}_{k,a}(t)\right) - b_0(t)$. Multiplying $-\mu_k$ on both of the numerator and the denominator in the right hand side of equality \req{ComparisonShadow}, we obtain
	\begin{align}
		&\left[\underline{\boldsymbol{\lambda}}_k^*\right]_j =\eta_k\Bigg[\sum_{a \in\mathcal{A}_{k}^{\left(j\right)}}\sum_{t \in \mathcal{H}_{k,a}^{\left(j\right)}}\alpha_{k,a}^{\left(j\right)}(t)\underline{\pi}_{k,a}(t) - {\mu_k}[\textbf{h}_i]_j \Bigg]^+,\nonumber
	\end{align}
	which is equivalent to \req{lowerOptLamda}.
	\newline{(2)} By property \ref{EqualSolutions}, we can substitute $\textbf{q}_k^{D*}=\underline{\textbf{q}}_k^{D*}, \underline{\textbf{r}}_k=\underline{\textbf{r}}_k^{*}$ into \eqref{H1} and $\textbf{q}_k^{D*}=\overline{\textbf{q}}_k^{D*},  \overline{\textbf{r}}_k=\overline{\textbf{r}}_k^{*}$ into \eqref{H3}. Then we obtain \req{H1new} and \req{H3new}. This completes the proof.
\end{proof}
Based on the above theorem, we can substitute different values of $\underbar{$\textbf{r}$}_k$ and $\overline{\textbf{r}}_k$ to obtain various bounds of the shadow price $\left[\boldsymbol{\lambda}_k^*\right]_j$. The pair $\underline{\textbf{r}}_k^{*}= \textbf{q}_k^{D*}$ and $\overline{\textbf{r}}_k^{*} = \textbf{q}_k^{D*}$ would give us the exact bounds, but cannot provide us too much information due to terms cancelation. Another candidate is $\underline{\textbf{r}}_k=\bar{\textbf{r}}_k=0$, by which we can obtain the following simpler upper and lower bounds of the shadow price $\left[\boldsymbol{\lambda}_k^*\right]_j$. For the rest of the paper, we use $\underline{\textbf{q}}_k^{D*}$ and $\overline{\textbf{q}}_k^{D*}$ to denote the optimal solutions with $\underline{\textbf{r}}_k=\bar{\textbf{r}}_k=0$.

\begin{corollary}\label{cor:Dual Bounding1} In Theorem \ref{thm:Dual Bounding} (1), if $\underline{\textbf{r}}_k=\bar{\textbf{r}}_k=0$, then the lower bound of $\left[\boldsymbol{\lambda}_k^*\right]_j$, denoted by $\left[\boldsymbol{\lambda}_k^*\right]_j^{lower}$, could be further simplified as
	\begin{align}
	\max\Bigg\{&\eta_k\Bigg[\sum_{a \in\mathcal{A}_{k}^{\left(j\right)}}\sum_{t \in \mathcal{H}_{k,a}^{\left(j\right)}}\alpha_{k,a}^{\left(j\right)}(t)\left[-\frac{\partial F_k}{\partial q_{k,a}(t)}\left(0\right) - b_0(t)\right] \nonumber
	\\
	& - {\mu_k}[\textbf{h}_k]_j \Bigg]^+ -\sqrt{\eta_k}\left(\mu_k\|\underline{\textbf{q}}_k^{D*}\|+L_k\|\textbf{q}_k^{D*}\|\right),\nonumber
	\\
	&\eta_k\Bigg[\sum_{a \in\mathcal{A}_{k}^{\left(j\right)}}\sum_{t \in \mathcal{H}_{k,a}^{\left(j\right)}}\alpha_{k,a}^{\left(j\right)}(t)\left[-\frac{\partial F_k}{\partial q_{k,a}(t)}\left(0\right) - b_0(t)\right] \nonumber
	\\
	& - {L_k}[\textbf{h}_k]_j \Bigg]^+ -\sqrt{\eta_k}L_k\left(\|\overline{\textbf{q}}_k^{D*}\|+\|\textbf{q}_k^{D*}\|\right), \  0\Bigg\}; \label{r0NewLower}
	\end{align}
	the upper bound $\left[\boldsymbol{\lambda}_k^*\right]_j^{upper}$ of $\left[\boldsymbol{\lambda}_k^*\right]_j$ could be simplified as
	\begin{align}
	\min\Bigg\{&\eta_k\Bigg[\sum_{a \in\mathcal{A}_{k}^{\left(j\right)}}\sum_{t \in \mathcal{H}_{k,a}^{\left(j\right)}}\alpha_{k,a}^{\left(j\right)}(t)\left[-\frac{\partial F_k}{\partial q_{k,a}(t)}\left(0\right) - b_0(t)\right] \nonumber
	\\
	& - {\mu_k}[\textbf{h}_k]_j \Bigg]^+ +\sqrt{\eta_k}\left(\mu_k\|\underline{\textbf{q}}_k^{D*}\|+L_k\|\textbf{q}_k^{D*}\|\right),\nonumber
	\\
	&\eta_k\Bigg[\sum_{a \in\mathcal{A}_{k}^{\left(j\right)}}\sum_{t \in \mathcal{H}_{k,a}^{\left(j\right)}}\alpha_{k,a}^{\left(j\right)}(t)\left[-\frac{\partial F_k}{\partial q_{k,a}(t)}\left(0\right) - b_0(t)\right] \nonumber
	\\
	& - {L_k}[\textbf{h}_k]_j \Bigg]^+ +\sqrt{\eta_k}L_k\left(\|\overline{\textbf{q}}_k^{D*}\|+\|\textbf{q}_k^{D*}\|\right)\Bigg\}.\label{r0NewUpper}
	\end{align}
\end{corollary}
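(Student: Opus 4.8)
The plan is to obtain Corollary \ref{cor:Dual Bounding1} as a direct specialization of Theorem \ref{thm:Dual Bounding}(1) at the reference point $\underline{\textbf{r}}_k = \overline{\textbf{r}}_k = 0$. No fresh optimality or duality argument is required: the bounds \eqref{bounds} already hold for arbitrary $\underline{\textbf{r}}_k$ and $\overline{\textbf{r}}_k$, so it suffices to evaluate the four ingredients $[\underline{\boldsymbol{\lambda}}_k^*]_j$, $[\overline{\boldsymbol{\lambda}}_k^*]_j$, $H_{k,1}$ and $H_{k,3}$ at the origin and simplify.

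First I would substitute $\underline{r}_{k,a}(t) = 0$ into \eqref{lowerOptLamda}: the leading term $\mu_k \underline{r}_{k,a}(t)$ drops out and $\frac{\partial F_k}{\partial q_{k,a}(t)}(\underline{r}_{k,a}(t))$ collapses to its value at $0$, leaving precisely the bracketed $\mu_k$-expression appearing in \eqref{r0NewLower} and \eqref{r0NewUpper}. The analogous substitution $\overline{r}_{k,a}(t) = 0$ in \eqref{upperOptLamda} removes the $L_k \overline{r}_{k,a}(t)$ term and produces the same inner sum but carrying the coefficient $L_k$ on $[\textbf{h}_k]_j$. Setting $\|\underline{\textbf{r}}_k\| = 0$ in \eqref{H1} eliminates the $(\mu_k + L_k)\|\underline{\textbf{r}}_k\|$ summand, so that $H_{k,1} = \sqrt{\eta_k}(\mu_k\|\underline{\textbf{q}}_k^{D*}\| + L_k\|\textbf{q}_k^{D*}\|)$, and likewise $\|\overline{\textbf{r}}_k\| = 0$ in \eqref{H3} gives $H_{k,3} = \sqrt{\eta_k} L_k(\|\overline{\textbf{q}}_k^{D*}\| + \|\textbf{q}_k^{D*}\|)$, where by the convention stated just before the corollary the symbols $\underline{\textbf{q}}_k^{D*}$ and $\overline{\textbf{q}}_k^{D*}$ now denote the optimizers of \eqref{D_F1} and \eqref{D_F3} at the zero reference point. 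Plugging these four simplified quantities into $\max\{[\underline{\boldsymbol{\lambda}}_k^*]_j - H_{k,1}, [\overline{\boldsymbol{\lambda}}_k^*]_j - H_{k,3}, 0\}$ and $\min\{[\underline{\boldsymbol{\lambda}}_k^*]_j + H_{k,1}, [\overline{\boldsymbol{\lambda}}_k^*]_j + H_{k,3}\}$ reproduces \eqref{r0NewLower} and \eqref{r0NewUpper} term for term.

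Because the entire derivation is a substitution, there is no substantive obstacle; the only care needed is bookkeeping—treating $\underline{\textbf{q}}_k^{D*}$ and $\overline{\textbf{q}}_k^{D*}$ consistently as the zero-reference optimizers (rather than the generic ones of the theorem), and identifying the index $i$ in $[\textbf{h}_i]_j$ with the prosumer $k$ under analysis so that $[\textbf{h}_i]_j$ and $[\textbf{h}_k]_j$ agree throughout.
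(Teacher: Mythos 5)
Your proposal is correct and matches the paper's intent exactly: the paper states this corollary without a separate proof, treating it precisely as the direct substitution of $\underline{\textbf{r}}_k=\overline{\textbf{r}}_k=0$ into \eqref{lowerOptLamda}, \eqref{upperOptLamda}, \eqref{H1} and \eqref{H3} of Theorem \ref{thm:Dual Bounding}(1) that you describe. Your bookkeeping remarks (reading $[\textbf{h}_i]_j$ as $[\textbf{h}_k]_j$ and interpreting $\underline{\textbf{q}}_k^{D*}$, $\overline{\textbf{q}}_k^{D*}$ as the zero-reference optimizers per the convention stated just before the corollary) are also consistent with the paper.
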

We notice that the bounds are related to prosumer $k$'s initial utility increasing rate $- \frac{\partial F_k}{\partial q_{k,a}(t)}\left(0\right)$, which is positive by Assumption \ref{convexAssumption}.

\begin{corollary}\label{cor:Dual Bounding2} In theorem \ref{thm:Dual Bounding} (1), if $\underline{\textbf{r}}_k=\bar{\textbf{r}}_k=0$ and $[\textbf{h}_k]_j = 0$, then we can further simplify the bounds in Corollary \ref{cor:Dual Bounding1}, where the lower bound $\left[\boldsymbol{\lambda}_k^*\right]_j^{lower}$ of $\left[\boldsymbol{\lambda}_k^*\right]_j$ becomes
	\begin{align}
	&\left[\eta_k\Bigg[\sum_{a \in\mathcal{A}_{k}^{\left(j\right)}}\sum_{t \in \mathcal{H}_{k,a}^{\left(j\right)}}\alpha_{k,a}^{\left(j\right)}(t)\left[- \frac{\partial F_k}{\partial q_{k,a}(t)}\left(0\right) - b_0(t)  \right]  \Bigg]^+\right. \nonumber 
	\\
	&\ \left.   -     \sqrt{\eta_k}L_k\left(\|\overline{\textbf{q}}_k^{D*}\|+\|\textbf{q}_k^{D*}\|\right)\right]^+;\label{h0NewLower}
	\end{align}
	the upper bound $\left[\boldsymbol{\lambda}_k^*\right]_j^{upper}$ of $\left[\boldsymbol{\lambda}_k^*\right]_j$ becomes
	\begin{align}
	&\eta_k\Bigg[\sum_{a \in\mathcal{A}_{k}^{\left(j\right)}}\sum_{t \in \mathcal{H}_{k,a}^{\left(j\right)}}\alpha_{k,a}^{\left(j\right)}(t)\left[- \frac{\partial F_k}{\partial q_{k,a}(t)}\left(0\right) - b_0(t) \right]  \Bigg]^+ \nonumber
	\\
	&\ +\sqrt{\eta_k}L_k\left(\|\overline{\textbf{q}}_k^{D*}\|+\|\textbf{q}_k^{D*}\|\right).\label{h0NewUpper}
	\end{align}
\end{corollary}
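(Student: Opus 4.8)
The plan is to obtain Corollary~\ref{cor:Dual Bounding2} as an immediate specialization of Corollary~\ref{cor:Dual Bounding1}, invoking only the substitution $[\textbf{h}_k]_j=0$ together with the elementary monotonicity of $\max$ and $\min$ under deletion of arguments. First I would set $[\textbf{h}_k]_j=0$ in the bounds of Corollary~\ref{cor:Dual Bounding1}. This kills the terms $\mu_k[\textbf{h}_k]_j$ and $L_k[\textbf{h}_k]_j$, so the two bracketed $[\,\cdot\,]^+$ quantities that distinguish the $\mu_k$-branch from the $L_k$-branch collapse to a single common value, namely $P:=\eta_k\big[\sum_{a\in\mathcal{A}_k^{(j)}}\sum_{t\in\mathcal{H}_{k,a}^{(j)}}\alpha_{k,a}^{(j)}(t)\big(-\tfrac{\partial F_k}{\partial q_{k,a}(t)}(0)-b_0(t)\big)\big]^+\ge 0$. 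Writing the two perturbation gaps as $A_1:=\sqrt{\eta_k}\big(\mu_k\|\underline{\textbf{q}}_k^{D*}\|+L_k\|\textbf{q}_k^{D*}\|\big)$ (the $\underline{\textbf{r}}_k=0$ form of $H_{k,1}$) and $A_3:=\sqrt{\eta_k}L_k\big(\|\overline{\textbf{q}}_k^{D*}\|+\|\textbf{q}_k^{D*}\|\big)$ (the $\overline{\textbf{r}}_k=0$ form of $H_{k,3}$), both nonnegative, the Corollary~\ref{cor:Dual Bounding1} lower bound becomes $\max\{P-A_1,\,P-A_3,\,0\}$ and its upper bound becomes $\min\{P+A_1,\,P+A_3\}$.

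For the lower bound I would simply retain the $L_k$-branch together with the zero floor. Since $\max\{P-A_3,0\}$ is obtained from $\max\{P-A_1,P-A_3,0\}$ by deleting the argument $P-A_1$, deletion can only lower the maximum, so $[P-A_3]^+=\max\{P-A_3,0\}\le\max\{P-A_1,P-A_3,0\}\le[\boldsymbol{\lambda}_k^*]_j$, where the final inequality is exactly the lower-bound assertion of Corollary~\ref{cor:Dual Bounding1}. Hence $[P-A_3]^+$, which is the expression claimed in~\eqref{h0NewLower}, is still a valid (if looser) lower bound. Symmetrically, deleting the argument $P+A_1$ from $\min\{P+A_1,P+A_3\}$ can only raise the minimum, so $[\boldsymbol{\lambda}_k^*]_j\le\min\{P+A_1,P+A_3\}\le P+A_3$, and $P+A_3$ is precisely the expression in~\eqref{h0NewUpper}.

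The only points requiring care are the bookkeeping that with $[\textbf{h}_k]_j=0$ the $\mu_k$- and $L_k$-branches genuinely share the identical constant $P$ (so that $A_1$ and $A_3$ are the sole difference between the branches), identifying the outer $[\,\cdot\,]^+$ in~\eqref{h0NewLower} with the third argument $0$ of the outer $\max$, and keeping the monotonicity directions straight: we retain the $H_{k,3}$ (i.e.\ $L_k$) branch and discard the $H_{k,1}$ (i.e.\ $\mu_k$) branch, which weakens neither bound's validity. I do not anticipate any analytic obstacle; this is pure inequality bookkeeping layered on top of Theorem~\ref{thm:Dual Bounding} and Corollary~\ref{cor:Dual Bounding1}. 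I would also note, but not pursue, that the retained $L_k$-branch would coincide exactly with the full Corollary~\ref{cor:Dual Bounding1} bounds only if $A_3\le A_1$; since the corollary asserts only valid simplified bounds, the monotonicity argument already suffices.
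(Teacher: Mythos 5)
Your deletion/monotonicity argument is internally sound and does establish that the two displayed expressions are valid bounds on $\left[\boldsymbol{\lambda}_k^*\right]_j$: with $[\textbf{h}_k]_j=0$ the $\mu_k$- and $L_k$-branches share the common constant $P$, and discarding one argument of the outer $\max$ (resp.\ $\min$) can only weaken the lower (resp.\ upper) bound. But this is a genuinely different, and strictly weaker, route than the paper's, and it does not quite deliver the statement as written. The corollary asserts that the Corollary \ref{cor:Dual Bounding1} bounds \emph{become} \eqref{h0NewLower} and \eqref{h0NewUpper}, i.e.\ that the simplification is exact. The paper gets this by substituting $\underline{\textbf{r}}_k=\overline{\textbf{r}}_k=0$ and $[\textbf{h}_k]_j=0$ into \eqref{lowerOptLamda} and \eqref{upperOptLamda} to conclude $\left[\underline{\boldsymbol{\lambda}}_k^*\right]_j=[\overline{\boldsymbol{\lambda}}_k^*]_j$, hence via \eqref{compl} that $\underline{\boldsymbol{\lambda}}_k^*=\overline{\boldsymbol{\lambda}}_k^*$; feeding this back into \eqref{KKT extract1} and \eqref{KKT extract2} gives $\mu_k\underline{\textbf{q}}_k^{D*}=L_k\overline{\textbf{q}}_k^{D*}$ as vectors, so $\mu_k\|\underline{\textbf{q}}_k^{D*}\|=L_k\|\overline{\textbf{q}}_k^{D*}\|$ and your two gaps satisfy $A_1=A_3$ exactly. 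The $\max$/$\min$ then degenerates with no loss, which is precisely the missing ingredient your proposal flags but declines to pursue. What each approach buys: yours avoids revisiting the KKT system entirely and suffices for every downstream use of the corollary (the EV and net-selling applications and the case study consume \eqref{h0NewLower}--\eqref{h0NewUpper} only as valid bounds), while the paper's extra step is needed to justify the literal claim that the Corollary \ref{cor:Dual Bounding1} bounds reduce to these forms rather than merely being relaxed to them. If you intend to prove the statement as stated, add the observation $A_1=A_3$; otherwise you should explicitly reword the conclusion to ``the following are (possibly looser) valid bounds.''
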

\begin{proof} By observing \eqref{KKT extract1} and \eqref{KKT extract2} and specifying $\underline{\textbf{r}}=\overline{\textbf{r}}=0$, we can obtain
	\begin{align*}
	\mu_k \underline{\textbf{q}}_k^{D*} &= \nabla F_k( \textbf{q}_{k}^{D*})-\nabla F_k(0)-A_{k}^{T}(\underline{\boldsymbol{\lambda}}^{*}_k-\boldsymbol{\lambda}^{*}_k),\\
	L_k \overline{\textbf{q}}_k^{D*} &= \nabla F_k( \textbf{q}_k^{D*})-\nabla F_k(0) - A_k^{T}(\overline{\boldsymbol{\lambda}}^{*}_k - \boldsymbol{\lambda}^{*}_k).
	\end{align*}
	Substituting $\underline{\textbf{r}}_k=\overline{\textbf{r}}_k=0$ and $[\textbf{h}_k]_j = 0$ into \req{lowerOptLamda} and \req{upperOptLamda}, we can have $\left[\underline{\boldsymbol{\lambda}}_k^*\right]_j = [\overline{\boldsymbol{\lambda}}_k^*]_j$. Combining with \req{compl}, we obtain $\underline{\boldsymbol{\lambda}}^*_k = \overline{\boldsymbol{\lambda}}^*_k$, which leads to $\mu_k \underline{\textbf{q}}_k^{D*} = L_k \overline{\textbf{q}}_k^{D*}$ and thus $\mu_k\|\underline{\textbf{q}}_k^{D*}\| = L_k\|\overline{\textbf{q}}_k^{D*}\|$. By directly substituting this equation and $[\textbf{h}_k]_j = 0$ into \eqref{r0NewLower} and \eqref{r0NewUpper}, we can obtain \req{h0NewLower} and \req{h0NewUpper}.
\end{proof}
In the general convex settings, the closed-form upper and lower bounds of the shadow price in the above Dual Bounding Theorem \ref{thm:Dual Bounding}, its Corollaries \ref{cor:Dual Bounding1} and \ref{cor:Dual Bounding2}, and Property \ref{Conv_De_Property} provide foundations for GSAA, which can be used as a measuring tool to estimate the ranges of prosumer $k$'s contribution potential on social welfare with change of the resource capacity in DR. More specifically, interval $\left[K\left[\boldsymbol{\lambda}_k^*\right]_j^{lower}, K\left[\boldsymbol{\lambda}_k^*\right]_j^{upper}\right]$ shows the potential range of prosumer $k$'s contribution on social welfare by increasing his/her resource capacity $[\textbf{h}_k]_j$ by $K$ units. Similar to the quadratic setting, we next answer the question of which prosumer should be allocated the additional resource capacity to maximize the improvement in social welfare, when there is a limited budget in expanding resource capacity. We do so by comparing these potential ranges. When one's potential range is strictly higher than the other's, then that prosume should get priority. In case there is overlap, then the result may be indecisive. In such case, further analysis is required.

\subsection{GSAA Applications in General Convex Settings}
Similar to section \ref{GSAA-Q app}, here we provide three application examples: enlarging AC comfort zone, allowing net-selling behavior and allowing EVs to discharge. However, different from section \ref{GSAA-Q app}, in general convex settings GSAA only gives us closed-form upper and lower bounds of the shadow price instead of the closed-form shadow price itself. Nevertheless, under certain conditions, we can identify the prosumer with most significant impact on social welfare based on these bounds. 
\subsubsection{GSAA for Enlarging AC Comfort Zone} 
Prosumer $k$'s AC (labeled by $AC$) comfort zone constraint is given by \req{ACoriginal}. Take the right inequality of \req{ACoriginal}, i.e., the upper limit constraint, as an example. Similar to Section \ref{GSAA_AC_Q}, we can reformulate it into the $j^{th}$ general linear constraint for prosumer $k$. By using GSAA, i.e., substituting specific parameters of the constraint into Corollary \ref{cor:Dual Bounding1}, we can obtain upper and lower bounds of the shadow price for increasing AC comfort zone's upper limit, where the lower bound $\underline{\lambda}_{AC}^*$ of the shadow price $\lambda_{AC}^*$ is
\begin{align*}
&\max\Bigg\{\eta_{k}\Bigg[\sum_{\tau = 1}^{t}\alpha_{k,AC}(\tau)\left(- \frac{\partial F_k}{\partial q_{k,AC}(\tau)}\left(0\right) - b_0(\tau)  \right) \nonumber\\
&- {\mu_k}\overline{Q}_{k,AC}^D   \Bigg]^+ -\sqrt{\eta_{k}} \left(\mu_k\|\underline{\textbf{q}}_k^{D*}\| + L_k\|\textbf{q}_k^{D*}\|\right),\nonumber \\ 
&\eta_{k}\Bigg[\sum_{\tau = 1}^{t}\alpha_{k,AC}(\tau)\left(- \frac{\partial F_k}{\partial q_{k,AC}(\tau)}\left(0\right) - b_0(\tau)  \right) \nonumber\\
&- {L_k}\overline{Q}_{k,AC}^D   \Bigg]^+ -\sqrt{\eta_{k}} L_k\left(\|\overline{\textbf{q}}_k^{D*}\|+\|\textbf{q}_k^{D*}\|\right), 0\Bigg\},
\end{align*}
where $\eta_{k} = \left[\sum_{\tau = 1}^{t} \alpha_{k,AC}^2(\tau)\right]^{-1}$; the upper bound $\overline{\lambda}_{AC}^*$ of the shadow price is
\begin{align*}
&\min\Bigg\{\eta_{k}\Bigg[\sum_{\tau = 1}^{t}\alpha_{k,AC}(\tau)\left(- \frac{\partial F_k}{\partial q_{k,AC}(\tau)}\left(0\right) - b_0(\tau)  \right) \nonumber\\
&- {\mu_k}\overline{Q}_{k,AC}^D   \Bigg]^+ +\sqrt{\eta_{k}} \left(\mu_k\|\underline{\textbf{q}}_k^{D*}\| + L_k\|\textbf{q}_k^{D*}\|\right),\nonumber \\ 
&\eta_{k}\Bigg[\sum_{\tau = 1}^{t}\alpha_{k,AC}(\tau)\left(- \frac{\partial F_k}{\partial q_{k,AC}(\tau)}\left(0\right) - b_0(\tau)  \right) \nonumber\\
&- {L_k}\overline{Q}_{k,AC}^D   \Bigg]^+ +\sqrt{\eta_{k}} L_k\left(\|\overline{\textbf{q}}_k^{D*}\|+\|\textbf{q}_k^{D*}\|\right)\Bigg\}.
\end{align*}

We note that unlike \req{sen1} in the quadratic settings, the above bounds do not need information about the specific forms or coefficients of the net utility function. Instead, it requires several key values to complete the calculation, the initial utility increasing rate $- \frac{\partial F_k}{\partial q_{k,AC}(\tau)}\left(0\right)$, the optimal solutions of problems (I) and (III), i.e., $\underline{\textbf{q}}_k^{D*}$ and $\overline{\textbf{q}}_k^{D*}$, and the operating point $\textbf{q}_k^{D*}$\footnote{We could obtain the initial utility increasing rate by learning from historical data or by surveys or tests for prosumers, obtain optimal solutions of problems (I) and (III) by solving quadratic programming problems, and obtain the operating point by direct measuring or reading from meters. The implementation details, however, are out the scope of this paper. }. 

For two prosumers with similar utility functions and similar AC parameters $(\alpha_{i,AC}\left(\tau\right))$, everything else equal, the one accepts a higher temperature, reflected as larger $\overline{Q}_{i,AC}^D$, has a relatively lower range of contribution potential, just as in Section \ref{GSAA_AC_Q}. However, in this case, it is harder to guarantee all other parameters are equal, as the ranges depends on $\overline{\textbf{q}}_k^{D*}$ and $\underline{\textbf{q}}_k^{D*}$. 

\subsubsection{GSAA for Allowing Net Selling }
Allowing prosumer $k$ (a net buyer) to be a net-seller in DR is equivalent to relaxing the net buying constraint \req{NetSeller}, which can be reformulated as the $j$th general linear constraint for prosumer $k$ (similar to Section \ref{GSAA_NS_Q}). By using GSAA i.e., substituting specific parameters of the constraint to Corollary \ref{cor:Dual Bounding2}, we can obtain the lower bound $\underline{\lambda}_{NS}^*$ of the shadow price $\lambda_{NS}^*$ for allowing net selling as
\begin{align*}
\left[\frac{\left[\sum_{a \in\mathcal{A}_{k}}\left(b_0(t)+\frac{\partial F_k}{\partial q_{k,a}(t)}\left(0\right) \right)\right]^+}{|\mathcal{A}_k|}-\frac{L_k\left(\|\overline{\textbf{q}}_k^{D*}\|+\|\textbf{q}_k^{D*}\|\right)}{\sqrt{|\mathcal{A}_{k}|}}\right]^+;
\end{align*}
and the upper bound $\overline{\lambda}_{NS}^*$ of the shadow price as
\begin{align*}
&\frac{\left[\sum_{a \in\mathcal{A}_{k}}\left(b_0(t)+\frac{\partial F_k}{\partial q_{k,a}(t)}\left(0\right) \right)\right]^+}{|\mathcal{A}_k|}+\frac{L_k\left(\|\overline{\textbf{q}}_k^{D*}\|+\|\textbf{q}_k^{D*}\|\right)}{\sqrt{|\mathcal{A}_{k}|}}.
\end{align*}

From the above bounds, we can observe that the effect of allowing net selling on social welfare improvement is not due to a single appliance. Instead, it is affected by a combination of all appliances, especially their initial utility increasing rates $-\frac{\partial F_k}{\partial q_{k,a}(t)}\left(0\right)$, which implies the high complexity of accurate sensitivity analysis.

For two prosumers, if both of their average initial utility increasing rates are no more than the utility company's cost and all other properties are similar (i.e., similar amounts of appliances, $\textbf{q}_k^{D*}$ , $L_k$ and $\overline{\textbf{q}}_k^{D*}$), then the one with a lower average initial utility increasing rate has a relatively higher range of contribution potential on social welfare by allowing them to net sell K units of energy.
\subsubsection{Effects of Allowing EVs to Discharge} 
Allowing prosumer $k$'s EV (labeled by $EV$) to discharge at time period $t$ is equivalent to relaxing the constraint $q_{i,a}^D\left(t\right) \geq 0$. Similarly, the constraint can be reformulated as the $j$th general linear constraint for prosumer $k$ (see Section \ref{GSAA_EV_Q}). To obtain the upper and lower bounds of the shadow price $\lambda_{EV}^*$ associated with the constraint, we can use GSAA, i.e., substituting parameters of the constraint into Corollary \ref{cor:Dual Bounding2}. The lower bound  $\underline{\lambda}_{EV}^*$ becomes
\begin{align}
\left[\left[b_0(t)+\frac{\partial F_k}{\partial q_{k,EV}(t)}\left(0\right)\right]^+ -L_k\left(\|\overline{\textbf{q}}_k^{D*}\|+\|\textbf{q}_k^{D*}\|\right)\right]^+;\label{EVlower}
\end{align}
the upper bound $\overline{\lambda}_{EV}^*$ has the following form
\begin{align}
&\left[b_0(t)+\frac{\partial F_k}{\partial q_{k,EV}(t)}\left(0\right)\right]^+ +L_k\left(\|\overline{\textbf{q}}_k^{D*}\|+\|\textbf{q}_k^{D*}\|\right).\label{EVupper}
\end{align}

We can see that the above shadow price is monotonically increasing in the current utility company's production cost and monotonically decreasing against the prosumer $k$'s initial utility increasing rate (when other parameters are fixed).

For two prosumers with similar $\textbf{q}_k^{D*}$ , $L_k$ and $\overline{\textbf{q}}_k^{D*}$, the one with a lower initial utility increasing rate $-\frac{\partial F_k}{\partial q_{k,EV}(t)}\left(0\right)$ (not exceeding the utility company's cost) has a relatively higher range of contribution potential.

\subsection{Case Study: Two Prosumers with EVs }\label{Case Study}
Consider a simple one-time-period case (i.e., $t \in \mathcal{H}=\{1\}$), where we want to compare two prosumers, $l$ and $k$, on their potential to benefit social welfare by discharging their EVs. We assume their negative net utility functions are related by a constant factor, i.e.,
\begin{align}
F_l\left(\cdot \right) = \beta F_k\left(\cdot \right), \ (\beta \in \mathbb{R}^+),\label{scaledUtility}
\end{align}
and at optimal solution of their respective (\ref{opt:Prosumer}) problems, only EV discharging constraints $-q_{k,EV}^{D}\left(t\right)\leq 0$ for both $k$ and $l$ are tight. We next introduce a corollary based on \eqref{scaledUtility}.
\begin{corollary}
	If prosumer $l$'s and prosumer $k$'s negative net utility functions satisfy
	$F_l\left(\cdot \right) = \beta F_k\left(\cdot \right)$, where  $\beta \in \mathbb{R}^+, F_k\left(\cdot \right)$ is $\mu_k$-strongly convex and has $L_k$-Lipschitz continuous gradient and $\underline{r}_l=\overline{r}_l=0$ satisfy the assumption in Theorem \ref{thm:Dual Bounding} (1), i.e., for prosumer $l$, only the $g^{th}$ general linear constraint being tight at the optimality of problems \req{D_F1}, \req{D_F} and \req{D_F3}, then the optimal solutions $\underline{q}_l^{D*}$ and $\overline{q}_l^{D*}$ for prosumer $l$'s problems \req{D_F1} and \req{D_F3}  have the following forms
	\begin{gather}
	\underline{\textbf{q}}_l^{D*} =  - \frac{\nabla F_k( 0)}{\mu_k} - \frac{\textbf{b}_{0,l}}{\beta \mu_k} - \frac{\textbf{e}_l \left[\underline{\boldsymbol{\lambda}}_l^*\right]_g}{\beta \mu_k},\label{Lem_qLower}\\
	\overline{\textbf{q}}_l^{D*} =  - \frac{\nabla F_k( 0)}{L_k} - \frac{\textbf{b}_{0,l}}{\beta L_k} - \frac{\textbf{e}_l [\overline{\boldsymbol{\lambda}}_l^*]_g}{\beta L_k}.\label{Lem_qUpper}
	\end{gather}
	where $\textbf{b}_{0,l} = \left(\textbf{b}_0, \forall a \in \mathcal{A}_l \right)$; $\left[\underline{\boldsymbol{\lambda}}_l^*\right]_g$ and $[\overline{\boldsymbol{\lambda}}_l^*]_g$ are shadow prices associated with the $g^{th}$ general linear constraint for prosumer $l$ $\sum_{a \in\mathcal{A}_{l}^{\left(g\right)}} \sum_{t \in \mathcal{H}_{l,a}^{\left(g\right)}}$ $\alpha_{l,a}^{\left(g\right)}\left(t\right)q_{l,a}^{D}\left(t\right) \leq [\textbf{h}_l]_g$,
	\begin{align}
	\left[\underline{\boldsymbol{\lambda}}_l^*\right]_g =&\theta_l\Bigg[\sum_{a \in\mathcal{A}_{l}^{\left(g\right)}}\sum_{t \in \mathcal{H}_{l,a}^{\left(g\right)}}\alpha_{l,a}^{\left(g\right)}(t)\left[ - \beta \frac{\partial F_k}{\partial q_{l,a}(t)}\left(0\right) - b_0(t)  \right] \nonumber
	\\
	&- {\beta\mu_k}[\textbf{h}_l]_g  \Bigg]^+, \label{lamda_l_lower}
	\\
	[\overline{\boldsymbol{\lambda}}_l^*]_g =& \theta_l\Bigg[\sum_{a \in\mathcal{A}_{l}^{\left(g\right)}}\sum_{t \in \mathcal{H}_{l,a}^{\left(g\right)}}\alpha_{l,a}^{\left(g\right)}(t)\left[ - \beta \frac{\partial F_k}{\partial q_{l,a}(t)}\left(0\right) - b_0(t)  \right] \nonumber
	\\
	&- \beta {L_k}[\textbf{h}_l]_g  \Bigg]^+ \label{lamda_l_upper},
	\\
	\theta_l =&\bigg\{\sum_{a \in\mathcal{A}_{l}^{\left(g\right)}}\sum_{t \in \mathcal{H}_{l,a}^{\left(g\right)}} \left[\alpha_{l,a}^{\left(g\right)}(t)\right]^2\bigg\}^{-1},\nonumber
	\end{align}
	and,
	\begin{gather}
	\textbf{e}_l =  \left( \left( \widetilde{\alpha}_{l,a}^{\left(g\right)}\left(t\right), \forall t \right), \forall a \in \mathcal{A}_l^{\left(g\right)} \right),\nonumber
	\end{gather}
	for which
	\begin{gather}
	\widetilde{\alpha}_{l,a}^{\left(g\right)}\left(t\right) = \begin{cases}
	\alpha_{l,a}^{\left(g\right)}\left(t\right)  & \text{if } t \in \mathcal{H}_{l,a}^{\left(g\right)} \\
	0  & \text{otherwise } 
	\end{cases}.\nonumber
	\end{gather}
\end{corollary}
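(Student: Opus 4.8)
The plan is to specialize the machinery of Theorem~\ref{thm:Dual Bounding} to prosumer $l$ and let the scaling relation \req{scaledUtility} do all the work. First I would extract the three consequences of $F_l = \beta F_k$ with $\beta > 0$: since $\beta(F_k - \tfrac{\mu_k}{2}\|\cdot\|^2)$ is convex, $F_l$ is $\beta\mu_k$-strongly convex; since $\|\nabla F_l(\textbf{x}) - \nabla F_l(\textbf{y})\| = \beta\|\nabla F_k(\textbf{x}) - \nabla F_k(\textbf{y})\|$, the gradient of $F_l$ is $\beta L_k$-Lipschitz; and $\nabla F_l = \beta\nabla F_k$ pointwise. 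Thus prosumer $l$ plays the role of the generic prosumer in Theorem~\ref{thm:Dual Bounding} with effective moduli $\mu_l = \beta\mu_k$ and $L_l = \beta L_k$, and these are the only points at which \req{scaledUtility} enters.

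Next I would write the first-order optimality conditions \req{foc1} and \req{foc3} for prosumer $l$'s problems \req{D_F1} and \req{D_F3}, and impose $\underline{\textbf{r}}_l = \overline{\textbf{r}}_l = 0$. Using $\nabla G_l(0) = \nabla F_l(0) + \textbf{b}_{0,l} = \beta\nabla F_k(0) + \textbf{b}_{0,l}$ together with the scaled moduli, these collapse to
\begin{align}
\beta\mu_k\underline{\textbf{q}}_l^{D*} + \beta\nabla F_k(0) + \textbf{b}_{0,l} + A_l^{T}\underline{\boldsymbol{\lambda}}_l^{*} &= 0,\nonumber\\
\beta L_k\overline{\textbf{q}}_l^{D*} + \beta\nabla F_k(0) + \textbf{b}_{0,l} + A_l^{T}\overline{\boldsymbol{\lambda}}_l^{*} &= 0.\nonumber
\end{align}
Since only the $g$-th constraint is tight, complementary slackness \req{compl} annihilates every dual coordinate but the $g$-th, so $A_l^{T}\underline{\boldsymbol{\lambda}}_l^{*} = \textbf{e}_l[\underline{\boldsymbol{\lambda}}_l^{*}]_g$ and $A_l^{T}\overline{\boldsymbol{\lambda}}_l^{*} = \textbf{e}_l[\overline{\boldsymbol{\lambda}}_l^{*}]_g$, where $\textbf{e}_l$ is precisely the $g$-th row of $A_l$ spelled out with the indicator coefficients $\widetilde{\alpha}_{l,a}^{(g)}(t)$ of the statement. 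Solving each line for the primal optimum gives \req{Lem_qLower} and \req{Lem_qUpper} directly.

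Finally I would read off the shadow prices by invoking formulas \req{lowerOptLamda} and \req{upperOptLamda} of Theorem~\ref{thm:Dual Bounding} at prosumer $l$, whose normalizer $\eta_l$ coincides with $\theta_l$. Substituting $\underline{r}_{l,a}(t) = \overline{r}_{l,a}(t) = 0$, the scaled moduli $\mu_l = \beta\mu_k$ and $L_l = \beta L_k$, and $\frac{\partial F_l}{\partial q_{l,a}(t)}(0) = \beta\frac{\partial F_k}{\partial q_{l,a}(t)}(0)$ reduces those expressions to \req{lamda_l_lower} and \req{lamda_l_upper}. I expect the only real hazard to be bookkeeping: one must keep $\beta$ consistently attached to $\mu_k$, $L_k$ and to the gradient of $F_k$, and must verify that $\textbf{e}_l$ is exactly the single nonzero column of $A_l^{T}$ picked out by the active constraint; once those two checks are in place, the four displayed identities follow by elementary algebra.
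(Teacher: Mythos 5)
Your proposal is correct and follows essentially the same route as the paper's proof: derive the scaled moduli $\beta\mu_k$, $\beta L_k$ and gradient $\beta\nabla F_k$ from \eqref{scaledUtility}, write the first-order conditions of \eqref{D_F1} and \eqref{D_F3} at $\underline{\textbf{r}}_l=\overline{\textbf{r}}_l=0$, use complementary slackness to reduce $A_l^{T}\underline{\boldsymbol{\lambda}}_l^{*}$ to $\textbf{e}_l[\underline{\boldsymbol{\lambda}}_l^{*}]_g$, solve for the primal optima, and then specialize \eqref{lowerOptLamda}--\eqref{upperOptLamda} to get the shadow-price formulas. The only cosmetic difference is that you carry both the $\mu$- and $L$-cases explicitly while the paper writes out only the lower one and declares the other symmetric.
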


\begin{proof}
	Since $F_k\left(\cdot \right)$ is $\mu_k$-strongly convex and has $L_k$-Lipschitz continuous gradient, $F_l\left(\cdot \right) = \beta F_k\left(\cdot \right)$ is $\beta \mu_k$-strongly convex and has $\beta L_k$-Lipschitz continuous gradient. By first order optimality condition of problem \req{D_F1} for prosumer $l$, we have
	\begin{align}
	\beta \mu_k \underline{\textbf{q}}^{D*}- \beta \mu_k \underline{\textbf{r}}+\nabla G_l( \underline{\textbf{r}})+A_l^{T}\underline{\boldsymbol{\lambda}}_l^{*}=0.\nonumber
	\end{align}
	Substituting $\underline{\textbf{r}}_l=0$ to the above equation, we have
	\begin{align*}
	\beta \mu_k \underline{\textbf{q}}_l^{D*} + \nabla F_l(0) + \textbf{b}_{0,l} +A_l^{T}\underline{\boldsymbol{\lambda}}_l^{*}=0.
	\end{align*}
	Using $F_l\left(\cdot \right) = \beta F_k\left(\cdot \right)$, we obtain
	\begin{align}
	\beta \mu_k \underline{\textbf{q}}_l^{D*} + \beta \nabla F_k(0) + \textbf{b}_{0,l} +A_l^{T}\underline{\boldsymbol{\lambda}}_l^{*}=0.\label{FOC_betaMu}
	\end{align}
	By complimentary slackness conditions, we have 
	\begin{align}
	A_l^{T}\underline{\boldsymbol{\lambda}}_l^{*} = \textbf{e}_l \left[\underline{\boldsymbol{\lambda}}_l^*\right]_g.\label{compl2}
	\end{align}
	Substituting \eqref{compl2} to \eqref{FOC_betaMu} and with some basic calculations, we can obtain \req{Lem_qLower}. Finally, using \eqref{lowerOptLamda} in Dual Bounding Theorem \ref{thm:Dual Bounding}, we can obtain \eqref{lamda_l_lower} by replacing subscript  $j$ with $g$, replacing $k$ with $l$, then substituting $\underline{\textbf{r}}_l=0$,  $\nabla F_l(0) = \beta \nabla F_k(0)$ and $\mu_l = \beta \mu_k$. The proof for \eqref{Lem_qUpper} is omitted since the process is similar.
\end{proof}
We will next focus on analyzing the constraint of the kind $-q_{k,EV}^{D}\left(t\right)\leq 0$, which corresponds to the $r^{th}$ constraint for prosumer $k$ and $g^{th}$ for prosumer $l$.

\subsubsection{Example ($\beta > 1$)}
Prosumer $k$ can contribute more on social welfare than prosumer $l$ in EV discharging, even when $k$'s utility is less than $l$'s utility. This case is shown by the following lemma.
\begin{lemma}\label{ExLemma1}
	If $\beta$ satisfies the following condition
	\begin{gather}
	1 < \frac{ - \nabla F_k(0) + L_k \| \textbf{q}_k^{D*} \| } { - \nabla F_k(0) - L_k \| \textbf{q}_l^{D*} \| }  < \beta < \frac{ b_0(t) } { - \nabla F_k(0)}, \label{example1}
	\end{gather}
	then the shadow price $\left[\boldsymbol{\lambda}_k^*\right]_r$ associated with $-q_{k,a}^{D}\left(t\right)\leq 0$ (the $g^{th}$ general linear constraint for prosumer $k$), is larger than the shadow price $\left[\boldsymbol{\lambda}_l^*\right]_g$ associated with $-q_{l,a}^{D}\left(t\right)\leq 0$ (the $r^{th}$ general linear constraint for prosumer $l$), i.e., $\left[\boldsymbol{\lambda}_k^*\right]_r > \left[\boldsymbol{\lambda}_l^*\right]_g$.
\end{lemma}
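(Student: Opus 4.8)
The plan is to sandwich the two true shadow prices: produce a lower bound for $[\boldsymbol{\lambda}_k^*]_r$ and an upper bound for $[\boldsymbol{\lambda}_l^*]_g$, both expressed only through the data GSAA actually uses (the initial rate $\nabla F_k(0)$, the constant $L_k$, the cost $b_0(t)$, and the operating points), and then show the former strictly exceeds the latter exactly when \eqref{example1} holds. This gives the chain $[\boldsymbol{\lambda}_k^*]_r \ge (\text{lower}) > (\text{upper}) \ge [\boldsymbol{\lambda}_l^*]_g$.

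First I would read off each true shadow price from the stationarity condition \eqref{foc2} of the decoupled problem \eqref{D_F}. Since in this one-period setting the discharge constraint $-q_{k,EV}^{D}(t)\le 0$ is the only binding constraint, only its multiplier $[\boldsymbol{\lambda}_k^*]_r$ survives in $A_k^{T}\boldsymbol{\lambda}_k^{*}$ (coefficient $-1$), so the EV coordinate of \eqref{foc2} reads $[\boldsymbol{\lambda}_k^*]_r = \frac{\partial F_k}{\partial q_{k,EV}(t)}(\textbf{q}_k^{D*}) + b_0(t)$, and identically $[\boldsymbol{\lambda}_l^*]_g = \frac{\partial F_l}{\partial q_{l,EV}(t)}(\textbf{q}_l^{D*}) + b_0(t)$. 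The point is to keep the gradient evaluated at the (unknown) operating point rather than collapsing it, so the bound stays computable from measurable quantities.

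Second I would bound those gradients by their value at $0$ using the $L$-Lipschitz property (Definition \ref{Def2}): $\left|\frac{\partial F_k}{\partial q_{k,EV}(t)}(\textbf{q}_k^{D*}) - \frac{\partial F_k}{\partial q_{k,EV}(t)}(0)\right| \le \|\nabla F_k(\textbf{q}_k^{D*}) - \nabla F_k(0)\| \le L_k\|\textbf{q}_k^{D*}\|$, which yields $[\boldsymbol{\lambda}_k^*]_r \ge \nabla F_k(0) + b_0(t) - L_k\|\textbf{q}_k^{D*}\|$. Invoking \eqref{scaledUtility}, so that $\nabla F_l(0)=\beta\nabla F_k(0)$ and the Lipschitz constant of $F_l$ is $L_l=\beta L_k$ (as in the preceding corollary's proof), the same estimate in the opposite direction gives $[\boldsymbol{\lambda}_l^*]_g \le \beta\nabla F_k(0) + b_0(t) + \beta L_k\|\textbf{q}_l^{D*}\|$. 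This is exactly the mechanism of Theorem \ref{thm:Dual Bounding} specialized to the EV constraint, refined to carry only $\|\textbf{q}_k^{D*}\|$ and $\|\textbf{q}_l^{D*}\|$.

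Finally I would compare the two bounds. Subtracting the upper bound from the lower one cancels $b_0(t)$ and leaves $(1-\beta)\nabla F_k(0) - L_k\|\textbf{q}_k^{D*}\| - \beta L_k\|\textbf{q}_l^{D*}\|$. Writing $-\nabla F_k(0)>0$ (which holds by Assumption \ref{convexAssumption}, since $\nabla F_k(0) = -(\nabla U_k(0)-\nabla C_k(0))<0$) and collecting terms, positivity of this gap is equivalent to $\beta\big(-\nabla F_k(0) - L_k\|\textbf{q}_l^{D*}\|\big) > -\nabla F_k(0) + L_k\|\textbf{q}_k^{D*}\|$, i.e. to the middle inequality of \eqref{example1} after dividing through. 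The right inequality $\beta < b_0(t)/(-\nabla F_k(0))$ is used only to certify the upper bound $\beta\nabla F_k(0)+b_0(t)+\beta L_k\|\textbf{q}_l^{D*}\|$ is positive, keeping the chain (and the implicit $[\cdot]^+$) consistent. The main obstacle is the sign bookkeeping: because $\nabla F_k(0)<0$ every rearrangement flips an inequality, and one must verify the denominator $-\nabla F_k(0) - L_k\|\textbf{q}_l^{D*}\|$ is strictly positive before dividing — it is precisely the condition $1 < \frac{-\nabla F_k(0)+L_k\|\textbf{q}_k^{D*}\|}{-\nabla F_k(0)-L_k\|\textbf{q}_l^{D*}\|}$ in \eqref{example1} that forces this, since the numerator always exceeds the denominator. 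One also has to argue cleanly that tightness of the single discharge constraint isolates one multiplier in \eqref{foc2} so the EV coordinate can be read off.
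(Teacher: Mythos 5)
Your proposal is correct, and it reaches exactly the same two numbers the paper compares --- the lower bound $b_0(t)+\nabla F_k(0)-L_k\|\textbf{q}_k^{D*}\|$ for $[\boldsymbol{\lambda}_k^*]_r$ and the upper bound $b_0(t)+\beta\nabla F_k(0)+\beta L_k\|\textbf{q}_l^{D*}\|$ for $[\boldsymbol{\lambda}_l^*]_g$ --- with an identical final algebraic comparison driven by the middle inequality of \eqref{example1}. The route to those bounds, however, is genuinely different and noticeably more direct. The paper funnels everything through Corollary \ref{cor:Dual Bounding2}, whose bounds carry the term $L_k\left(\|\overline{\textbf{q}}_k^{D*}\|+\|\textbf{q}_k^{D*}\|\right)$, and then must substitute the explicit auxiliary-solution formula \eqref{Lem_qUpper} for $\overline{\textbf{q}}^{D*}$, resolve the $[\cdot]^+$ via $b_0(t)+\beta\nabla F_k(0)>0$, and watch the norm term $\left\|-\beta\nabla F_k(0)-b_0(t)+[b_0(t)+\beta\nabla F_k(0)]^+\right\|$ collapse to zero. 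You bypass problems (I) and (III) entirely: reading $[\boldsymbol{\lambda}_k^*]_r=\frac{\partial F_k}{\partial q_{k,EV}(t)}(\textbf{q}_k^{D*})+b_0(t)$ off the exact stationarity condition \eqref{foc2} (valid because only the discharge multiplier survives complementary slackness) and applying the $L$-Lipschitz bound to shift the gradient from $\textbf{q}_k^{D*}$ to $0$ produces the same bounds in two lines. A side benefit of your version is that the right-hand inequality $\beta<b_0(t)/(-\nabla F_k(0))$ becomes logically superfluous --- your bounds and the chain $[\boldsymbol{\lambda}_k^*]_r\geq(\text{lower})>(\text{upper})\geq[\boldsymbol{\lambda}_l^*]_g$ hold without it, whereas the paper needs it to simplify its $[\cdot]^+$ expressions; you correctly flag it as only a positivity certificate. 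Your observation that the denominator $-\nabla F_k(0)-L_k\|\textbf{q}_l^{D*}\|$ must be checked positive before dividing, and that the first inequality of \eqref{example1} forces this because the numerator always exceeds the denominator, is also sound (a ratio exceeding $1$ with numerator $\geq$ denominator rules out a nonpositive denominator). The one place to be slightly more careful in a written version is the passage from the componentwise partial derivative to the vector norm in the Lipschitz step, which is an inequality in general and an equality only in this scalar one-period, one-appliance instance; as stated it is still valid.
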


\begin{proof}
	We prove the inequality by showing the upper bound of $\left[\boldsymbol{\lambda}_l^*\right]_g$, $\left[\boldsymbol{\lambda}_l^*\right]_g^{upper}$ is less than the lower bound of $\left[\boldsymbol{\lambda}_k^*\right]_r$, $\left[\boldsymbol{\lambda}_k^*\right]_r^{lower}$. First, from \eqref{EVupper} and \eqref{scaledUtility} we can obtain
	\begin{align}
	&\left[\boldsymbol{\lambda}_l^*\right]_g^{upper} = \left[b_0(t)+ \beta \frac{\partial F_k}{\partial q_{k,a}(t)}\left(0\right)\right]^+ \nonumber
	\\
	&+ \beta L_k\left(\|\overline{\textbf{q}}_l^{D*}\|+\|\textbf{q}_l^{D*}\|\right).\label{l_EVlower}
	\end{align}
	Substituting \eqref{Lem_qUpper} to \eqref{l_EVlower}, we have
	\begin{align}
	&\left[\boldsymbol{\lambda}_l^*\right]_g^{upper} = \left[b_0(t)+ \beta \frac{\partial F_k}{\partial q_{k,a}(t)}\left(0\right)\right]^+ \nonumber
	\\
	&+\beta L_k \left\|- \frac{\nabla F_k( 0)}{L_k} - \frac{\textbf{b}_{0,l}}{\beta L_k} - \frac{\textbf{e}_l [\overline{\boldsymbol{\lambda}}_l^*]_g}{\beta L_k} \right\| + \beta L_k \|\textbf{q}_l^{D*}\| \nonumber.
	\end{align}
	Since we are considering the single-time-period case, $\frac{\partial F_k}{\partial q_{k,a}(t)}\left(0\right) =  \nabla F_k( 0) $ and $\textbf{b}_{0,l} = b_0(t)$. Also because the constraint for the shadow price $\left[\boldsymbol{\lambda}_l^*\right]_g$ is $-q_{l,a}^{D}\left(t\right)\leq 0$, we have $\textbf{e}_l =  -1$. Hence, the above equation becomes
	\begin{align}
	&\left[\boldsymbol{\lambda}_l^*\right]_g^{upper} = \left[b_0(t)+ \beta  \nabla F_k( 0) \right]^+ \nonumber
	\\
	&+ \left\|- \beta \nabla F_k( 0) - b_{0}(t)+ [\overline{\boldsymbol{\lambda}}_l^*]_g \right\| + \beta L_k \|\textbf{q}_l^{D*}\|.\label{l_EVlower_medium2}
	\end{align}
	By simplifying \eqref{lamda_l_upper}, we know $[\overline{\boldsymbol{\lambda}}_l^*]_g = \left[b_0(t)+ \beta  \nabla F_k( 0) \right]^+$. Therefore, \eqref{l_EVlower_medium2} becomes
	\begin{align}
	\left[\boldsymbol{\lambda}_l^*\right]_g^{upper} =& \left[b_0(t)+ \beta  \nabla F_k( 0) \right]^+ +\left\|- \beta \nabla F_k( 0) - b_{0}(t)\right.\nonumber
	\\
	&\left.+ \left[b_0(t)+ \beta  \nabla F_k( 0) \right]^+ \right\| + \beta L_k \|\textbf{q}_l^{D*}\| .\label{l_EVlower_medium3}
	\end{align}
	Notice that $ - \nabla F_k(0)$ represents the initial utility increasing rate, which is positive by Assumption \ref{convexAssumption}. Hence, from \eqref{example1} we know
	\begin{gather}
	b_0(t) + \beta  \nabla F_k(0) > 0.\label{positiveUpperShadow_l}
	\end{gather}
	Having the above inequality, \eqref{l_EVlower_medium3} can be simplified to
	\begin{gather}
	\left[\boldsymbol{\lambda}_l^*\right]_g^{upper} = b_0(t)+ \beta  \nabla F_k( 0) + \beta L_k \|\textbf{q}_l^{D*}\| .\label{l_EVlower_final}
	\end{gather}
	Next, we can obtain prosumer $k$'s shadow price lower bound by using \eqref{EVlower} and \eqref{Lem_qUpper} (with letting $\beta = 1$ and replacing subscript $l$ by $k$ and $g$ by $r$). 
	\begin{align}
	\left[\boldsymbol{\lambda}_k^*\right]_r^{lower} = &\bigg[\left[b_0(t)+\nabla F_k( 0)\right]^+ -L_k \left\|- \frac{\nabla F_k( 0)}{L_k}  \right.\nonumber 
	\\
	&\left. - \frac{\textbf{b}_{0,k}}{ L_k}- \frac{\textbf{e}_k [\overline{\boldsymbol{\lambda}}_k^*]_r}{ L_k}\right\|- L_k \|\textbf{q}_k^{D*}\|\bigg]^+ .\nonumber
	\end{align}
	Similar to the process we analyzed the coefficients $\textbf{b}_{0,l}$ and $\textbf{e}_l$ for $\left[\boldsymbol{\lambda}_l^*\right]_g^{upper}$, we can obtain $\textbf{b}_{0,k} = b_0(t)$, $\textbf{e}_k =  -1$ and $[\overline{\boldsymbol{\lambda}}_k^*]_r = \left[b_0(t)+  \nabla F_k( 0) \right]^+$. Hence, we have
	\begin{align}
	\left[\boldsymbol{\lambda}_k^*\right]_r^{lower} = &\bigg[\left[b_0(t)+\nabla F_k( 0)\right]^+ -\left\|- \nabla F_k( 0) - b_{0}(t)\right.\nonumber
	\\
	&\left.+ \left[b_0(t)+  \nabla F_k( 0) \right]^+\right\|- L_k \|\textbf{q}_k^{D*}\| \bigg]^+.\label{l_EVupper__medium}
	\end{align}
	Recall that $\beta > 1$. Therefore, from \eqref{positiveUpperShadow_l}, we have
	\begin{gather}
	b_0(t) +  \nabla F_k(0) > 0.\label{positiveUpperShadow_k}
	\end{gather}
	Having the above inequality, \eqref{l_EVupper__medium} can be simplified to
	\begin{gather}
	\left[\boldsymbol{\lambda}_k^*\right]_r^{lower} = \left[b_0(t)+  \nabla F_k( 0) -  L_k \|\textbf{q}_l^{D*}\|\right]^+.
	\end{gather}
	Last, by \eqref{example1} and $- \nabla F_k(0) + L_k \| \textbf{q}_l^{D*} \| > 0$, we have $- \nabla F_k(0) - L_k \| \textbf{q}_l^{D*} \| > 0$. Hence, from \eqref{example1}, we can obtain
	\begin{gather}
	\nabla F_k(0) - L_k \| \textbf{q}_k^{D*} \| >  \beta ( \nabla F_k(0) + L_k \| \textbf{q}_l^{D*} \|). \nonumber
	\end{gather}
	Adding $b_0(t)$ on both sides of the above inequality, we have
	\begin{align}
	&b_0(t) + \nabla F_k(0) - L_k \| \textbf{q}_k^{D*} \| \nonumber
	\\
	&> b_0(t) + \beta \nabla F_k(0) + \beta L_k \| \textbf{q}_l^{D*} \|. \label{example1_medium2}
	\end{align}
	Recall $b_0(t)$ is positive. Also by \eqref{positiveUpperShadow_l}, we have the right hand side of the above inequality is positive. Therefore, the left hand side is also positive, which implies
	\begin{align}
	&\left[\boldsymbol{\lambda}_k^*\right]_r^{lower} = b_0(t)+  \nabla F_k( 0) -  L_k \|\textbf{q}_k^{D*}\| \nonumber
	\\
	&> b_0(t) + \beta \nabla F_k(0) + \beta L_k \| \textbf{q}_l^{D*} \| = \left[\boldsymbol{\lambda}_l^*\right]_g^{upper} .
	\end{align}
	Hence, we obtain $\left[\boldsymbol{\lambda}_k^*\right]_r \geq \left[\boldsymbol{\lambda}_k^*\right]_r^{lower}  > \left[\boldsymbol{\lambda}_l^*\right]_g^{upper} \geq \left[\boldsymbol{\lambda}_l^*\right]_g  $.
\end{proof}

\subsubsection{Example ($\beta < 1$)}
Prosumer $l$ can contribute more on social welfare than prosumer $k$ in EV discharging, even when $l$'s utility is less than $k$'s utility. This case is shown by the following lemma.
\begin{lemma}
	If $\beta$ satisfies the following condition
	\begin{gather}
	\beta < \frac{ - \nabla F_k(0) + L_k \| \textbf{q}_k^{D*} \| -2 b_0(t) } { \nabla F_k(0) - L_k \| \textbf{q}_l^{D*} \| } < \frac{ b_0(t) } { - \nabla F_k(0)} \leq 1, \label{example2}
	\end{gather}
	then $\left[\boldsymbol{\lambda}_l^*\right]_g > \left[\boldsymbol{\lambda}_k^*\right]_r$, where these shadow prices are defined as the same as in Lemma \ref{ExLemma1}.
\end{lemma}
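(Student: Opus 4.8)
The plan is to mirror the proof of Lemma \ref{ExLemma1}, but with the two bounds interchanged: to establish $\left[\boldsymbol{\lambda}_l^*\right]_g > \left[\boldsymbol{\lambda}_k^*\right]_r$ it suffices to show that prosumer $l$'s \emph{lower} bound strictly dominates prosumer $k$'s \emph{upper} bound, i.e. $\left[\boldsymbol{\lambda}_l^*\right]_g^{lower} > \left[\boldsymbol{\lambda}_k^*\right]_r^{upper}$, since then $\left[\boldsymbol{\lambda}_l^*\right]_g \geq \left[\boldsymbol{\lambda}_l^*\right]_g^{lower} > \left[\boldsymbol{\lambda}_k^*\right]_r^{upper} \geq \left[\boldsymbol{\lambda}_k^*\right]_r$. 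Throughout I will use the single-period reductions $\textbf{b}_{0,l}=\textbf{b}_{0,k}=b_0(t)$ and $\textbf{e}_l=\textbf{e}_k=-1$, together with $F_l=\beta F_k$ (so $\mu_l=\beta\mu_k$, $L_l=\beta L_k$, $\nabla F_l(0)=\beta\nabla F_k(0)$), exactly as in the $\beta>1$ case.

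First I would compute $\left[\boldsymbol{\lambda}_k^*\right]_r^{upper}$ from \eqref{EVupper} applied to $F_k$ itself. The hypothesis $\tfrac{b_0(t)}{-\nabla F_k(0)}\le 1$ in \eqref{example2} gives $b_0(t)+\nabla F_k(0)\le 0$, so the bracket $\left[b_0(t)+\nabla F_k(0)\right]^+$ vanishes and, in particular, $[\overline{\boldsymbol{\lambda}}_k^*]_r=0$. Substituting the explicit form \eqref{Lem_qUpper} of $\overline{\textbf{q}}_k^{D*}$ (with $\beta=1$, subscripts $l\to k$, $g\to r$) and simplifying with $\textbf{e}_k=-1$ collapses $L_k\|\overline{\textbf{q}}_k^{D*}\|$ to $-\nabla F_k(0)-b_0(t)$, which is nonnegative by the same hypothesis. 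This yields the closed form $\left[\boldsymbol{\lambda}_k^*\right]_r^{upper}=-\nabla F_k(0)-b_0(t)+L_k\|\textbf{q}_k^{D*}\|$.

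Next I would compute $\left[\boldsymbol{\lambda}_l^*\right]_g^{lower}$ from \eqref{EVlower} applied to $F_l=\beta F_k$. The second inequality of \eqref{example2}, $\beta<\tfrac{b_0(t)}{-\nabla F_k(0)}$, forces $b_0(t)+\beta\nabla F_k(0)>0$, so here the bracket $\left[b_0(t)+\beta\nabla F_k(0)\right]^+$ is active and, by \eqref{lamda_l_upper}, $[\overline{\boldsymbol{\lambda}}_l^*]_g=b_0(t)+\beta\nabla F_k(0)$. The key simplification is that, upon substituting \eqref{Lem_qUpper} for $\overline{\textbf{q}}_l^{D*}$ and inserting this value of $[\overline{\boldsymbol{\lambda}}_l^*]_g$, the terms inside the norm cancel exactly, giving $\|\overline{\textbf{q}}_l^{D*}\|=0$. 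Hence $\left[\boldsymbol{\lambda}_l^*\right]_g^{lower}$ reduces to $\left[b_0(t)+\beta\nabla F_k(0)-\beta L_k\|\textbf{q}_l^{D*}\|\right]^+$.

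Finally, I would verify $\left[\boldsymbol{\lambda}_l^*\right]_g^{lower}>\left[\boldsymbol{\lambda}_k^*\right]_r^{upper}$ by reducing it to the first inequality of \eqref{example2}. Dropping the outer $[\cdot]^+$ (whose inactivity I confirm at the end), the target is
\begin{align*}
b_0(t)+\beta\nabla F_k(0)-\beta L_k\|\textbf{q}_l^{D*}\| > -\nabla F_k(0)-b_0(t)+L_k\|\textbf{q}_k^{D*}\|,
\end{align*}
which rearranges to $2b_0(t)+(\beta+1)\nabla F_k(0)-\beta L_k\|\textbf{q}_l^{D*}\|-L_k\|\textbf{q}_k^{D*}\|>0$. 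I would obtain this precisely by cross-multiplying the first inequality of \eqref{example2}. The main obstacle — and the step demanding the most care — is that the denominator $\nabla F_k(0)-L_k\|\textbf{q}_l^{D*}\|$ is \emph{negative} (since $\nabla F_k(0)<0$), so the inequality direction flips upon multiplication; one must track these signs, together with the positivity of $b_0(t)$ and of $-\nabla F_k(0)$ (the initial utility increasing rate, positive by Assumption \ref{convexAssumption}), to conclude both that the reduction is valid and that the left-hand side above is itself positive, which justifies removing the $[\cdot]^+$ and closes the chain $\left[\boldsymbol{\lambda}_l^*\right]_g \geq \left[\boldsymbol{\lambda}_l^*\right]_g^{lower} > \left[\boldsymbol{\lambda}_k^*\right]_r^{upper} \geq \left[\boldsymbol{\lambda}_k^*\right]_r$.
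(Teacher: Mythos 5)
Your proposal is correct and follows essentially the same route as the paper: compute $\left[\boldsymbol{\lambda}_k^*\right]_r^{upper}$ and $\left[\boldsymbol{\lambda}_l^*\right]_g^{lower}$ explicitly via \eqref{EVupper}, \eqref{EVlower}, \eqref{Lem_qUpper} and \eqref{lamda_l_upper}, use the sign conditions in \eqref{example2} to resolve the $[\cdot]^+$ brackets (your observation that $\overline{\textbf{q}}_l^{D*}=0$ is exactly the paper's vanishing-norm step), and reduce the final comparison to a sign-careful cross-multiplication of the first inequality of \eqref{example2}. Your bookkeeping of $\|\textbf{q}_l^{D*}\|$ versus $\|\textbf{q}_k^{D*}\|$ is in fact more consistent with the stated hypothesis than the paper's own final display, which swaps the two at one point.
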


\begin{proof}
	We prove the inequality by showing the upper bound of $\left[\boldsymbol{\lambda}_k^*\right]_r$, $\left[\boldsymbol{\lambda}_k^*\right]_r^{upper}$ is less than the lower bound of $\left[\boldsymbol{\lambda}_l^*\right]_g$, $\left[\boldsymbol{\lambda}_l^*\right]_g^{lower}$. First, from \eqref{EVupper} we have
	\begin{gather}
	\left[\boldsymbol{\lambda}_k^*\right]_r^{upper} = \left[b_0(t)+ \frac{\partial F_k}{\partial q_{k,a}(t)}\left(0\right)\right]^+ \nonumber
	\\
	+ L_k\left(\|\overline{\textbf{q}}_k^{D*}\|+\|\textbf{q}_k^{D*}\|\right).\label{k_EVlower}
	\end{gather}
	Substituting \eqref{Lem_qUpper} to \eqref{k_EVlower}, we can obtain
	\begin{align}
	&\left[\boldsymbol{\lambda}_k^*\right]_r^{upper} = \left[b_0(t)+ \frac{\partial F_k}{\partial q_{k,a}(t)}\left(0\right)\right]^+ \nonumber
	\\
	&+ L_k \left\|- \frac{\nabla F_k( 0)}{L_k} - \frac{\textbf{b}_{0,k}}{ L_k} - \frac{\textbf{e}_k [\overline{\boldsymbol{\lambda}}_k^*]_r}{ L_k} \right\| +  L_k \|\textbf{q}_k^{D*}\| \nonumber.
	\end{align}
	Since we are considering the single-time-period case, $\frac{\partial F_k}{\partial q_{k,a}(t)}\left(0\right) =  \nabla F_k( 0) $ and $\textbf{b}_{0,k} = b_0(t)$. Also because the constraint for the shadow price $\left[\boldsymbol{\lambda}_k^*\right]_r$ is $-q_{k,a}^{D}\left(t\right)\leq 0$, we have $\textbf{e}_k =  -1$. Hence, $\left[\boldsymbol{\lambda}_k^*\right]_r^{upper}$ can be simplified to 
	\begin{align}
	\left[\boldsymbol{\lambda}_k^*\right]_r^{upper} = &\left[b_0(t)+  \nabla F_k( 0) \right]^+ + \left\|-  \nabla F_k( 0) \right.\nonumber
	\\
	&\left. - b_{0}(t)+ [\overline{\boldsymbol{\lambda}}_k^*]_r \right\| +  L_k \|\textbf{q}_k^{D*}\|.\label{k_EVlower_medium2}
	\end{align}
	By replacing subscript $l$ by $k$, setting $\beta = 1$ in \eqref{lamda_l_upper}, we can obtain $[\overline{\boldsymbol{\lambda}}_k^*]_r = \left[b_0(t)+ \nabla F_k( 0) \right]^+$. Therefore, \eqref{k_EVlower_medium2} becomes
	\begin{align}
	\left[\boldsymbol{\lambda}_k^*\right]_r^{upper} = &\left[b_0(t)+  \nabla F_k( 0) \right]^+ +\left\|- \nabla F_k( 0) - b_{0}(t)\right.\nonumber
	\\
	&\left.+ \left[b_0(t)+  \nabla F_k( 0) \right]^+ \right\| +  L_k \|\textbf{q}_k^{D*}\| .\label{k_EVlower_medium3}
	\end{align}
	Recall that $ - \nabla F_k(0) > 0$. Hence, from \eqref{example2} we know
	\begin{gather}
	b_0(t) + \nabla F_k(0) \leq 0,\nonumber
	\end{gather}
	thus \eqref{k_EVlower_medium3} can be simplified to
	\begin{gather}
	\left[\boldsymbol{\lambda}_k^*\right]_r^{upper} = - b_0(t) -  \nabla F_k( 0) + L_k \|\textbf{q}_k^{D*}\| .\label{l_EVlower_final}
	\end{gather}
	Next, we can obtain prosumer $l$'s shadow price lower bound by using \eqref{EVlower}, \eqref{scaledUtility} and \eqref{Lem_qUpper} (when use \eqref{Lem_qUpper}, let $\beta = 1$ and replace subscript $l$ by $k$). 
	\begin{align}
	\left[\boldsymbol{\lambda}_l^*\right]_g^{lower} = &\bigg[\left[b_0(t)+\beta \nabla F_k( 0)\right]^+ - \beta L_k \left\|- \frac{\nabla F_k( 0)}{L_k} \right.\nonumber
	\\
	&\left.- \frac{\textbf{b}_{0,l}}{ \beta L_k} - \frac{\textbf{e}_l [\overline{\boldsymbol{\lambda}}_l^*]_g}{ \beta L_k}\right\|- \beta L_k \|\textbf{q}_k^{D*}\|\bigg]^+ .\nonumber
	\end{align}
	Similar to the process we analyzed the coefficients $\textbf{b}_{0,k}$ and $\textbf{e}_k$ for $\left[\boldsymbol{\lambda}_k^*\right]_r^{upper}$, we can obtain $\textbf{b}_{0,l} = b_0(t)$, $\textbf{e}_l =  -1$ and $[\overline{\boldsymbol{\lambda}}_l^*]_g = \left[b_0(t)+  \beta \nabla F_k( 0) \right]^+$. Hence $\left[\boldsymbol{\lambda}_l^*\right]_g^{lower}$ can be simplified to
	\begin{align}
	\left[\boldsymbol{\lambda}_l^*\right]_g^{lower} = &\bigg[\left[b_0(t)+\beta \nabla F_k( 0)\right]^+ -\left\|- \beta \nabla F_k( 0) - b_{0}(t)\right.\nonumber
	\\
	&\left.+ \left[b_0(t)+ \beta  \nabla F_k( 0) \right]^+\right\|- \beta L_k \|\textbf{q}_k^{D*}\| \bigg]^+.\label{l_EVlower__medium}
	\end{align}
	From \eqref{example2}, we have
	\begin{gather}
	b_0(t) +  \beta \nabla F_k(0) > 0.\label{positiveUpperShadow_k}
	\end{gather}
	Having the above inequality, \eqref{l_EVlower__medium} can be simplified to
	\begin{gather}
	\left[\boldsymbol{\lambda}_l^*\right]_g^{lower} = \left[b_0(t)+  \beta \nabla F_k( 0) -  \beta L_k \|\textbf{q}_l^{D*}\|\right]^+.
	\end{gather}
	This is the lower bound of prosumer $l$'s shadow price $\left[\boldsymbol{\lambda}_l^*\right]_g$.\\
	Last, by \eqref{example2}, we know $ \nabla F_k(0) - L_k \| \textbf{q}_l^{D*} \| < 0$ and $ - \nabla F_k(0) + L_k \| \textbf{q}_k^{D*} \| -2 b_0(t) < 0$. Hence, from \eqref{example2}, we can obtain
	\begin{gather}
	\beta \left( \nabla F_k(0) - L_k \| \textbf{q}_k^{D*} \| \right) >   - \nabla F_k(0) + L_k \| \textbf{q}_k^{D*} \| -2 b_0(t).\nonumber
	\end{gather}
	Adding $b_0(t)$ on both sides of the above inequality, we have
	\begin{align}
	&b_0(t) + \beta \nabla F_k(0) - \beta L_k \| \textbf{q}_k^{D*} \| \nonumber
	\\
	&> - b_0(t) - \nabla F_k(0) + L_k \| \textbf{q}_k^{D*} \|. \nonumber
	\end{align}
	Since $- b_0(t) - \nabla F_k(0) + L_k \| \textbf{q}_k^{D*} \| = \left[\boldsymbol{\lambda}_k^*\right]_r^{upper}$ is nonnegative. Therefore, the left hand side of the above inequality is positive, which implies
	\begin{align}
	&\left[\boldsymbol{\lambda}_l^*\right]_g^{lower} = b_0(t) + \beta \nabla F_k(0) - \beta L_k \| \textbf{q}_k^{D*} \| \nonumber
	\\
	&> - b_0(t) - \nabla F_k(0) + L_k \| \textbf{q}_k^{D*} \| = \left[\boldsymbol{\lambda}_k^*\right]_r^{upper} .
	\end{align}
	Hence, we have $\left[\boldsymbol{\lambda}_l^*\right]_g \geq \left[\boldsymbol{\lambda}_l^*\right]_g^{lower} > \left[\boldsymbol{\lambda}_k^*\right]_r^{upper} \geq \left[\boldsymbol{\lambda}_k^*\right]_r$.
\end{proof}

The feasible areas of $\beta$ for the above two examples are depicted in Figure \req{fig:FA}. 
\begin{figure}
\center
\captionsetup{justification=centering}
\includegraphics[width=0.42\textwidth]{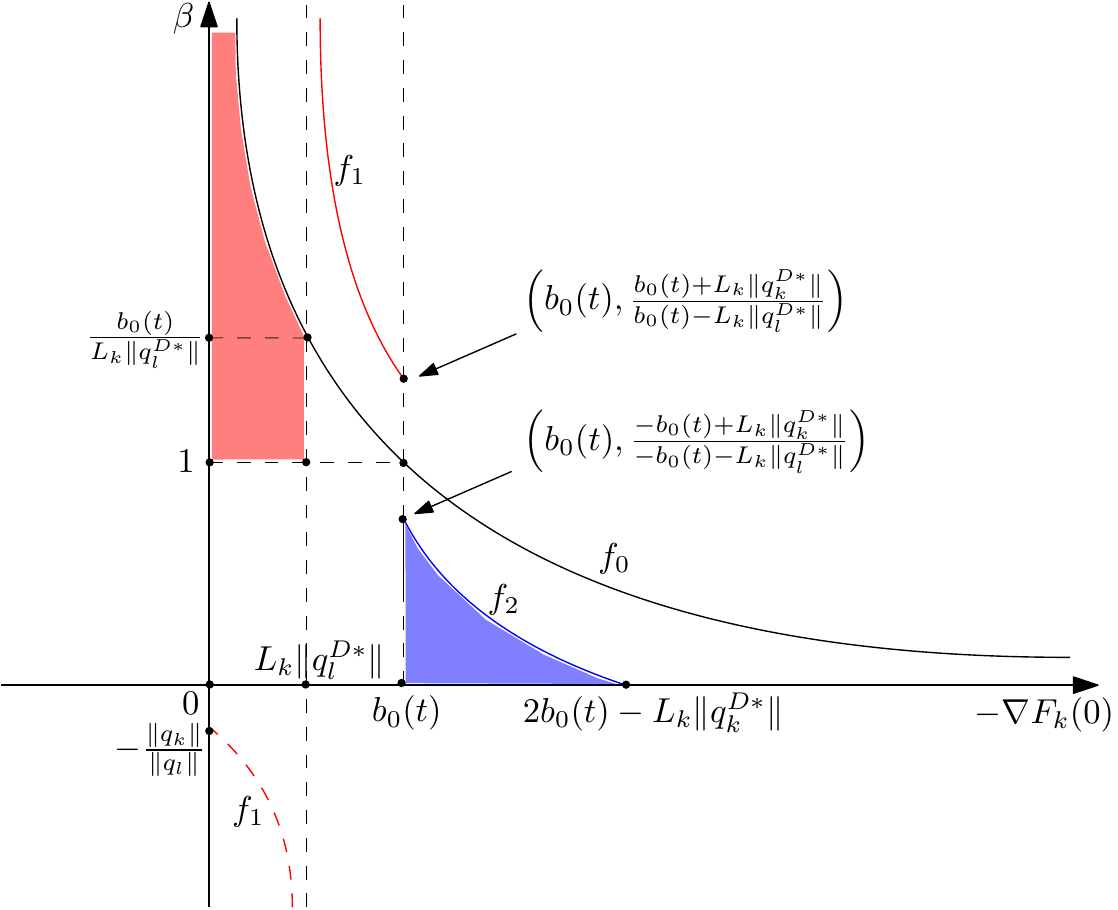}
	\caption{Feasible areas of $\beta$ for the above two examples}
	\label{fig:FA}
\end{figure}
The three functions $f_0, f_1$ and $f_2$ in terms of prosumer $k$'s initial utility increasing rate $- \nabla F_k(0)$, replaced by $x$ for convenience, are given below. 
$f_0\left(x\right) = \frac{b_0(t)}{x},\ 
f_1\left(x\right) = \frac{ x + L_k \| \textbf{q}_k^{D*} \| } { x - L_k \| \textbf{q}_l^{D*} \| },\ 
f_2\left(x\right) =  \frac{ x + L_k \| \textbf{q}_k^{D*} \| -2 b_0(t) } { -x - L_k \| \textbf{q}_l^{D*} \| }$.
In the above figure, the red area is when $\beta>1$, where we would expect prosumer $l$ to have a higher impact, but surprisingly $k's$ multiplier has a higher range, and the blue area is for $\beta<1$, where prosumer $k$ with lower utility function but a higher impact on social welfare.

\section{Numerical Studies}
We illustrate the usage of GSAA by considering the DR market consists of 2 prosumers, 1 utility company and 1 DSO. Here we use GSAA in an incremental way to estimate a prosumer's contribution potential, meaning that we increase the resource capacity of the general linear constraint for GSAA in small increments and then update GSAA iteratively instead of using GSAA only once for a large change of resource capacity. 

First, we test GSAA usage under quadratic settings in a 24-time-period DR market, where each prosumer has two appliances: energy storage (labeled by ES) and an EV (labeled by EV). Both appliances have quadratic net utility functions, where the coefficients follow the signs in Table \ref{Table Coef.}. Specifically, we let the second order coefficients $\hat{a}_{i,ES}\left(t\right)$ for two prosumers' ES net utility be randomly selected from $[-0.05, -0.02]$, here $\hat{a}_{1,ES}\left(t\right) = -0.02$ and $\hat{a}_{2,ES}\left(t\right) = -0.035$. The second order coefficients $\hat{a}_{i,EV}\left(t\right)$ and the first order coefficients $\hat{b}_{i,EV}\left(t\right)$ of EVs' net utility are randomly selected from $[-0.04, -0.01]$ and $[0.1, 0.5]$ respectively. In this experiment, $\hat{a}_{1,EV}\left(t\right) = -0.01$, $\hat{a}_{2,EV}\left(t\right) = -0.015$, $\hat{b}_{1,EV}\left(t\right) = 0.1$ and $\hat{b}_{2,EV}\left(t\right) = 0.2$. All other coefficients are 0 based on Table \ref{Table Coef.}. The utility company's production cost $b_{0}(t)$ is randomly chosen from $[0.2, 0.6]$ and here we have $b_{0}(t)=0.4$. For simplicity, we let the coefficients of prosumers' net utility remain unchanged over time and only include one constraint at the 1st time period in the model and make sure the constraint is tight at optimality. We conduct two experiments about using GSAA to compare the prosumers' marginal contribution on social welfare. The first one is on the net selling effects, which is shown in Figure \ref{fig:NetSell}. As we can see, both prosumers' contributions on social welfare increase as their allowed net-selling amounts (i.e., resource capacities) increase and each estimation given by GSAA is always an upper bound to the real contribution of the corresponding prosumer. For the same amount of allowed net-selling for both prosumers, the estimation by GSAA implies that prosumer 1's contribution potential (red lines with markers) is larger than prosumer 2's (blue lines with markers). This claim is verified by the fact that 1's real contribution (red line without marker) is higher than 2's (blue line without marker). We can also see that when the allowed net-selling amounts are small, the estimations by GSAA could relatively well reflects the prosumers' potential. However, as the allowed net-selling amount increases the estimation error becomes larger (This error will finally remain unchanged after exceeding some value). Hence, GSAA should be used carefully, if some prosumer's allowed net-selling amount is too large. 

The second simulation is on allowing EV discharging, which is shown in Figure \ref{fig:EVDis}. In this test, the estimation by GSAA is very close to the prosumers' real contribution on social welfare, even as the allowed EV discharging amount becomes large. Hence, in this test, the estimation by GSAA can help us quickly decide who has more contribution potential on social welfare when they are allowed to discharge EVs. 

\begin{figure}[H]
\begin{minipage}{0.23\textwidth}
\includegraphics[width=\textwidth]{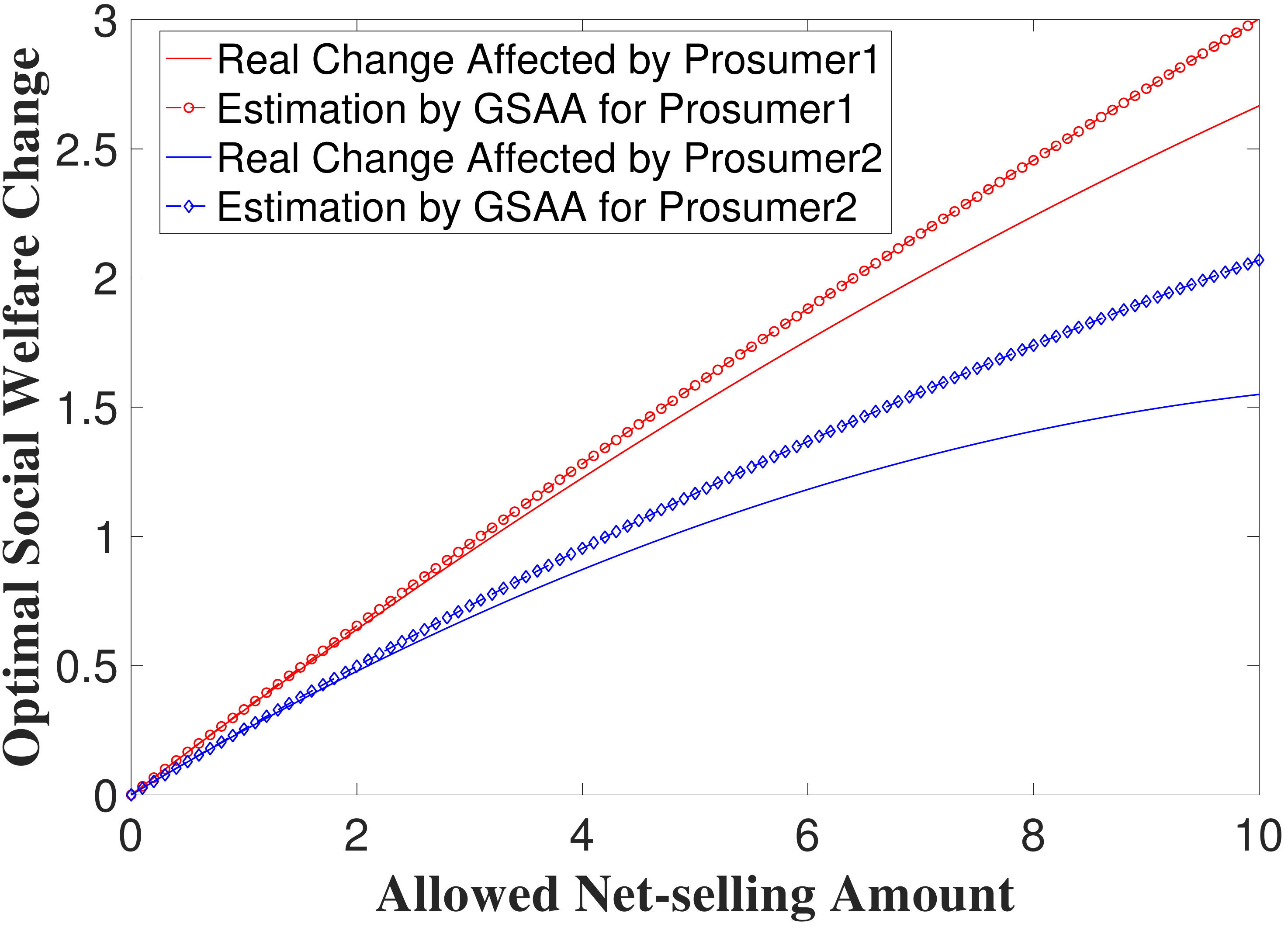}\captionof{figure}{\footnotesize{Prosumers' marginal contribution on social welfare with allowed net-selling amount, $M$.}}\label{fig:NetSell}
\end{minipage}
\hfill
\begin{minipage}{0.23\textwidth}            
               \includegraphics[width=\textwidth]{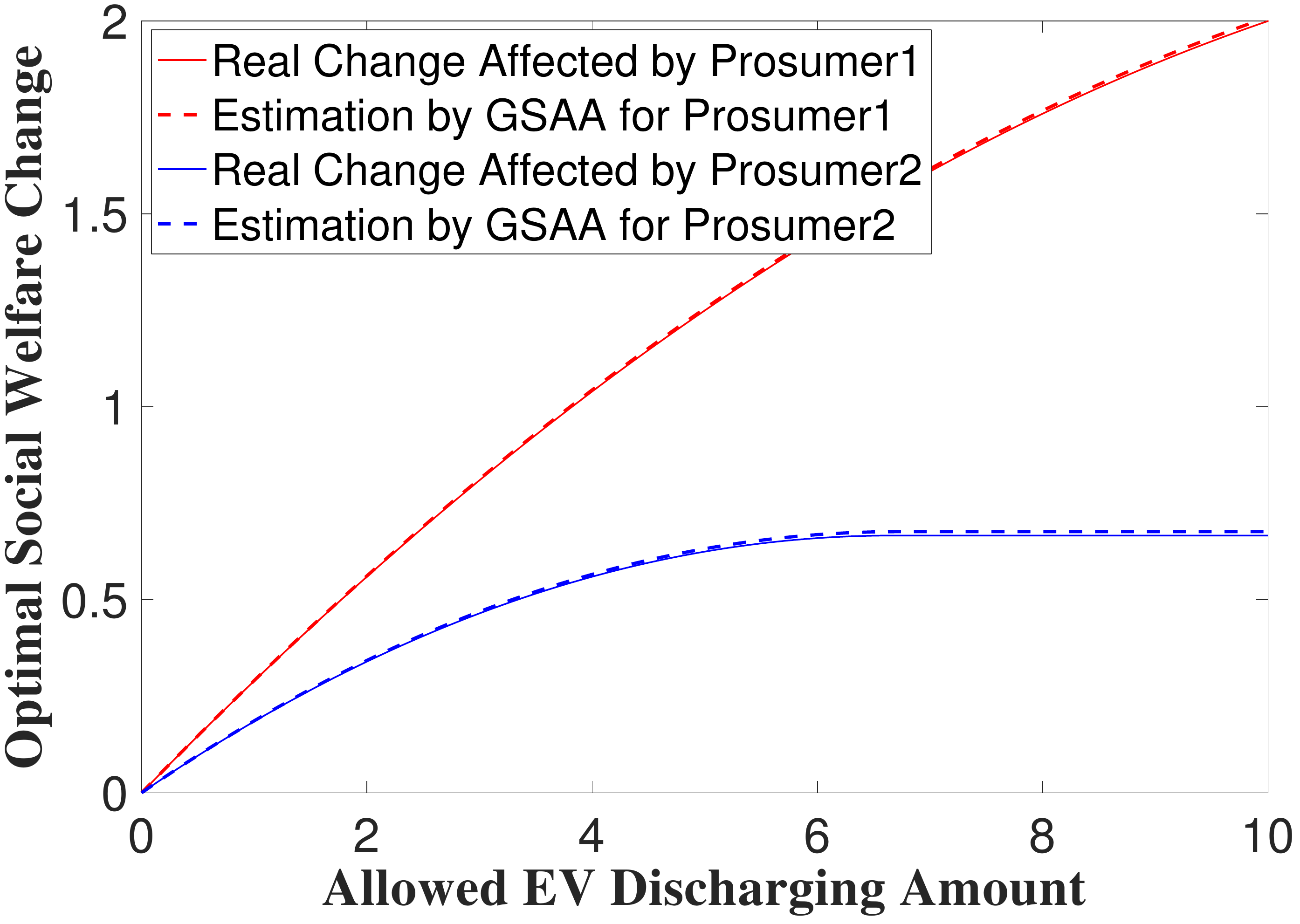}
               \captionof{figure}{\footnotesize{Prosumers' marginal contributions with change of EV initial utility increasing rate, $\hat{b}_{i,a}\left(t\right)$.}}\label{fig:EVDis}
\end{minipage}
\vspace{-0.2cm} 
\end{figure}

Next, we test GSAA performance under the general convex settings. The experiments are designed to simulate the case in Section \ref{Case Study}. For simplicity, we let each prosumer have only one appliance: an EV. We use quadratic functions as prosumers net utility functions since they satisfy the relevant assumptions. The coefficients of prosumer 1's EV net utility function are generated in the same way as in the previous simulation, here $\hat{a}_{1,EV}\left(t\right) = -0.01$ and $\hat{b}_{1,EV}\left(t\right) = 0.1$. The coefficients for prosumer 2's EV net utility function are prosumer 1's scaled up by a positive number $\beta$. Here we use $\beta = 2$. However, the above information are not accessible by GSAA. GSAA only has information about the degree of strong convexity, $\mu_i$, the Lipschitz constant, $L_i$, for prosumers' net utility functions, prosumers' operating points and initial utility increasing rates\footnote{The operating points and the initial utility increasing rates can be obtained since we have the specific coefficients of each prosumer's net utility function, even though GSAA does not have information about these coefficients.}. We set $\mu_1 = 0.018$, $L_1 = 0.022$, $\mu_2 = 0.036$ and $L_1 = 0.044$. The utility company's cost $b_{0}(t)$ is still chosen in the same way as in the previous simulation, here we have $b_{0}(t)=0.4$. The first experiment in Figure \ref{BoundsGSAA} shows the bounds of a prosumer 1's marginal contribution on social welfare estimated by GSAA with and without knowing 1's net utility functions, and also shows the real marginal contribution.
From the figure, we can see that both real optimal social welfare improvement and the estimation of optimal social welfare improvement given by the actual shadow price are bounded by the estimation based on the upper and lower bounds of shadow price. The figure also shows that as EV's initial utility increasing rate $\hat{b}_{1,EV}\left(t\right)$ changes, the estimation given by the upper bound of the shadow price works well when $\hat{b}_{1,EV}\left(t\right)$ is smaller than a threshold, the point where EV discharging stops contributing on the social welfare. However, when EV's initial utility increasing rate $\hat{b}_{1,EV}\left(t\right)$ is larger than the threshold, the estimation given by the upper bound of the shadow price begins to bounce up. The second experiment, shown in Figure \ref{TwoBoundsGSAA}, illustrates using the bounds obtained by GSAA, as indirect information, to compare two prosumers' potential on contributing social welfare. When the allowed EV-discharging amounts are smaller than 2 units and one's lower bound by GSAA is higher than the other's upper bound by GSAA, the bounds given by GSAA is sufficient for us to make a conclusion about who has more contribution potential on social welfare. For example, when prosumer 1 and prosumer 2 have 1 unit of allowed EV-discharging amount, by GSAA we can see that prosumer 1 has more contribution potential than prosumer 2 because 1's lower bound given by GSAA is higher than 2's upper bound given by GSAA, despite the fact that 2 has a higher utility function. This is also verified by the real contribution values. However, as the same as in the quadratic settings, estimations by GSAA should be used cautiously when the resource capacity change is too large, since the estimation error could also be large at this time and may lead to inaccurate prediction.

\begin{figure}
\begin{minipage}{0.23\textwidth}
\includegraphics[width=\textwidth]{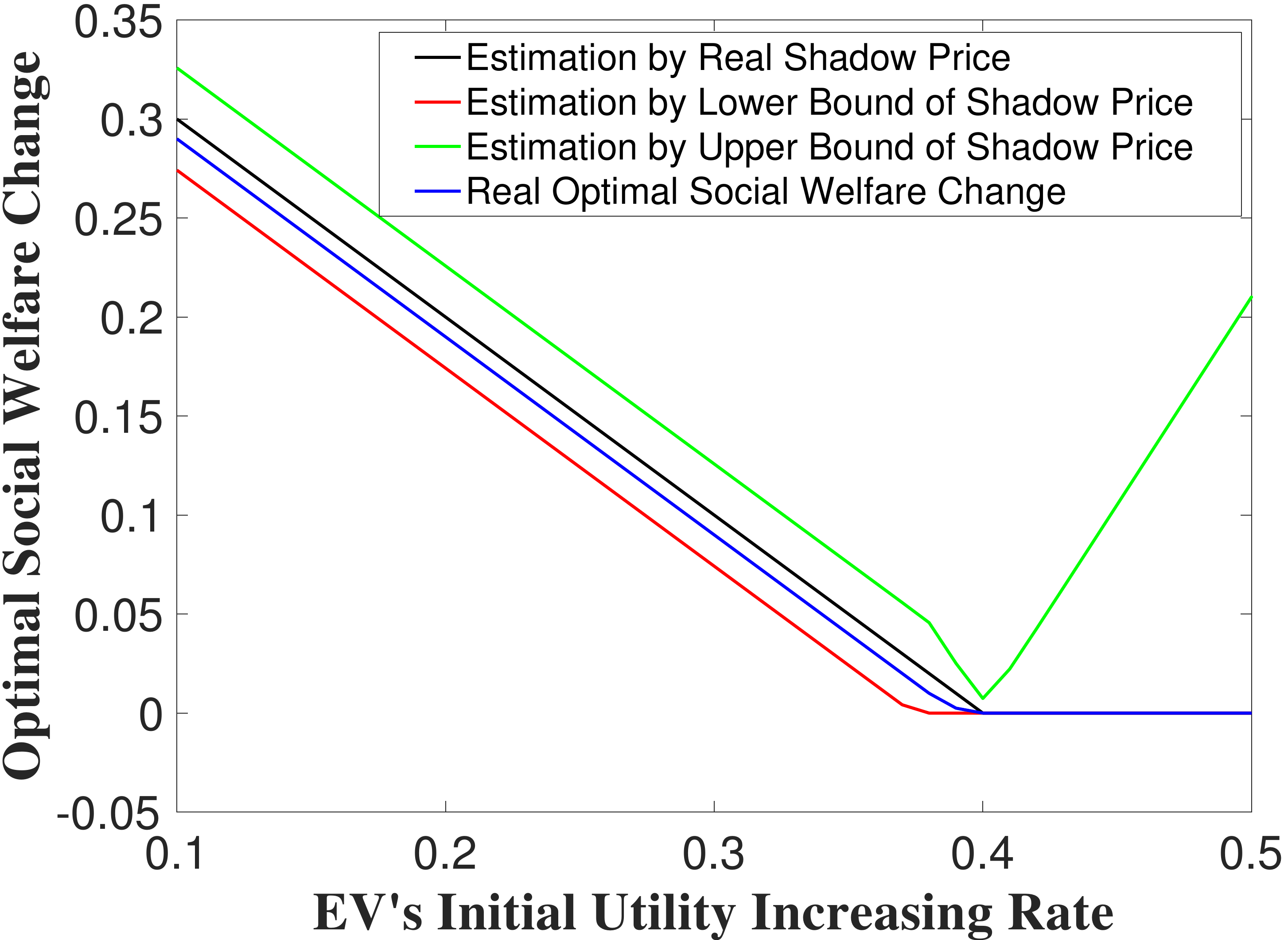}
	\caption{\footnotesize{Optimal social welfare improvement bounds with rise of prosumer 1's EV initial utility increasing rate, $\hat{b}_{1,EV}\left(t\right)$.}}
	\label{BoundsGSAA}
\end{minipage}
\hfill
\begin{minipage}{0.23\textwidth}            
               \includegraphics[width=\textwidth]{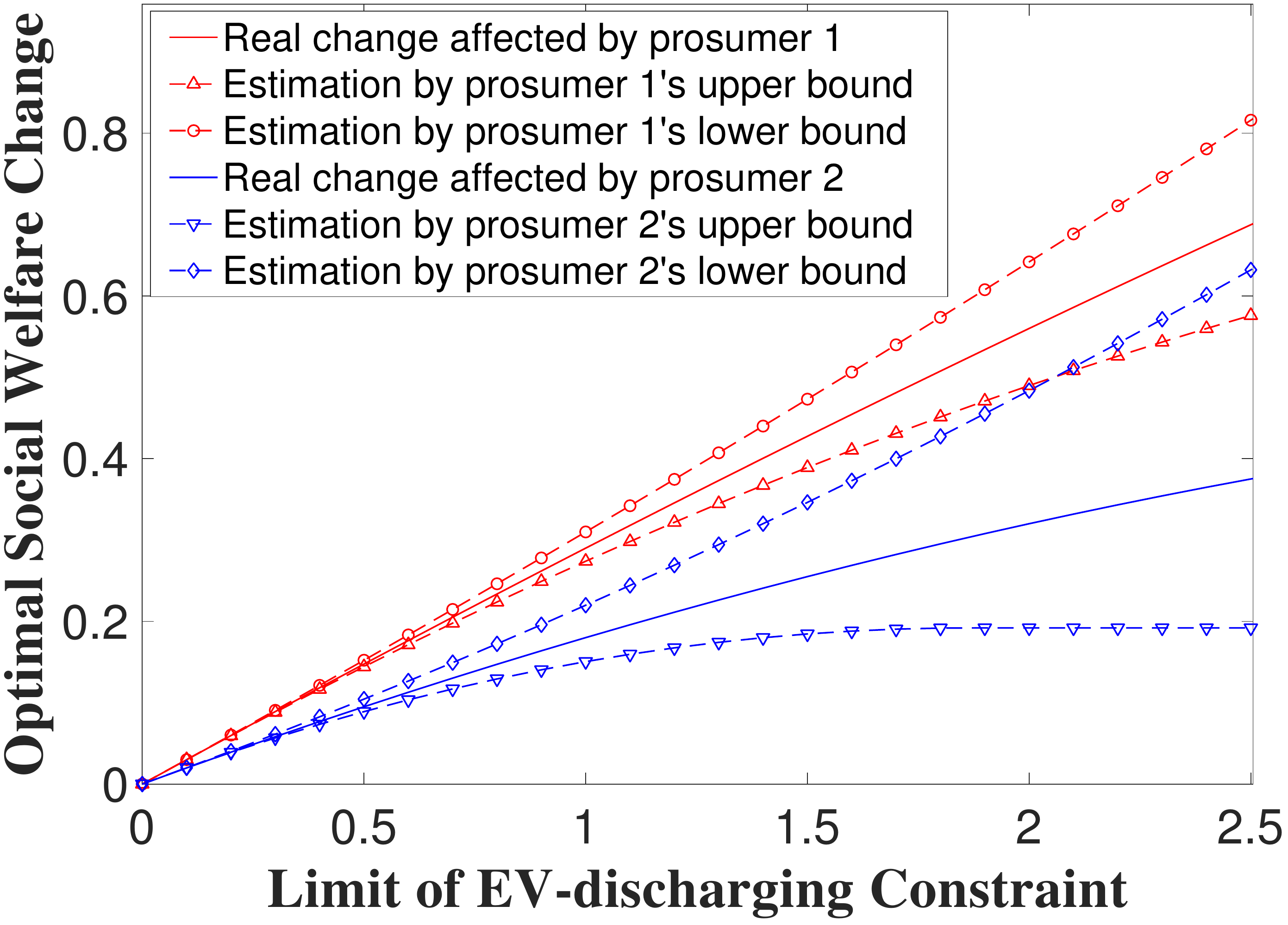}
               \caption{\footnotesize{Using shadow price bounds to compare contributions of two EVs' discharging on optimal social welfare improvements.}}
               \label{TwoBoundsGSAA}
\end{minipage}
\vspace{-0.2cm} 
\end{figure}

\section{Conclusion}\label{sec:concl}
In this paper, we consider an $H$-period DR market model consists of one distribution system operator, price-taking prosumers with various appliances, and an electric utility company. We show that there exists an efficient competitive equilibrium and an equivalence relationship between DSO's problem and the prosumers' and the utility company's individual problems. Based on the equivalence relationship and duality theory of convex optimization, we propose a general sensitivity analysis approach (GSAA) to analyze the effect of individual prosumers' contribution on social welfare, when their DR resource capacity is increased. We characterize closed-form shadow prices associated with the constraining resources, when the net utility function has quadratic form and provide the bounds of the shadow prices when the utility functions have general convex properties. The shadow prices can then be used to estimate improvement in social welfare and identify the most contributing prosumer(s). Thus, when the budgets for implementing DR (such as campaign/advertisement, device upgrade and installation, or the construction of infrastructure supporting EV discharging or net-selling) are limited, we could allocate the resources to the most contributing prosumers. Several applications of GSAA are provided, including enlarging a prosumer's AC comfort zone size, allowing a prosumer's EV to discharge and allowing a prosumer to net sell. We also provide several numerical studies on net-selling and EV discharging, which confirm our theoretical predictions.

\ifCLASSOPTIONcaptionsoff
  \newpage
\fi



\bibliographystyle{IEEEtran}

\bibliography{Paper1ref}
%
%
%

%





\end{document}